\documentclass[a4paper,12pt,english]{article}
\usepackage[utf8]{inputenc}   
\usepackage{lmodern}
\usepackage[english]{babel}   
\usepackage[babel]{microtype}
\usepackage{amsfonts,amsmath,latexsym,amsthm,fixmath}
\usepackage{amssymb,ifthen}
\usepackage{authblk}
\usepackage{enumitem}
\usepackage{color,graphicx}
\usepackage{url}
\usepackage[
    pdftex,
    pdfstartview={XYZ 0 1000 1.0},
    bookmarks=false,
    colorlinks,
    breaklinks=true,
    citecolor=citecolor,
    filecolor=blue,
    linkcolor=linkcolor,
    urlcolor=urlcolor,
    pdfborder={0 0 0}
    ]{hyperref}
\definecolor{urlcolor}{rgb}{0, 0.5, 0}
\definecolor{citecolor}{rgb}{.5,0,.25}
\definecolor{linkcolor}{rgb}{0,0,1}
\usepackage{geometry}
\geometry{margin=1in,includefoot}
\usepackage{algorithm,algorithmic}

\newtheorem{theorem}{Theorem}
\newtheorem{lemma}[theorem]{Lemma}
\newtheorem{claim}[theorem]{Claim}
\newtheorem{corollary}[theorem]{Corollary}
\newtheorem{proposition}[theorem]{Proposition}
\newtheorem{remark}[theorem]{Remark}

\newcommand{\includesvg}[2][]{%
\def\tempa{#1}\def\tempb{}%
\ifx\tempa\tempb\else\let\svgwidth\tempa\fi
\input{\svgpath#2.pdf_tex}%
}

\graphicspath{{./figures/}}
\def\svgpath{figures/}

\newcommand{\mC}{\mathbb{C}}
\newcommand{\mZ}{\mathbb{Z}}

\newcommand{\mR}{\mathbb{R}}

\newcommand{\mQ}{\mathbb{Q}}

\def\dist{\operatorname{dist}}

\def\conv{\operatorname{conv}}
\def\Re{\operatorname{Re}}

% ----- NOTATIONS -----
\newcommand{\Norm}[1]{N_{#1}} % norm corresponding to a weighted graph G,w
\newcommand{\Ball}[1]{B_{#1}} % ball corresponding to a weighted graph G,w
\newcommand{\SimpleC}{\mathcal{SC}_{G}}
\newcommand{\TSimpleC}{\mathcal{TSC}_{G,w}}
\newcommand{\HoneT}{H_1(T;\mZ)}
\newcommand{\ie}{i.e.}
% ---------------------
\newcommand{\define}[1]{\emph{#1}}

\bibliographystyle{alpha}% the mandatory bibstyle

\title{Algorithms for Length Spectra of Combinatorial Tori} 

\author[1]{Vincent Delecroix}
\author[2]{Matthijs Ebbens}
\author[3]{\\Francis Lazarus \thanks{This author is partially supported by the French ANR projects MINMAX (ANR-19-CE40-0014) and the LabEx PERSYVAL-Lab (ANR-11-LABX-0025-01) funded by the French program Investissement d’avenir.}}
\author[1]{Ivan Yakovlev}

\affil[1]{Univ. Bordeaux, CNRS, Bordeaux INP, LaBRI, UMR 5800, F-33400 Talence, France}
\affil[2]{Institut Fourier, CNRS, Universit\'e Grenoble Alpes, France}
\affil[3]{G-SCOP/Institut Fourier, CNRS, Universit\'e Grenoble Alpes, France}

\begin{document}

\maketitle

\begin{abstract}
Consider a weighted, undirected graph cellularly embedded on a topological
surface. The function assigning to each free homotopy class of closed
curves the length of a shortest cycle within this homotopy class is called
the marked length spectrum. The (unmarked) length spectrum is obtained by
just listing the length values of the marked length spectrum in increasing
order.

In this paper, we describe algorithms for computing the (un)marked length
spectra of graphs embedded on the torus. More specifically, we preprocess a
weighted graph of complexity $n$ in time $O(n^2 \log \log n)$ so that,
given a cycle with $\ell$ edges representing a free homotopy class, the
length of a shortest homotopic cycle can be computed in $O(\ell+\log n)$
time. Moreover, given any positive integer $k$, the first $k$ values of its
unmarked length spectrum can be computed in time $O(k \log n)$.

Our algorithms are based on a correspondence between weighted graphs on
the torus and polyhedral norms. In particular, we give a weight independent
bound on the complexity of the unit ball of such norms. As an immediate
consequence we can decide if two embedded weighted graphs have the same
marked spectrum in polynomial time. We also consider the problem of comparing the unmarked spectra and provide a polynomial time algorithm in the unweighted case and a randomized polynomial time algorithm otherwise.
\end{abstract}

\section{Introduction}\label{sec:introduction}
Combinatorial surfaces are well-studied in computational topology and are usually represented as graphs cellularly embedded on a topological surface. Given a combinatorial surface $S$ with underlying graph $G$, many algorithms were proposed for computing the length of its shortest homotopically non-trivial closed walk~\cite{thomassen1990,erickson2004,k-csntc-06,cabello2007,cabello2012,e-cocb-12,cce-msspe-13}. Here, the length of a walk is the sum of the weights of its edges if the edges are weighted, or the number of edges if not. However, relatively little is known about how to compute the \emph{second shortest} non-trivial closed walk, the \emph{third shortest} and so on. More precisely, for every closed walk $c$ in $G$, we can compute the length of the shortest closed walk freely homotopic to $c$ on $S$. By definition, this length only depends on the free homotopy class of $c$. The ordered sequence of lengths of all free homotopy classes of closed walks is called the \emph{length spectrum} of $S$ with respect to its (weighted) graph $G$, while the mapping between free homotopy classes of curves and their lengths is called the \emph{marked length spectrum}. The marked length spectrum thus records for every length in the sequence from which free homotopy class it comes from. These notions are well studied in the
realm of hyperbolic or Riemannian surfaces~\cite{o-smlsc-90,b-gscrs-92,p-islsq-18}.
A striking result in that respect is that the marked length spectrum of a non-positively curved surface entirely determines the geometry of the surface~\cite{o-smlsc-90}.  In
 other words, one may learn the geometry of a surface by just looking at the length of its curves. However, the unmarked length spectrum does not determine the surface even in constant curvature~\cite{vigneras}.

 Analogously, Schrijver~\cite[Th. 1]{schrijver1992} proved that embedded graphs that are minor-minimal among graphs with the same marked length spectrum, which he calls \emph{kernels}, are determined by their marked length spectrum up to simple transformations, namely taking the dual graph and applying a sequence of so-called $\Delta Y$ transformations.

In a subsequent paper~\cite{schrijver1993} Schrijver restricts to unweighted graphs embedded on the torus and notices that the marked length spectrum extends to an integer norm in $\mR^2$, i.e., a norm taking integer values at integer vectors.
Moreover, its dual unit ball is a finite polygon whose vertices have integer coordinates (see also~\cite{s-ntivi-16,s-inofc-20}). This allows him to reconstruct for every integer norm a graph whose marked length spectrum is given by this norm.

The aim of this paper is threefold. We first extend the results of Schrijver~\cite{schrijver1993} to weighted graphs on the torus.
There are good reasons to focus on the torus. For instance, the marked length spectrum of a graph embedded on the torus being a norm is due to the equivalence between homotopy and homology, which is not true for higher genus surfaces. For weighted graphs, the marked length spectrum function still extends to a norm on $\mR^2$, that we denote by $\Norm{G,w}$, but not necessarily to an integer norm. However, we show that it is a polyhedral norm for any choice of weights. 
In other words, the unit ball $\Ball{G,w} := \{\alpha \in \mR^2\mid \Norm{G,w}(\alpha) \le 1\}$ is always a polygon. We also prove that the number of extremal points of this polygon  is bounded by a linear function of the number of vertices of $G$ that is independent of the weights $w$. 
\begin{theorem}
  \label{thm:weightedgraphspolyhedralnorm}
  Let $(G,w)$ be a weighted graph with $|V|$ vertices cellularly embedded on the torus. Then $\Norm{G,w}$ is a polyhedral norm. Moreover, its unit ball $\Ball{G,w}$ is a polygon with no more than $4|V|+5$ extremal points, and the ratio of the coordinates of each extremal point is rational.
\end{theorem}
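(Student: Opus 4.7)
The plan is to realize $\Norm{G,w}(\alpha)$ as the value of a parametric linear program. For each directed copy $\vec e$ of each edge of $G$, introduce a non-negative flow variable $x_{\vec e}$. Impose flow conservation at every vertex, and force the homology class $[x]\in \HoneT\otimes \mR$ of the resulting real $1$-cycle to equal $\alpha\in\mR^2$; concretely this is two linear equations $\langle \phi_i, x\rangle=\alpha_i$ for a $\mZ$-basis $\phi_1,\phi_2$ of $H^1(T;\mZ)$ represented by $\{-1,0,1\}$-valued cocycles, namely the indicator functions of edges crossing two simple loops of the dual graph $G^*$ that form a basis of $H_1(T;\mZ)$. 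Minimizing $\sum_{\vec e} w(\vec e)\, x_{\vec e}$ defines an LP whose constraint matrix is totally unimodular, so an integer optimum exists whenever $\alpha\in\mZ^2$, and this integer optimum is a shortest closed walk in homology class $\alpha$, i.e., the marked length spectrum value at $\alpha$.

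As a parametric LP in the right-hand side $\alpha$, the optimal value is convex, piecewise linear, and positively homogeneous of degree one. Flow reversal gives $\Norm{G,w}(-\alpha)=\Norm{G,w}(\alpha)$, and positive weights force $\Norm{G,w}(\alpha)>0$ for $\alpha\ne 0$, so $\Norm{G,w}$ is a polyhedral norm. The extremal points of $\Ball{G,w}$ lie in directions realized by basic optima of the LP, which have rational coordinates since the constraint matrix is integer; this yields the rationality of ratios of coordinates of each vertex of $\Ball{G,w}$, even though the scale $1/\Norm{G,w}(\alpha)$ at such a vertex may be irrational in the weights.

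For the bound $4|V|+5$, I would pass to the polar polygon $\Ball{G,w}^\circ$, whose vertex count equals that of $\Ball{G,w}$ since both are convex polygons in $\mR^2$. By LP duality, $\Ball{G,w}^\circ$ is the projection to the $\lambda$-plane of the polyhedron $\{(p,\lambda)\in\mR^V\times\mR^2 : dp(\vec e)+\tilde\lambda(\vec e)\le w(\vec e) \text{ for every directed edge } \vec e\}$, where $\tilde\lambda$ is the distinguished cocycle representative of $\lambda$. A vertex of $\Ball{G,w}^\circ$ corresponds to a rigid configuration of tight edges, which carries the combinatorial data of an optimal shortest-path structure on the $\mZ^2$-cover $\tilde G$ for the direction $\lambda$. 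I would argue that as $\lambda$ rotates along $\partial \Ball{G,w}^\circ$, this shortest-path structure changes through a sequence of elementary edge swaps, each chargeable to an edge or vertex of $G$, and that each extremal direction of $\Ball{G,w}$ is attained by a \emph{simple} cycle on the torus whose primitive homology class is one of $O(|V|)$ slopes realized by non-separating simple cycles of $G$.

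The hard part is precisely this last step: turning the $O(|V|)$ qualitative bound into the explicit, weight-independent count $4|V|+5$. The weight-independence is what makes the statement nontrivial: it must reflect that the combinatorial face structure of $\Ball{G,w}^\circ$, if not its metric shape, is controlled entirely by the embedding of $G$. I expect the proof to exploit the topological rigidity of simple non-separating cycles on the torus (e.g.\ a fundamental domain argument on $\tilde G$ to enumerate slopes of simple shortest cycles), together with a careful bookkeeping of the few degenerate configurations of tight edges responsible for the additive constant~$+5$.
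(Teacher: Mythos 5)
Your LP reformulation of $\Norm{G,w}$ is a legitimate and genuinely different route to the polyhedrality and rationality claims. The paper instead verifies the norm axioms directly (using the topological fact that non-primitive classes on the torus force self-intersections for absolute homogeneity) and then proves $\Ball{G,w}=\conv\{[c_i]/w(c_i)\}$ by a greedy decomposition of optimal $1$-cycles into simple cycles (Lemma~\ref{lem:normDefinitionSimpleCycles}); the parametric-LP viewpoint is a clean alternative for these two sub-claims. One caveat: the constraint matrix you write down, the vertex incidence matrix $\partial_1$ stacked with the two cocycle rows $\phi_1^T,\phi_2^T$, is not obviously totally unimodular --- the $\phi_i$ lie outside the row space of $\partial_1$, and appending $\{-1,0,1\}$-rows to a network matrix does not in general preserve TU. To make the integrality of LP optima airtight, reparameterize the feasible set as $x=x_0+\partial_2 y$ with $y\in\mR^F$ for some fixed integral $x_0$ with $[x_0]=\alpha$; after splitting signs, the constraint matrix $(I,-I,-\partial_2)$ is a network matrix (each edge lies on exactly two faces with opposite signs), hence TU. Your rationality argument is then sound: the linearity chambers of the parametric value function are cut out by the primal feasibility conditions $A_B^{-1}b(\alpha)\ge 0$, which involve only the integer data $A$ and $b$ and not the weights $c$, so the chamber walls --- hence the extremal directions of $\Ball{G,w}$ --- are rational even when the weights are not.

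You have, however, a genuine gap at the bound $4|V|+5$, which you acknowledge. Neither the polar/tight-edge picture nor the speculative fundamental-domain bookkeeping you sketch leads to the weight-independent count; your "elementary edge swap, chargeable to an edge or vertex" idea would at best give $O(n)$ with no control of the constant, and it would have to contend with the fact that a single edge or vertex can participate in many transitions. The paper's argument is shorter and purely combinatorial-topological. After showing (via Lemma~\ref{lem:normDefinitionSimpleCycles}) that every extremal point of $\Ball{G,w}$ sits on a ray through a class in $\SimpleC$, the set of homology classes represented by simple cycles of $G$, the key step (Lemma~\ref{lem:numberofsimplecycles}) is an intersection-number bound: two simple cycles in $G$ can meet in at most $|V|$ vertices, so $|\langle\alpha,\beta\rangle|\le|V|$ for all $\alpha,\beta\in\SimpleC$. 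Choosing $\alpha,\beta\in\SimpleC$ maximizing $|\langle\alpha,\beta\rangle|$ and comparing areas of parallelograms shows $\SimpleC$ lies inside the parallelogram $P$ with vertices $\pm\alpha\pm\beta$, of area $4|\langle\alpha,\beta\rangle|\le 4|V|$. Since $\alpha,\beta$ are primitive (being classes of simple closed curves on the torus), $P$ has exactly $8$ boundary lattice points, and Pick's theorem gives $|P\cap\mZ^2|\le 4|\langle\alpha,\beta\rangle|+5\le 4|V|+5$. This is the weight-independent input your proposal is missing; without reducing to simple cycles and controlling their pairwise intersection numbers, no amount of LP structure will produce a bound of the form $c\cdot|V|+O(1)$.
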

We also extend Schrijver's reconstruction of a toroidal graph from an integer norm~\cite{schrijver1993} to the weighted case for non-integer polyhedral norms. See Theorem~\ref{thm:normReconstruction} in Section~\ref{sec:polyhedralnorms}.

Our second goal is to provide algorithms to compute the unit ball $\Ball{G,w}$ and to compute the length spectrum. All our complexity estimates assume the standard RAM model of computation or the standard real-RAM model for non-integer weights supporting constant time arithmetic operations. We denote by $n$ the complexity of $G$, that is its total number of edges and vertices.
\begin{theorem}\label{th:unit-ball}
  The unit ball $\Ball{G,w}$ can be computed in $O(n^2\log\log n)$ time, or in $O(n^2)$ time if $G$ is unweighted.
\end{theorem}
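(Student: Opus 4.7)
My plan is to enumerate the at most $4|V|+5$ extremal vertices of $\Ball{G,w}$ guaranteed by Theorem~\ref{thm:weightedgraphspolyhedralnorm} via an incremental gift-wrapping scheme, spending $O(n\log\log n)$ time per vertex in the weighted case and $O(n)$ in the unweighted case. Since Theorem~\ref{thm:weightedgraphspolyhedralnorm} caps the number of extremal vertices at $O(|V|)=O(n)$, this accounts for the announced totals. The outer loop maintains a sorted-by-angle list of already-discovered extremal vertices, starts from two initial vertices realizing shortest cycles in the two coordinate homology classes, and repeatedly calls a subroutine that, given two consecutive list entries $p,q$, either certifies that the segment $[p,q]$ is an edge of $\Ball{G,w}$ or returns a new extremal vertex strictly between $p$ and $q$ in angular order.

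The subroutine is a minimum-ratio cycle computation in a twisted weighting of $G$. Concretely, fix a basis $\gamma_1,\gamma_2$ of $\HoneT$ and, for each oriented edge $e$, let $\mathbf{c}(e)\in\mZ^2$ record the signed intersections of $e$ with curves dual to $\gamma_1,\gamma_2$; then every closed walk $c$ has homology $[c]=\sum_e \mathbf{c}(e)$ and length $w(c)=\sum_e w(e)$. Let $v\in\mR^2$ be the outward normal to $[p,q]$, normalized so that $\langle v,p\rangle=\langle v,q\rangle=1$, and consider the modified weights $w_v(e):=w(e)-\langle v,\mathbf{c}(e)\rangle$, so that $w_v(c)=w(c)-\langle v,[c]\rangle$ for every closed walk. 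A new extremal vertex between $p$ and $q$ exists iff some closed walk $c$ satisfies $w_v(c)<0$, in which case a walk minimizing the ratio $w(c)/\langle v,[c]\rangle$ provides the homology class $[c]$ of such a vertex (namely $[c]/\Norm{G,w}([c])$).

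To solve the subroutine efficiently, I would reduce it to a single-source shortest-path computation on a bounded region of the universal cover of $G$: Theorem~\ref{thm:weightedgraphspolyhedralnorm} bounds the denominators (and hence the magnitudes) of the primitive homology classes realizing extremal vertices, so the cover can be truncated to a planar graph of size $O(n)$ without missing any relevant cycle. On such a truncated planar graph, Thorup-style undirected single-source shortest paths yield $O(n\log\log n)$ time, whereas in the unweighted case breadth-first search suffices and costs $O(n)$. Multiplying by the $O(n)$ outer iterations gives the two claimed bounds.

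The main obstacle, in my view, is the cover-region step: one must show that a region of size only $O(n)$ is sufficient to realize every extremal direction, and then design the subroutine so that each call — despite being formally a min-ratio problem — collapses to a single Dijkstra/BFS-type call rather than a parametric search. Both of these rely crucially on the quantitative polyhedral control provided by Theorem~\ref{thm:weightedgraphspolyhedralnorm}, and the combinatorial structure of the universal cover of a cellularly embedded torus graph. A secondary technical point is correctness of the gift-wrapping loop, i.e., verifying that no extremal vertex is missed between consecutive entries, which should follow directly from the one-dimensional convexity of the support function of $\Ball{G,w}$.
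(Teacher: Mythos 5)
Your high-level plan — gift-wrap the boundary of $\Ball{G,w}$ using $O(n)$ subroutine calls, each costing $O(n\log\log n)$ (weighted) or $O(n)$ (unweighted) — matches the paper's skeleton, and the claim that $O(n)$ outer iterations suffice is indeed justified by Theorem~\ref{thm:weightedgraphspolyhedralnorm}/Lemma~\ref{lem:numberofsimplecycles}. However, your subroutine is genuinely different from the paper's and, as stated, has two real gaps.

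First, the paper does \emph{not} bound the coordinates of extremal homology classes in the working basis; Theorem~\ref{thm:weightedgraphspolyhedralnorm} only bounds their \emph{number} and says their slopes are rational. Lemma~\ref{lem:numberofsimplecycles}'s example has classes $(1,k)$ with $k=\Theta(|V|)$, and in a skew situation nothing prevents a tight simple cycle from winding many times around the base curves. So ``truncate the universal cover to $O(n)$ size without missing any relevant cycle'' is unjustified. The paper sidesteps this entirely: instead of working in a fixed coordinate frame, it always cuts the torus along the \emph{current} pair of simple tight cycles $c_\alpha,c_\beta$ (which it maintains as a ``good pair,'' intersecting in a single connected arc). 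The resulting hexagonal domain and one or two of its copies form a convex region of size $O(n)$ in which a tight representative of the mediant $\alpha+\beta$ must have a lift; this is Lemmas~\ref{lem:addSimple}--\ref{lem:good}, and maintaining the good-pair invariant across iterations is the technical heart of the argument.

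Second, your subroutine is posed as a minimum-ratio cycle / negative-cycle detection with twisted weights $w_v(e)=w(e)-\langle v,\mathbf{c}(e)\rangle$, which can be negative, so Dijkstra/BFS does not apply directly, and min-ratio problems generically need parametric search rather than a single shortest-path call. You say the polyhedral control should ``collapse'' this to one Dijkstra, but you give no argument. The paper avoids the min-ratio formulation altogether: it does a Farey/Stern--Brocot dichotomy on directions. For the current sector $\angle(\alpha,\beta)$ it only ever needs the norm of the single mediant $\gamma=\alpha+\beta$, using Claim~\ref{clm:prune} (``$N(\gamma)=N(\alpha)+N(\beta)$ iff $\gamma/N(\gamma)$ lies on the segment'') as the pruning test. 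Computing $N(\gamma)$ for one concrete primitive class is a single shortest-cycle problem in an annulus (or a convex region), which is where Italiano et al.'s $O(n\log\log n)$ planar min-cut, or BFS/max-flow in the unweighted case, enter. So while both proofs have the same outer loop, the paper's replacement of the implicit support-function maximization by the mediant-and-prune dichotomy is exactly what makes each step a vanilla shortest-path call, and this is the piece your plan is missing.
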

After this preprocessing step, we can compute the length of a shortest closed walk freely homotopic to an input closed walk of $\ell$ edges in $O(\ell+\log n)$ time. It is a priori not obvious to sort the values of the length spectrum from their homotopy classes. However, by decomposing the unit ball into unimodular sectors, i.e., sectors generated by the columns of unimodular matrices, we are able to compute efficiently
the first $k$ values of the length spectrum.
\begin{theorem} \label{thm:lengthSpectrumComputation}
  Let $(G,w)$ be a weighted graph of complexity $n$ cellularly embedded on the torus and let $k$ be a positive integer. After $O(n^2\log\log n)$ preprocessing time, the first $k$ values of the length spectrum of $(G,w)$ can be computed in $O(k\log n)$ time. If $G$ is unweighted, the $\log\log n$ factor in the  preprocessing time can be omitted. 
\end{theorem}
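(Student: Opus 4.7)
The plan is to augment the preprocessing of Theorem~\ref{th:unit-ball} with a decomposition of $\mR^2$ into unimodular cones on which $N_{G,w}$ restricts to a linear functional, and then enumerate lattice points in ascending-norm order by an ordered merge across these cones. By Theorem~\ref{thm:weightedgraphspolyhedralnorm}, $B_{G,w}$ is a polygon with $O(n)$ extremal rays, all of rational slope, so its edges partition $\mR^2$ into $O(n)$ \emph{linearity cones} of $N_{G,w}$, read off directly from $B_{G,w}$ in $O(n)$ time. Each such rational cone can be refined, via the classical continued-fraction / Stern--Brocot procedure, into \emph{unimodular} sub-cones, \ie, cones generated by a basis $(u,v)$ of $\mZ^2$. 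On such a sub-cone the nonzero integer points are exactly $\{au+bv:(a,b)\in\mZ_{\geq 0}^2\setminus\{(0,0)\}\}$, and the norm is the linear function $\lambda(au+bv)=a\mu+b\nu$ with $\mu=N_{G,w}(u)$ and $\nu=N_{G,w}(v)$, both computed in $O(\log n)$ by locating the appropriate edge of $B_{G,w}$ with binary search. The sub-cones are generated lazily during the query so that at any moment only one sub-cone per linearity cone is active, keeping a total of $O(n)$ cones alive.

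To output the first $k$ spectrum values I maintain a global min-heap keyed on $\lambda$, storing the smallest unreported lattice value of every currently active sub-cone. A step of the query extracts the global minimum, advances the frontier of the source sub-cone by the standard ordered-merge rule for $\{a\mu+b\nu\}$ (pushing the two lattice neighbours of the extracted point with a deduplication convention), spawns the next Stern--Brocot sub-cone of the ambient linearity cone whenever the frontier crosses the current boundary, and reinserts the updated local minimum into the global heap. Since at any time $O(n)$ cones are alive and every step performs $O(1)$ heap operations, the total cost over $k$ extractions is $O(k\log n)$. The additional preprocessing beyond Theorem~\ref{th:unit-ball}, namely building the linearity cones and priming each Stern--Brocot traversal, is $O(n)$, so the preprocessing bound is dominated by Theorem~\ref{th:unit-ball}, keeping the unweighted improvement from $O(n^2\log\log n)$ down to $O(n^2)$.

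The main obstacle is the bookkeeping needed to guarantee both correctness and the claimed complexity. Correctness requires that every primitive ray be assigned to a unique unimodular cone, handled by a half-open orientation convention on cone boundaries, and that non-primitive lattice points be reported in the cone of their primitive direction, enforced automatically by the $au+bv$ parametrisation with $a,b\ge 0$. Complexity requires bounding the simultaneously-live data: since each extraction adds $O(1)$ frontier candidates and retires $O(1)$ Stern--Brocot vertices, the per-cone local frontier plus the global heap together stay of size $O(n)$ (entries with larger $\lambda$ are kept only in the per-cone frontier, not in the global heap), so the per-operation cost is $O(\log n)$. A subtler point is certifying that the Stern--Brocot traversal of a linearity cone, driven purely by the order in which lattice values are extracted, proceeds monotonically and amortises $O(1)$ new sub-cones per extraction; this follows because in a planar cone the order of lattice points by $\lambda$ refines the Stern--Brocot order of their primitive directions, so the sub-cones are activated once and never revisited.
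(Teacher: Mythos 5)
Your high-level strategy matches the paper's: decompose $H_1(T;\mZ)$ into unimodular cones on each of which $\Norm{G,w}$ is linear, then report lattice points in increasing norm via an ordered merge across the cones, with per-cone frontiers and a global priority structure over $O(n)$ cones to get $O(\log n)$ per value. Two substantive differences and one genuine gap are worth flagging.

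First, the refinement step is redundant. You discard the intermediate elements of $H$ produced by Algorithm~\ref{alg:algo1} and keep only the extremal points of $B_{G,w}$, then reconstruct a unimodular decomposition by Stern--Brocot refinement. But Lemma~\ref{lem:unimodularPartition} already guarantees that \emph{consecutive} elements of $H$, as output by Algorithm~\ref{alg:algo1}, satisfy $\langle [c_i],[c_{i+1}]\rangle=1$: the preprocessing hands you a partition of $H_1(T;\mZ)$ into $O(n)$ half-open unimodular cones for free, with no refinement needed.

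Second, and more seriously, the lazy activation is not correct as stated. You claim that only one Stern--Brocot sub-cone per linearity cone need be alive at a time, justified by the assertion that the order of lattice points by $\lambda$ refines the Stern--Brocot order of their primitive directions. That assertion fails precisely in the non-unimodular linearity cones you introduced the refinement to handle. If $\angle(u,v)$ is a linearity cone with $\langle u,v\rangle>1$ and $N$ is linear on it, then the interior Klein-sail vertices have $N$-values strictly \emph{smaller} than $\max(N(u),N(v))$ and can even be smaller than both: for instance with $u=(3,1)$, $v=(1,3)$ and $N$ linear with $N(u)=N(v)=1$, one gets $N(1,1)=1/2$. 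The local minimum of a linearity cone thus lives in the \emph{middle} of the Stern--Brocot path, not at its ends. Starting from a single entry sub-cone adjacent to $u$ (or $v$) and waiting for its frontier to ``cross the boundary'' before spawning the next sub-cone will report some $\lambda$-values out of order, because the true global minimum may sit in an as-yet-unspawned sub-cone. The fix is straightforward and costs nothing asymptotically: generate all $O(n)$ unimodular sub-cones up front (their total number is $O(n)$ by the same area argument underlying Lemma~\ref{lem:numberofsimplecycles}) and insert each one into the priority structure initialized at its local systole, which is exactly what the paper does with $H$ and the balanced search tree $\cal S$. With that change your preprocessing and per-query costs match the claimed bounds; without it the algorithm is incorrect.
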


Recently, two of the co-authors~\cite{ebbens2022} used shortest path computations in the universal cover of the torus to determine the length spectrum. In this way, they were able to compute the first $k$ values of the length spectrum in time $O(kn^2\log(kn))$. This is to be compared to $O(k\log n)$ in the present paper.

 Finally, as a consequence of Theorem~\ref{th:unit-ball}, we provide algorithms to check whether two weighted graphs have the same marked or unmarked length spectrum. In the unweighted case it takes the following simple form.
 \begin{theorem}
 \label{thm:decidingEqualitySpectrumUnweighted}
 The equality of marked and unmarked spectra of two unweighted graphs $G$ and $G'$ embedded on tori can be tested in time $O(n^2)$ and $O\left(n^3\right)$, respectively.
\end{theorem}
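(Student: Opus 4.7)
The plan is to reduce both equality tests to geometric comparisons of the unit balls $B_G$ and $B_{G'}$, exploiting Theorem~\ref{th:unit-ball} (which gives them in $O(n^2)$ time for unweighted graphs) and the $O(|V|)$ bound on extremal points from Theorem~\ref{thm:weightedgraphspolyhedralnorm}. In the unweighted case $B_G$ is moreover an integer polygon with $O(n)$ vertices after Schrijver, so the balls admit a completely combinatorial description.

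For the marked spectrum, the marked length function is the restriction of the polyhedral norm $\Norm{G,w}$ to $\HoneT\setminus\{0\}$, and a polyhedral norm on $\mR^2$ is recovered from its values on any rank-two lattice. Hence $G$ and $G'$ share the same marked spectrum if and only if $B_G = B_{G'}$ as subsets of $\mR^2$. After the $O(n^2)$ preprocessing, the comparison reduces to matching two cyclically ordered lists of $O(n)$ extremal points in $O(n)$ time, for the claimed $O(n^2)$ total.

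For the unmarked spectrum, the plan is to first establish the characterization that $G$ and $G'$ have the same unmarked spectrum if and only if there exists a unimodular matrix $M\in\mathrm{GL}_2(\mZ)$ with $M\cdot B_G = B_{G'}$. The direction $(\Leftarrow)$ is immediate, since any such $M$ is a length-preserving bijection of $\mZ^2$. Granting the characterization, the algorithm is straightforward: compute the two balls in $O(n^2)$, fix an edge $(v_1,v_2)$ of $B_G$, and for each ordered pair of extremal points $(v_1',v_2')$ of $B_{G'}$---of which there are $O(n^2)$---form the unique linear $M$ with $M v_i = v_i'$, discard it unless it has integer entries and $\det M = \pm 1$, and verify $MB_G = B_{G'}$ by matching cyclic lists of vertices in $O(n)$ time. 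The loop contributes $O(n^3)$, matching the stated bound.

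The main obstacle is the converse direction in the unmarked characterization: showing that the multiset of values of an integer polyhedral norm on $\mZ^2$ determines its unit ball up to a unimodular change of basis. The intended route is a rigidity argument in which the area of $B_G$ is recovered from the leading asymptotics of $\#\{v\in\mZ^2 \mid \Norm{G}(v)\le r\}$, while the small values of the spectrum pick out primitive lattice vectors that must be extremal points of $B_G$, allowing one to reconstruct its cyclic sequence of edge directions and lengths up to $\mathrm{GL}_2(\mZ)$. Everything downstream---unit-ball computation, cyclic list comparison, and candidate matrix enumeration---is then a direct application of the preceding theorems together with standard $2\times 2$ lattice-matrix manipulations.
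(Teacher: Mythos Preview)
Your treatment of the marked case is close to the paper's, but note a definitional mismatch: in the paper, ``same marked spectrum'' means equality \emph{after some homeomorphism} $T\to T'$, i.e.\ $\mathrm{GL}_2(\mZ)$-equivalence of the unit balls, not literal equality $B_G=B_{G'}$ as subsets of $\mR^2$. With that correction the argument becomes exactly Theorem~\ref{thm:decidingEqualityMarkedSpectrum} specialized to unweighted graphs, and the $O(n^2)$ bound follows since one only needs to try the $O(n)$ pairs of \emph{consecutive} extremal points of $B_{G'}$.

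For the unmarked case, the proposed characterization is false, so the whole plan collapses. The direction you flag as the ``main obstacle'' --- that equal unmarked spectra force $\mathrm{GL}_2(\mZ)$-equivalent unit balls --- is not merely hard to prove, it is refuted by the paper's own example in Figures~\ref{fig:isospectralGraphs} and~\ref{fig:markedVSUnmarked}: two integral norms on $\mZ^2$ with identical unmarked length spectra whose unit balls are a hexagon and an octagon respectively. Since any linear isomorphism preserves the number of extremal points, no $M\in\mathrm{GL}_2(\mZ)$ can send one ball to the other. Your intended rigidity argument (recovering the ball from asymptotic lattice-point counts and short primitive vectors) therefore cannot succeed; at best such data recovers the Ehrhart series, which these two distinct polygons share.

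The paper's route is completely different. It encodes the unmarked spectrum as the Dirichlet series $f_G(s)=\sum_{\lambda\in\Lambda_G}e^{-s\lambda}$, uses the unimodular cone decomposition produced by Algorithm~\ref{alg:algo1} to write $f_G$ in the short rational form~\eqref{eq:EhrhartShortForm} with $O(n)$ summands, and reduces equality of spectra to a polynomial identity. In the unweighted case this identity is univariate (the parameter $r$ in Theorem~\ref{thm:decidingEqualityUnmarkedSpectrum} equals $1$, with $\|w\|_o=|E|=O(n)$), and can be checked deterministically by expansion in $O(n^2\cdot n)=O(n^3)$ time.
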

 Our algorithm for the marked length spectrum is also polynomial in the weighted case (Theorem~\ref{thm:decidingEqualityMarkedSpectrum}). However, to compute the unmarked length spectrum we reduce the equality of length spectra to polynomial identity testing (PIT), which is known to lie in the co-RP complexity class of problems whose negation admit a randomized polynomial-time algorithm~\cite{schwartz}. See Theorem~\ref{thm:decidingEqualityUnmarkedSpectrum}.
 It becomes deterministic polynomial in the unweighted case as stated in Theorem~\ref{thm:decidingEqualitySpectrumUnweighted} above.
 
 While the equality of marked length spectra implies some kind of structural equivalence between kernels~\cite{schrijver1992}, we provide an  example of isospectral toroidal graphs whose associated unit balls are not related by any linear transformation. In particular, they cannot have the same marked length spectrum even after applying a homeomorphism on the torus.

\paragraph*{Organization of the paper}
We start by discussing some preliminaries in Section~\ref{sec:preliminaries}. We next prove Theorem~\ref{thm:weightedgraphspolyhedralnorm} in Section~\ref{sec:polyhedralnorms}. Theorem~\ref{th:unit-ball} is the object of Sections~\ref{sec:short-basis} and~\ref{sec:computingUnitBall}, while Theorem~\ref{thm:lengthSpectrumComputation} is proved in Section~\ref{sec:lengthSpectrumComputation}. The equality of length spectra is finally discussed in Section~\ref{sec:lengthSpectraEquality}.

\section{Preliminaries}\label{sec:preliminaries}
Let $G=(V,E)$ be an undirected graph with vertex set $V$ and edge set $E$. We allow $G$ to have loop edges and multiple edges.  We denote by $n:= |V|+|E|$ the complexity of $G$. A \define{weight function} for $G$ is a map $w : E\to \mR_+$. The positive value $w(e)$ is  the \define{weight} (or \define{length}) of the edge $e\in E$. We write $(G,w)$ for a graph $G$ with  weight function $w$. A \define{walk} is a finite alternating sequence of vertices and oriented edges, starting and ending with a vertex, such that each vertex is the target (source) of the preceding (succeeding) edge in the sequence. We also use \define{path} as a synonym for walk. The \define{length} $w(c)$ of a walk $c$ is the sum of the weights of its edges, counted with multiplicity. A walk is \define{closed} when its first and last vertices coincide. This vertex is the \define{basepoint} of the closed walk. A closed walk without repeated vertices is also called a \define{simple cycle}. 

Throughout this paper, we will use $S$ to denote a topological surface and $T$ to denote the topological orientable surface of genus 1, i.e., a torus. In this paper we assume that $G$ is \define{cellularly embedded} on $S$, which means that the complement $S\setminus G$ is a collection of open disks. This embedding can be represented using one of the standard representations, e.g., the incidence graph of flags~\cite{eppstein2003} or rotation systems~\cite{mohar2001}. A surface together with a cellular embedding of a weighted graph is called a \define{combinatorial surface}.

We now recall some definitions from the topology of surfaces.
We refer the reader to Stillwell~\cite{stillwell1993} for basic notions in algebraic topology.
\paragraph*{Homotopy}
Two walks of $G$ are said \define{homotopic} if they are homotopic as curves in $S$, i.e., one can be continuously deformed into the other on $S$ while keeping the endpoints fixed. Similarly, two closed walks are \define{freely homotopic} if they are so as curves in $S$. Here, we do not require the basepoint to stay fixed during the homotopy. Closed walks (freely) homotopic to a walk reduced to a vertex are said \define{trivial}.  Homotopy defines an equivalence relation between walks. The set of homotopy classes of closed walks with fixed basepoint $v$ defines a group under concatenation. It is called the \define{fundamental group} of $S$ and denoted by $\pi_1(S,v)$.

The fundamental group of the torus is Abelian and isomorphic to $\mZ^2$. See e.g.~\cite{stillwell1993}. $\pi_1(T,v)$ is thus in bijection with its set of conjugacy classes, hence with the set of free homotopy classes.

A closed walk is  \define{tight} if it is shortest in its free homotopy class. Note that a homotopy class may contain more than one tight closed walk. Let $\cal C$ denotes the set of free homotopy classes of $S$. The map $\mathcal{C}\to \mR_+$ that associates to every free homotopy class the length of a tight closed walk in the class is the \define{marked length spectrum} of $S$ with respect to $(G,w)$. The \define{unmarked length spectrum} is  the list containing in increasing order the lengths of the non-trivial free homotopy classes of $G$, counted  with multiplicity: if two homotopy classes have the same length, then this length will appear twice in the list. Remark that the first element of the length spectrum is commonly referred to as the \define{systole}.

\paragraph*{Homology} The fundamental group is not the only interesting group that one may associate to a surface. Homology groups have a slightly more abstract definition but are actually simpler to compute and to deal with. Let $F$ be the set of faces of the cellular embedding of $G$ in $S$. We also call a face, an edge or a vertex, a $k$-cell for $k=2,1,0$, respectively. The group of \define{2-chains}, $C_2$, is the group of formal linear combinations of faces with integer coefficients with the obvious addition as group operation. A typical element of $C_2$ has the form $\Sigma_{f\in F}n_ff$ with $n_f\in \mZ$.  Likewise, the group $C_1$ of 1-chains  and the group $C_0$ of 0-chains  are the groups of formal linear combinations of edges and vertices, respectively. Cells are assumed to be oriented, and a cell multiplied by $-1$ represents the same cell with opposite orientation.

For $k=1,2$, the boundary operator $\partial_k: C_k\to C_{k-1}$ is the linear extension of the map that sends a $k$-cell to the formal sum of its boundary facets, where the coefficient of a facet in the sum
is 1 if its orientation is induced by the orientation of the $k$-cell and $-1$ otherwise. 
The kernel of $\partial_k$ is a subgroup of $C_k$ denoted by $Z_k$. Its elements are called \define{$k$-cycles}, not to be confused with cycles in the graph theoretical sense. The image of $\partial_k$ is the subgroup  of   \define{$k$-boundaries} of $C_{k-1}$ and denoted by $B_{k-1}$. 
The \define{first homology group} of $S$ with respect to the coefficients $\mZ$ is 
the group $H_1(S;\mZ):= \ker\partial_1/\operatorname{Im} \partial_2$ of $1$-cycles modulo the group of $2$-boundaries. From homology theory, we know that $H_1(S;\mZ)$ does not depend on the specific cell decomposition induced by the cellular embedding of $G$. We can similarly define the first homology group with real coefficients $H_1(S;\mR)$. Since the 1-chains only depend on the graph $G$, we also write $Z_1(G;\mZ)$ for the group of $1$-cycles. The Hurewicz theorem states that the map $\pi_1(S,v)\to H_1(S;\mZ)$ that sends (the homotopy class of ) a closed walk to the (homology class of the) formal sum of its oriented edges is onto with kernel the commutator subgroup of $\pi_1(S,v)$. In the case of the torus, $\pi_1(T,v)$ is commutative, so that the above map is actually an isomorphism. From now on we will identify homotopy and first homology classes on the torus. We will denote by the same letter a closed walk on $G$ and the corresponding 1-cycle in $Z_1(G;\mZ)$ equal to the sum of its constituent oriented edges. The homotopy or homology class of a closed walk or 1-cycle $c$ will be indifferently denoted by $[c]$.

\paragraph*{Intersection numbers} Given two closed oriented curves $c,d$ on $S$ (endowed with an orientation) with transverse intersections, we may assign a sign to each intersection according to whether the tangents of $c$ and $d$ at the intersection form a positively oriented basis. The sum of the signs over all intersections is called the \define{algebraic intersection number}. It is a classical result that this number only depends on the homology classes $[c]$ and $[d]$ and that it defines an antisymmetric, nondegenerate bilinear form on $H_1(S;\mZ)$, denoted by the \define{pairing} $\langle [c],[d]\rangle$. Of course the total number of intersections of $c$ and $d$ is at least $|\langle [c],[d]\rangle|$.

\paragraph*{The universal cover of the torus} We can form a torus by identifying the opposite sides of a square. Equivalently, we can see a torus as the quotient space of the plane $\mR^2$ by the action of the group of translations generated by $(1,0)$ and $(0,1)$, which we identify with the lattice $\mZ^2$. Hence, we can identify $T$ with $\mR^2/\mZ^2$ and we have a quotient map $q: \mR^2\to T$. The plane $\mR^2$, with the map $q$, is called the \define{universal cover} of $T$. Given a curve $c$ with source point $v$ on $T$, and a point $\tilde{v}\in q^{-1}(v)$, there is a unique curve $\tilde{c}$ in the plane with source $\tilde{v}$ that \textit{projects} to $c$, i.e., such that $q(\tilde{c})=c$. The curve $\tilde{c}$ is called a \define{lift} of $c$. If $c$ is a closed curve, then the vector from the source to the target of $\tilde{c}$ has integer coordinates and only depends on $[c]$. Hence, each homotopy class can be identified with a lattice translation. Such translations are called \define{covering transformations} (or translations). It can be proved that a curve is freely homotopic to a simple curve if and only if the coordinates of the corresponding covering translation are coprime~\cite[Sec. 6.2.2]{stillwell1993}. By the identification between $\mZ^2$, $\pi_1(T,v)$ and $H_1(T;\mZ)$, any pair $(\alpha,\beta)$ of homology classes that generates $H_1(T;\mZ)$ must correspond to an invertible integer transformation, hence to a unimodular matrix. Equivalently, $(\alpha,\beta)$ has algebraic intersection number equals to $\pm{1}$. It is a \define{positively oriented basis} when $\langle \alpha,\beta \rangle = 1$.

\paragraph*{Integer and intersection norms}
Let $N: \mZ^d\to \mR_{\geq 0}$ satisfy the norm axioms: 
\begin{itemize}
\item $N(\alpha+\beta)\leq N(\alpha)+N(\beta)$ (subadditivity)
\item $N(k\alpha) = |k|N(\alpha)$ (absolute homogeneity) 
  \item $N(\alpha)=0 \implies \alpha=0$ (separation)
\end{itemize}
Then $N$ extends to the rational space $\mQ^d$ using homogeneity, and can be extended to $\mR^d$ so that it is continuous. It can be shown that this indeed provides a well-defined norm over $\mR^d$~\cite{t-nh3m-86}. Such a function $N$, or its real extension, is called an \define{integer norm} if $N(\mZ^d) \subseteq \mZ_{\geq 0}$. It can be proved that integer norms are \define{polyhedral}, i.e. their unit ball is a centrally symmetric polytope, and that their dual unit ball is a centrally symmetric polytope with integer vertices~\cite{t-nh3m-86,schrijver1993,s-ntivi-16}. See also~\cite[Sec. 6.0.4]{c-dslam-20}. 

Integer norms naturally arise as length functions defined over homology classes of curves on surfaces. There are several ways to define curves and their lengths with respect to a graph $G$ embedded on a surface $S$. One can consider continuous curves on $S$ and define their length as the number of crossings with $G$. Schrijver~\cite{schrijver1993} applies this framework when $S$ is a torus and shows that this indeed defines a norm. He also considers a framework where the curves are in general position with respect to $G$, thus avoiding its vertices, and where all the vertices of $G$ are required to have degree 4.
In~\cite{schrijver1992} Schrijver shows that the first framework
reduces to the second by considering the medial graph of $G$. In turn, the second framework reduces to our framework by duality, in the special case where the faces are quadrilaterals and the edges are unweighted.
The second framework is also used by Kane~\cite{s-inofc-20}, where the norm is  referred to as an \define{intersection norm}. It is easily seen that the intersection norm for unweighted 4-regular graphs is equal to 
the marked length spectrum on the dual graph of $G$. In~\cite{s-inofc-20}, examples are given of centrally symmetric polytopes with integer vertices which are not dual balls of any intersection norm. This is in contrast with the special case where $G$ is embedded on the torus, for which Schrijver~\cite[Th. 4]{schrijver1993} shows that every integer norm on $\mR^2$ is (half) the intersection norm of an unweighted 4-regular graph.

\section{Length spectrum and polyhedral norms on homology}\label{sec:polyhedralnorms}
In this section, given a weighted graph $(G,w)$ embedded on a torus $T$, we introduce a norm on the first homology group of the torus that will be used throughout the article. A correspondence between graphs on the torus and polyhedral norms has been known for some time~\cite{schrijver1993}. But, as far as we know, it has been studied only in the unweighted case and furthermore never analyzed from a computational point of view.

For $\alpha \in H_1(T; \mZ)$ let
\begin{equation}
\label{eq:normDefinition}
\Norm{G,w}(\alpha) :=
\inf \left\{ \sum_{e \in E(G)} |x_e| w(e) : \sum_{e \in E(G)} x_e e \in Z_1(G; \mZ) \text{ and } [\sum_{e \in E(G)} x_e e ] = \alpha \right\}.
\end{equation}
Note that the infimum in (\ref{eq:normDefinition}) is attained, because when one of the $x_e$ tends to infinity, $\sum_{e \in E(G)} |x_e| w(e)$ tends to infinity as well, and so our minimization problem can actually be constrained to a finite set.

In the proof of Theorem \ref{thm:weightedgraphspolyhedralnorm} below we show that $\Norm{G,w}$ thus defined satisfies the norm axioms. Hence, as explained in the subsection ``Integer and intersection norms'' of Section \ref{sec:preliminaries}, $\Norm{G,w}$ extends to a norm on $H_1(T;\mR)$ (because $H_1(T;\mZ)$ is naturally a lattice in $H_1(T;\mR$)). 

But first let us show that $\Norm{G,w}$ as defined in (\ref{eq:normDefinition}) is indeed the marked length spectrum of $T$ with respect to $(G,w)$.

\begin{lemma}
\label{lem:normDefinitionSimpleCycles}
For every $\alpha \in H_1(T;\mZ)$ we have
    \begin{equation}
	    \label{eq:normDefinitionSimpleCycles}
	    \Norm{G,w}(\alpha) = \inf \left\{ \sum_{i \in I} x_i \cdot w(c_i) : [\sum_{i \in I} x_i \cdot c_i] = \alpha \text{ and } x_i \in \mZ_{\geq 0} \text{ for } i \in I \right\},
    \end{equation}
where $\{c_i\}_{i \in I}$ is the (finite) set of all simple cycles in $G$. The infimum in (\ref{eq:normDefinitionSimpleCycles}) is attained.
Furthermore, for every $\alpha \in H_1(T;\mZ)$, $\Norm{G,w}(\alpha)$ is the length of a shortest closed walk $c$ in $G$ with $[c]=\alpha$. In other words, $\Norm{G,w}$ is the marked length spectrum of $T$ with respect to $(G,w)$.
\end{lemma}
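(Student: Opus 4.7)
Denote by $A$ the infimum in (\ref{eq:normDefinition}), by $B$ the infimum in (\ref{eq:normDefinitionSimpleCycles}), and by $C$ the length of a shortest closed walk in $G$ representing $\alpha$. The plan is to show $A = B = C$ and that both infimums are attained.

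For $A = B$, I would first derive $A \leq B$ by unfolding any non-negative simple-cycle combination $\sum_i x_i c_i$ into a 1-chain $\sum_e y_e e$: since $|y_e| \leq \sum_{i : e \in c_i} x_i$ by the triangle inequality for signed sums, we get $\sum_e |y_e| w(e) \leq \sum_i x_i w(c_i)$, and the chain is a 1-cycle of homology $\alpha$. Conversely, for $B \leq A$, start with any 1-cycle $z = \sum_e x_e e$ of homology $\alpha$ and reorient each edge so that all coefficients become non-negative; the resulting directed multigraph with $|x_e|$ copies of each edge is Eulerian (since $z \in Z_1(G;\mZ)$, in- and out-degrees balance at every vertex) and hence admits a decomposition into a disjoint union of directed simple cycles. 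This produces a non-negative simple-cycle decomposition $z = \sum_i y_i c_i$ with the same total weight $\sum_e |x_e| w(e)$. Attainment of both infimums follows from the finite-set remark after (\ref{eq:normDefinition}).

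The inequality $C \geq A$ is immediate: any closed walk $c$ with $[c] = \alpha$ yields a 1-cycle whose edge-coefficients are the signed traversal counts, so $w(c) \geq \sum_e |x_e| w(e) \geq A$. The main obstacle is $C \leq A$, i.e.\ realizing the norm by a single connected walk. My plan is to take an optimal $z$, reorient it to form a directed Eulerian multigraph $G_z$, and examine its connected components $H_1,\dots,H_k$. If $G_z$ is connected, an Eulerian closed walk of $G_z$ has length $A$ and homology $\alpha$. Otherwise, optimality forbids $[H_i] = 0$ (else removing $H_i$ would strictly decrease $w(z)$), so each $[H_i] = \alpha_i$ is nonzero.

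The key topological step is that distinct components are vertex-disjoint in the embedded graph, so simple cycles inside $H_i$ and $H_j$ are disjoint curves on $T$, giving $\langle \alpha_i, \alpha_j\rangle = 0$. Because the intersection pairing on $H_1(T;\mZ) \cong \mZ^2$ is a non-degenerate antisymmetric form, pairwise orthogonality forces all $\alpha_i$ to be integer multiples $k_i \alpha_0$ of a common primitive vector $\alpha_0$, with $k_i \neq 0$ and $\sum_i k_i \alpha_0 = \alpha$; homogeneity then gives $w(H_i) = |k_i|\,\Norm{G,w}(\alpha_0)$. To finish, I would apply the same analysis to an optimal 1-cycle $z_0$ for the primitive class $\alpha_0$: if $G_{z_0}$ were disconnected with component homologies $k_j' \alpha_0$, primitivity of $\alpha_0$ would force $\sum_j k_j' = \pm 1$ while the weight identity $\sum_j |k_j'|\,\Norm{G,w}(\alpha_0) = \Norm{G,w}(\alpha_0)$ yields $\sum_j |k_j'| = 1$, forcing exactly one $k_j' = \pm 1$ and the rest zero, contradicting nonzeroness. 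Hence $G_{z_0}$ is connected, yielding a closed walk $\gamma_0$ with $w(\gamma_0) = \Norm{G,w}(\alpha_0)$ and $[\gamma_0] = \alpha_0$. Replacing each $H_i$ by $|k_i|$ copies of $\gamma_0$ (oriented by $\operatorname{sgn}(k_i)$) then produces a 1-cycle of weight $A$ and homology $\alpha$ whose multigraph is connected (all copies share the vertex set of $\gamma_0$), reducing to the connected case and concluding $C \leq A$.
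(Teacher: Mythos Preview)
Your arguments for $A=B$ and for $C\ge A$ are correct and essentially match the paper's. The gap is in your proof of $C\le A$: you invoke ``homogeneity'' to get $w(H_i)=|k_i|\,\Norm{G,w}(\alpha_0)$ and then $\sum_j|k_j'|\,\Norm{G,w}(\alpha_0)=\Norm{G,w}(\alpha_0)$. But in this paper the absolute homogeneity of $\Norm{G,w}$ is \emph{deduced from} the closed-walk characterisation you are trying to prove (see the proof of Theorem~\ref{thm:weightedgraphspolyhedralnorm}, which uses Lemma~\ref{lem:normDefinitionSimpleCycles} to identify $\Norm{G,w}$ with the marked length spectrum and only then proves $\Norm{G,w}(k\alpha)=k\,\Norm{G,w}(\alpha)$ via a self-intersection argument). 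So your argument is circular as written. Concretely, from the definition you only have $\Norm{G,w}(k\alpha_0)\le |k|\,\Norm{G,w}(\alpha_0)$, which gives $\sum_i|k_i|\,\Norm{G,w}(\alpha_0)\ge\sum_i\Norm{G,w}(k_i\alpha_0)=A$; hence replacing the $H_i$ by $|k_i|$ copies of $\gamma_0$ could a priori \emph{increase} the weight, and in the primitive case you only get $\sum_j|k_j'|\ge 1$, not $=1$.

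The paper sidesteps this entirely by working with an optimal simple-cycle decomposition $\sum_i x_i c_i$ rather than with connected components of the Eulerian multigraph. After discarding null-homologous $c_i$'s and consolidating cycles with equal homology, primitivity of the class of a simple cycle forces any two remaining $[c_i]$ to be non-collinear; on the torus this means $\langle[c_i],[c_j]\rangle\ne 0$, so the actual curves $c_i,c_j$ share a vertex and all of them can be concatenated into a single closed walk of length $A$. No appeal to homogeneity is needed. If you want to rescue your component-based approach, you would have to supply an independent proof that $\Norm{G,w}(k\alpha_0)\ge \Norm{G,w}(\alpha_0)$ for $|k|\ge 1$ (e.g.\ via the self-intersection argument for non-primitive closed walks) before invoking it.
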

\begin{proof}
	Denote by $\Norm{G,w}'(\alpha)$ the right-hand side of (\ref{eq:normDefinitionSimpleCycles}). Let $\{x_i\}_{i \in I}$ be such that $x_i \in \mZ_{\geq 0}$ and $\left[\sum_{i \in I} x_i \cdot c_i\right] = \alpha$. If $\sum_{i \in I} x_i \cdot c_i = \sum_{e \in E(G)} x_e e$, then for every $e \in E(G)$ we have $|x_e| = |\sum_{i \in I, e \in c_i} \pm x_i| \le \sum_{i \in I, e \in c_i} x_i$ and so $\sum_{e \in E(G)} |x_e| w(e) \le \sum_{i \in I} x_i w(c_i)$. Hence $\Norm{G,w}(\alpha) \le \Norm{G,w}'(\alpha)$.
	
	We will now prove the opposite inequality. Fix $c=\sum_{e \in E(G)} x_e e \in Z_1(G;\mZ)$. We claim that there exist $\{x_i\}_{i \in I}$ such that $x_i \in \mZ_{\geq 0}$, $\sum_{i \in I} x_i \cdot c_i = c$ and $\sum_{i \in I} x_i w(c_i) = \sum_{e \in E(G)} |x_e| w(e)$. To prove the claim, we can suppose that $x_e \ge 0$ for all $e \in E(G)$ (changing the orientation of edges if necessary). It follows from the definition of $Z_1(G;\mZ)$ that for every vertex of $G$ the sum of the coefficients in $c$ of its incoming edges is equal to the sum of the coefficients in $c$ of its outgoing edges. Pick $e_1 \in E(G)$ with the smallest $x_{e_1} \neq 0$. Suppose we have already picked distinct edges $e_1, \ldots, e_i$. While the target of $e_i$ is not equal to the source of $e_1$, we pick a next edge $e_{i+1}$ arbitrarily among the edges emanating from the target of $e_i$, with non-zero coefficients in $c$ and that have not been picked yet. The property of $Z_1(G;\mZ)$ stated above ensures that at least one such edge exists. Suppose that this procedure terminates at stage $k$. Then the edges $e_1, \ldots, e_k$ form a simple cycle $d_1$ in $G$. Replace now $c$ by $c - x_{e_1} d_1$. This new element of $Z_1(G;\mZ)$ still has non-negative coefficients, and the number of positive coefficients decreased by at least 1. Repeating this procedure, we get the desired representation $c = x_{e_1} d_1 + \ldots$.
	
	The claim above applies in particular to a minimal representation $[\sum_{e \in E(G)} x_e e] = \alpha$ such that $\Norm{G,w}(\alpha)= \sum_{e \in E(G)} |x_e| w(e)$. Hence, $\Norm{G,w}(\alpha) \ge \Norm{G,w}'(\alpha)$, and we get (\ref{eq:normDefinitionSimpleCycles}). This also gives a minimal representation $[\sum_{i \in I} x_i c_i] = \alpha$ such that $\Norm{G,w}(\alpha)= \sum_{i \in I} x_i w(c_i)$, and so the infimum in (\ref{eq:normDefinitionSimpleCycles}) is indeed attained.
	
	Finally, suppose $\sum_{i \in I} x_i c_i$ representing $\alpha$ is such that $\Norm{G,w}(\alpha) = \sum_{i \in I} x_i w(c_i)$. We can assume that all the homology classes $[c_i]$ are different (otherwise replace the simple cycles $c_i$ corresponding to the same class by any one of them, not changing the total weight). Moreover, $[c_i]$ are pairwise non-collinear. Indeed, if $[c_i] = -[c_j]$, we could replace $x_i c_i + x_j c_j$ either by $(x_i-x_j) c_i$ or $(x_j-x_i) c_j$ thus decreasing the total weight. If $[c_i] = \pm p/q \cdot [c_j]$ with $p,q \in \mZ_{>0}$ relatively prime and $(p,q)\neq (1,1)$, then one of the vectors $[c_i], [c_j] \in H_1(T; \mZ)$ is not primitive (i.e. it is an integer multiple of other class from $H_1(T; \mZ)$) and so (as noted in Section \ref{sec:preliminaries}) there cannot be a simple closed curve with this homology class, a contradiction.
	
	Since on the torus any two non-collinear homology classes intersect, any two simple cycles $c_i$ and $c_j$ in the minimal representation of $\alpha$ intersect, and so they all can be concatenated to form a closed walk on $G$ with homology $\alpha$ and of length $\Norm{G,w}(\alpha)$.
\end{proof}

In the proof of Theorem \ref{thm:weightedgraphspolyhedralnorm} we show that the extremal points of the unit ball $\Ball{G,w} = \{\alpha \in H_1(T; \mR) \mid \Norm{G,w}(\alpha) \le 1\}$ of $\Norm{G,w}$ correspond to homology classes that can be represented by simple cycles in $G$. The following lemma gives an upper bound on the number of such homology classes.

For a subset $X$ of a real vector space let $\conv(X)$ denote the convex hull of $X$. Note that $H_1(T;\mZ)$ is naturally a subset of the real vector space $H_1(T;\mR$).

\begin{lemma}\label{lem:numberofsimplecycles}
	Let $G$ be a graph with $|V|$ vertices cellularly embedded on the torus $T$, and let $\SimpleC \subset H_1(T;\mZ)$ be the set of homology classes of curves that can be represented as simple cycles in $G$. Then in $H_1(T;\mR)$ we have $|\conv(\SimpleC) \cap H_1(T;\mZ)| \le 4|V|+5$.
\end{lemma}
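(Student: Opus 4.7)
Plan. My strategy combines the topological constraint that simple closed curves on the torus have primitive homology with a combinatorial amortization of lattice points in $P := \conv(\SimpleC)$ against the vertex set of $G$.

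The starting structural observations are the following. As recalled in Section~\ref{sec:preliminaries}, every simple closed curve on the torus is either null-homotopic (contributing the zero class) or represents a primitive element of $H_1(T;\mZ)\cong\mZ^2$; in particular every nonzero element of $\SimpleC$ is primitive. Reversing the orientation of a simple cycle produces a simple cycle with the opposite homology class, so $\SimpleC$ is symmetric under negation. Hence $P$ is a centrally symmetric lattice polygon, and every extremal vertex of $P$ is a primitive lattice vector.

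The core of the proof is an uncrossing / closure lemma: if $v,w\in\SimpleC$ are distinct primitive classes and $u = av+bw$ is a primitive lattice vector with $a,b\in\mZ_{\ge 0}$, then $u\in\SimpleC$. To prove it, pick shortest simple cycles $c_v,c_w$ representing $v,w$. Since non-parallel primitive classes on the torus have nonzero algebraic intersection, $c_v$ and $c_w$ meet at some vertex of $G$; at that vertex one reroutes the concatenation $a\cdot c_v + b\cdot c_w$ into a closed walk whose homology class is $u$. Because $u$ is primitive, standard cycle decomposition (as in Lemma~\ref{lem:normDefinitionSimpleCycles}) allows one to extract from this walk a simple cycle with class $u$, so $u\in\SimpleC$. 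Iterating this lemma shows that \emph{every} primitive lattice point of $P$ belongs to $\SimpleC$.

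With the closure lemma in hand, I count lattice points of $P$ by relating them to simple cycles of $G$. Enumerate the extremal vertices of $P$ as $\pm v_1,\ldots,\pm v_k$; the closure lemma implies that primitive lattice points in the relative interior of an edge of $P$ are again in $\SimpleC$, so they can be absorbed into the count of primitive elements. To bound the number of distinct primitive classes in $\SimpleC$, I would use that any two shortest simple cycles realizing different primitive classes must cross at a vertex of $G$ with a controlled local pattern; a charging scheme that exploits the cyclic order of the edges around each vertex (at most four compatible ``corner configurations'' per vertex for pairs of cycles in distinct extremal homology classes) yields a linear-in-$|V|$ bound on the number of primitive classes, specifically $4|V|+4$ of them. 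Finally, the only remaining lattice points of $P$ are non-primitive vectors, which by central symmetry and convexity are integer multiples of primitive lattice points of $P$; a direct area / Pick analysis, combined with the fact that the area of $P$ is constrained by the minimality of the representatives $c_i$, leaves only the origin as an extra contribution. Summing gives $|P\cap H_1(T;\mZ)|\le 4|V|+5$.

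The main obstacle is making the vertex-charging scheme precise enough to yield the coefficient $4$. This requires a careful uncrossing argument which guarantees that if two shortest extremal cycles shared ``too many'' corner patterns at a single vertex of $G$, one could produce a strictly shorter simple cycle in one of the two homology classes, contradicting minimality. The cellular embedding — in particular the cyclic ordering of edges at each vertex — is essential here, since it limits the combinatorial ways that distinct primitive classes can traverse a common vertex.
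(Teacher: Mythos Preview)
Your proposal has a genuine gap at its core: the ``closure lemma'' is false. Take the bouquet of two circles cellularly embedded on the torus (one vertex, two loop edges $a,b$ with $[a]=(1,0)$, $[b]=(0,1)$, one square face). The only simple cycles are $a$, $b$ and their reverses, so $\SimpleC=\{0,\pm(1,0),\pm(0,1)\}$. Yet $(1,1)=(1,0)+(0,1)$ is primitive and is \emph{not} in $\SimpleC$. Your justification breaks precisely at the step ``because $u$ is primitive, standard cycle decomposition allows one to extract a simple cycle with class $u$'': Lemma~\ref{lem:normDefinitionSimpleCycles} decomposes a $1$-cycle into a nonnegative combination of simple cycles, but nothing forces any summand to carry the full homology class $u$; in the bouquet example the only decomposition of the walk $a\cdot b$ is $a+b$. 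The subsequent claim that the only non-primitive lattice point of $P$ is the origin is likewise unjustified (and false in general: a centrally symmetric polygon with primitive vertices such as $\conv\{\pm(2,1),\pm(2,-1)\}$ contains $(2,0)$), and the ``four corner configurations per vertex'' charging scheme is left entirely unspecified.

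The paper's argument is both shorter and avoids all of this. The single substantive observation is that two simple cycles in $G$ can intersect only at vertices of $G$, so for any $\alpha,\beta\in\SimpleC$ one has $|\langle\alpha,\beta\rangle|\le |V|$. Choosing $\alpha,\beta$ maximizing $|\langle\alpha,\beta\rangle|$ forces every $\gamma\in\SimpleC$ to satisfy $|\langle\gamma,\alpha\rangle|,|\langle\gamma,\beta\rangle|\le|\langle\alpha,\beta\rangle|$, which confines $\SimpleC$ (hence $\conv(\SimpleC)$) to the parallelogram with vertices $\pm\alpha\pm\beta$, of area $4|\langle\alpha,\beta\rangle|\le 4|V|$. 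Since $\alpha,\beta$ are primitive, this parallelogram has exactly $8$ boundary lattice points, and Pick's theorem gives $I+B=A+B/2+1\le 4|V|+5$. No closure property of $\SimpleC$, no vertex-charging, and no shortest-cycle minimality are needed.
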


\begin{proof}
Identify $H_1(T;\mZ)$ with $\mZ^2$ via an arbitrary positively oriented basis. $H_1(T;\mR)$ is then identified with $\mR^2$, and the algebraic intersection pairing is given by $\langle (x,y), (x',y') \rangle = xy' - x'y$. Note that the absolute value of this pairing is the Euclidean area of the parallelogram spanned by these vectors.
	
    Let $\alpha,\beta \in \SimpleC$ be represented by simple cycles $c_\alpha, c_\beta$ in $G$. On the one hand, the number of intersections between $c_\alpha$ and $c_\beta$ is bounded by $|V|$, since each intersection corresponds to at least one vertex of $G$, and all these vertices must be different. On the other hand, it is bounded below by the algebraic intersection number $|\langle \alpha,\beta \rangle|$. Hence, $|\langle \alpha,\beta \rangle| \leq |V|$.
	
	Denote by $\lVert \cdot \rVert$ the Euclidean norm on $\mR^2$ and by $\dist(\cdot,\cdot)$ the Euclidean distance, and consider $\alpha, \beta \in \SimpleC$ such that $|\langle \alpha, \beta \rangle|$ is maximal. Then for any $\gamma \in \SimpleC$, we have $|\langle \gamma, \alpha \rangle| \le |\langle \alpha, \beta \rangle|$ and $|\langle \gamma, \beta \rangle| \le |\langle \alpha, \beta \rangle|$. 
	Note that, since these numbers are the areas of the corresponding parallelograms, $|\langle \gamma, \alpha \rangle| = \lVert \alpha \rVert \cdot \dist(\gamma, \mR \alpha)$, $|\langle \gamma, \beta \rangle| = \lVert \beta \rVert \cdot \dist(\gamma, \mR \beta)$ and $|\langle \alpha, \beta \rangle| = \lVert \alpha \rVert \cdot \dist(\beta, \mR \alpha)=\lVert \beta \rVert \cdot \dist(\alpha, \mR \beta)$, where $\mR\alpha, \mR\beta$ denote the one-dimensional $\mR$-subspaces generated by $\alpha, \beta$ respectively. It follows that $\dist(\gamma, \mR \alpha) \le \dist(\beta, \mR \alpha)$ and $\dist(\gamma, \mR \beta) \le \dist(\alpha, \mR \beta)$, and so $\SimpleC$ is contained in the parallelogram $P$ with vertices $\pm \alpha \pm \beta$, see Figure~\ref{fig:SCG}.
          \begin{figure}[h]
            \centering
            \includesvg[.4\textwidth]{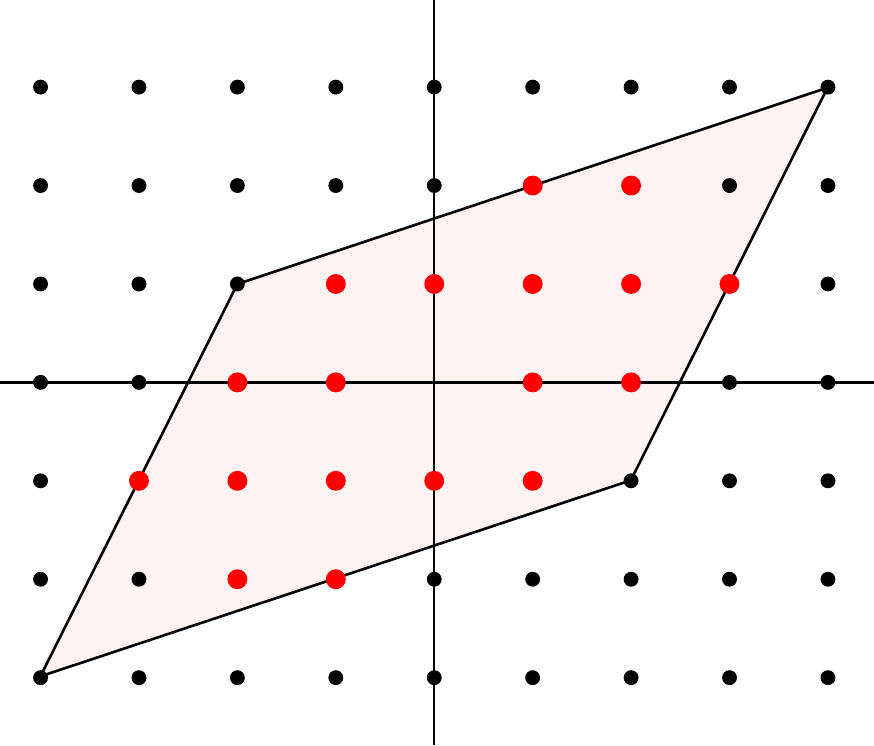}
    \caption{The elements of $\SimpleC$ are contained in a parallelogram $P$ of area at most $4|V|$.}
    \label{fig:SCG}
    \end{figure}
    
    Clearly, the area $A(P)$ of  $P$ is $4|\langle \alpha, \beta \rangle|$. At the same time, by Pick's theorem $A(P) = I+B/2-1$, where $I$ is the number of integer points strictly inside $P$ and $B$ is the number of integer points on its boundary. Since $\alpha$ and $\beta$ are homology classes represented by simple cycles, their corresponding vectors in $\mZ^2$ have coprime coordinates, i.e. the only integer points on the vectors $\alpha$ and $\beta$ are their endpoints. It follows that the only integer points on the boundary of $P$ are $\pm \alpha, \pm \beta, \pm \alpha \pm \beta$ and so $B=8$.
    
    Finally, since $\SimpleC \subset P$, we have $\conv(\SimpleC) \subset P$ as well, and so
	\[
	    |\conv(\SimpleC) \cap \mZ^2| \le  I+B = A(P) + B/2 + 1 = 4|\langle \alpha, \beta \rangle| + 5 \le 4|V|+5.
    \]
\end{proof}

Note that the linearity of the bound in Lemma \ref{lem:numberofsimplecycles} is optimal. 
Indeed, there are graphs $G$ embedded on the torus with $|V|$ vertices satisfying $|\SimpleC|=\Theta(|V|)$. One can take for example a graph obtained by drawing on the torus $\mR^2 / \mZ^2$ the curves $q(\mR \times \{1/2\})$ and $q(\mR \cdot (1,k))$, where $q:\mR^2 \rightarrow \mR^2 / \mZ^2$ is the universal covering map, and putting vertices at the intersections of these curves. See Figure~\ref{fig:SCG_example} for an example with $k=4$. The resulting graph $G$ has $k$ vertices and $\SimpleC = \{\pm (0,1), \pm (1,0), \pm (1,1), \pm (1,2), \ldots, \pm (1,k)\}$ with $|\SimpleC|=2k+4$.

	\begin{figure}[h]
    \centering
    \includesvg[.2\textwidth]{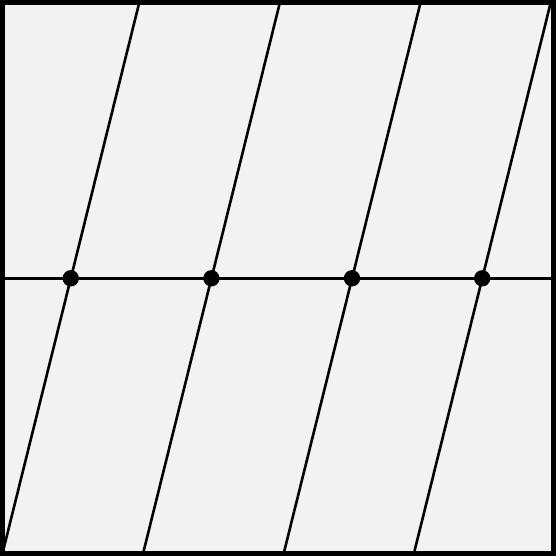}
    \caption{An example of a graph $G$ with $k=4$ vertices and $|\SimpleC| = 2k+4$.}
    \label{fig:SCG_example}
    \end{figure}
    
We now pass to the proof of Theorem \ref{thm:weightedgraphspolyhedralnorm}. In the unweighted case, the polyhedrality of the norm is a consequence of the integrality, see~\cite{schrijver1993}. However, this argument does not apply in the weighted case.

\begin{proof}[Proof of Theorem~\ref{thm:weightedgraphspolyhedralnorm}]
	It is clear from the definition of $\Norm{G,w}$ that for all $\alpha, \beta \in H_1(T; \mZ)$ we have $\Norm{G,w}(\alpha+\beta) \le \Norm{G,w}(\alpha) + \Norm{G,w}(\beta)$ and that $\Norm{G,w}(\alpha) = 0$ implies $\alpha = 0$. To prove absolute homogeneity, we use the characterisation of $\Norm{G,w}$ as a marked length spectrum of $T$ with respect to $(G,w)$, proven in Lemma \ref{lem:normDefinitionSimpleCycles}. First note that if $c$ is the shortest closed walk in $G$ with $[c]=\alpha$, then $c$ with opposite orientation is the shortest closed walk in $G$ with homology $-\alpha$. Hence, $\Norm{G,w}(-\alpha)=\Norm{G,w}(\alpha)$. 
	
	Now, we prove by induction on $k \in \mZ_{>0}$ that $\Norm{G,w}(k\alpha)=k \cdot \Norm{G,w}(\alpha)$. For $k=1$ this is obviously true. For $k\ge 2$, the class $k\alpha \in H_1(T;\mZ)$ is not primitive, and so the shortest closed walk $c$ in $G$ representing $k\alpha$ necessarily self-intersects (as noted in Section \ref{sec:preliminaries}). It follows easily from the proof in \cite[Sec. 6.2.2]{stillwell1993} that the point of self-intersection can be chosen in such a way that it decomposes $c$ into two closed walks $c_1$ and $c_2$ with $[c_1]=k_1 \alpha, [c_2]=k_2 \alpha$, $k_1+k_2=k$. If $k_1=0$ or $k_2=0$, then one of the walks can be removed from $c$ not changing the homology and decreasing length, which is impossible. So $k_1, k_2 >0$ and $\Norm{G,w}(k\alpha) = w(c) \ge \Norm{G,w}(k_1\alpha) + \Norm{G,w}(k_2\alpha) = k\Norm{G,w}(\alpha)$ by induction hypothesis. The opposite inequality follows from the triangle inequality.
	
	So $\Norm{G,w}$ satisfies the norm axioms. Hence, as explained in the subsection ``Integer and intersection norms'' of Section \ref{sec:preliminaries}, $\Norm{G,w}$ extends to a norm on $H_1(T;\mR)$. To prove the polyhedrality of this norm, we show that \begin{equation}
	\label{eq:ballConvexHull}
	    \Ball{G,w} = \conv \left( \left\{ \frac{[c_i]}{w(c_i)} \mid i \in I \right\} \right),
	\end{equation}
	where $c_i$ is the (finite) set of all oriented simple cycles in $G$, as in Lemma \ref{lem:normDefinitionSimpleCycles}.
	
	Denote the right-hand side of (\ref{eq:ballConvexHull}) by $\Ball{G,w}'$. Clearly, for every $i \in I$ we have $\Norm{G,w}([c_i])\leq w(c_i)$, so $\Ball{G,w}' \subset \Ball{G,w}$. Conversely, take any homology class $\alpha \in H_1(T;\mZ)$. By Lemma \ref{lem:normDefinitionSimpleCycles}, we have $\Norm{G,w}(\alpha) = \sum_{i \in I} x_i \cdot w(c_i)$ for some $x_i \in \mZ_{\ge 0}$ such that $\alpha = [\sum_{i \in I} x_i \cdot c_i]$. Then 
	\[\frac{\alpha}{\Norm{G,w}(\alpha)} = \frac{\sum_{i \in I} x_i \cdot [c_i]}{\sum_{i \in I} x_i \cdot w(c_i)} = \sum_{i \in I} \frac{x_i w(c_i)}{\sum_{j \in I} x_j \cdot w(c_j)} \cdot \frac{[c_i]}{w(c_i)}\]
	is a representation of $\frac{\alpha}{\Norm{G,w}(\alpha)}$ as a convex combination of $\frac{[c_i]}{w(c_i)}, i \in I$. Hence $\Ball{G,w} \subset \Ball{G,w}'$ and we get (\ref{eq:ballConvexHull}).
	
	By definition, the homology classes $[c_i], i \in I$ can be represented as simple cycles in $G$. By Lemma \ref{lem:numberofsimplecycles} their number is at most $4|V|+5$, and so the number of extremal points of $\Ball{G,w}$ is also at most $4|V|+5$. The slopes of $[c_i]/w(c_i)$ are rational since the $[c_i]$ belong to $H_1(T;\mZ)$.
\end{proof}

Finally, we show how to reconstruct a weighted graph $(G,w)$ embedded on the torus $T$ from a polyhedral norm on $\mR^2$. The case of integral norms and unweighted graphs is treated in~\cite{schrijver1993}. Our method is different and does not provide an optimal construction (in the number of vertices) contrarily to~\cite{schrijver1993}. This minimization problem is probably hard in the weighted case.

\begin{theorem} \label{thm:normReconstruction}
Let $N: \mR^2 \to \mR$ be a polyhedral norm all of whose extremal points have rational
slopes. Let $\{\pm (p_i,q_i)\}_{i=1,\ldots,n}$ be the set of non-zero integral vectors closest to the origin on the rays issued from the origin in the direction of the extremal points of the unit ball $\{v \in \mR^2 : N(v) \le 1\}$. Then there exists
a weighted 4-valent graph $(G,w)$ embedded on the torus $T$ with
$\displaystyle \sum_{1 \le i < j \le n} |p_i q_j - p_j q_i|$ vertices so that $\Norm{(G,w)} = N$.
\end{theorem}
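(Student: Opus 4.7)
The plan is to realize $N$ as the length spectrum of an arrangement of simple closed geodesics on the flat torus $T=\mR^2/\mZ^2$. For each primitive vector $(p_i,q_i)$ appearing in the statement, let $\gamma_i$ be the image in $T$ of a straight line of direction $(p_i,q_i)$ with a generic offset; $\gamma_i$ is then a simple closed curve in the homology class $(p_i,q_i)$. By choosing the offsets generically, all intersections among the $\gamma_i$'s are transverse and no three curves meet at a common point. Since simple geodesics minimize the geometric intersection number in their homology classes, $\gamma_i$ and $\gamma_j$ meet in exactly $|p_iq_j-p_jq_i|$ points, for a total of $\sum_{i<j}|p_iq_j-p_jq_i|$ intersections. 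Let $G$ be the graph whose vertices are these intersection points and whose edges are the arcs of the $\gamma_i$'s between consecutive intersections; then $G$ is $4$-valent. It is cellularly embedded on $T$ because, in the universal cover, each complementary region is an intersection of half-planes, hence convex, and it is bounded because the directions $(p_i,q_i)$ span $\mR^2$.

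For the weight I set $w(e):=N(\mathbf{v}_e)$, where $\mathbf{v}_e\in\mR^2$ is the displacement vector of a lift of the oriented edge $e$. If $e$ lies on $\gamma_i$ then $\mathbf{v}_e=t_e(p_i,q_i)$ with $t_e>0$, so $w(e)=t_e\lambda_i$, where $\lambda_i=N((p_i,q_i))>0$ is the scalar such that $(p_i,q_i)/\lambda_i$ is the extremal point of the unit ball of $N$ in the direction of $(p_i,q_i)$. Summing along $\gamma_i$, the identity $\sum_{e\subset\gamma_i}t_e=1$ (total displacement equals $(p_i,q_i)$) yields $w(\gamma_i)=\lambda_i$.

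The lower bound $\Norm{G,w}\ge N$ follows directly from the triangle inequality for $N$: any $1$-cycle $c=\sum_e x_e e$ with $[c]=\alpha$ lifts to a $1$-chain in the universal cover with total displacement $\sum_e x_e\mathbf{v}_e=\alpha$, so
\[
N(\alpha)=N\Bigl(\sum_e x_e\mathbf{v}_e\Bigr)\le\sum_e|x_e|N(\mathbf{v}_e)=\sum_e|x_e|w(e)=w(c),
\]
and taking an infimum gives $\Norm{G,w}(\alpha)\ge N(\alpha)$.

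The upper bound $\Norm{G,w}\le N$ is the main obstacle. For $\alpha\in H_1(T;\mZ)\cong\mZ^2$, pick two cyclically adjacent extremal directions $\pm(p_i,q_i)$ and $\pm(p_{i+1},q_{i+1})$ (with appropriate signs) so that $\alpha=\rho(p_i,q_i)+\sigma(p_{i+1},q_{i+1})$ with $\rho,\sigma\ge 0$ rational; since $N$ is linear on this closed cone, $N(\alpha)=\rho\lambda_i+\sigma\lambda_{i+1}$. I construct a closed walk of weight exactly $N(\alpha)$ as a monotone staircase in the universal cover: starting at a lift $\tilde v$ of some point of $\gamma_i\cap\gamma_{i+1}$ (nonempty since $|p_iq_{i+1}-p_{i+1}q_i|\ge 1$), alternate arcs of lifts of $\gamma_i$ and $\gamma_{i+1}$, each traversed in the positive $(p_i,q_i)$ or $(p_{i+1},q_{i+1})$ direction, ending at $\tilde v+\alpha$. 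Such a staircase exists because the intersections of lifts of $\gamma_i$ with lifts of $\gamma_{i+1}$ form a $\mZ^2$-invariant lattice $\Lambda\subset\mR^2$, generated over $\mZ$ by two vectors $\mathbf{u}=\mu(p_i,q_i)$ and $\mathbf{v}=\nu(p_{i+1},q_{i+1})$ ($\mu,\nu>0$ given by the spacing between consecutive intersections along each lift); as $\alpha\in\mZ^2\subset\Lambda$ lies in the positive $\mR$-cone spanned by $\mathbf{u}$ and $\mathbf{v}$, it decomposes as $\alpha=a\mathbf{u}+b\mathbf{v}$ with $a,b\in\mZ_{\ge 0}$, and $a$ consecutive $\mathbf{u}$-steps followed by $b$ consecutive $\mathbf{v}$-steps realize the staircase. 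The staircase corresponds to a valid closed walk in $G$: whenever it crosses an intersection of $\gamma_i$ (or $\gamma_{i+1}$) with some other $\gamma_k$ on the way, the walk simply traverses two consecutive edges on the same $\gamma$ without turning. Its weight equals $\sum_j N(\mathbf{v}_{e_j})=N\bigl(\sum_j\mathbf{v}_{e_j}\bigr)=N(\alpha)$, where the middle equality is tightness of the triangle inequality: all $\mathbf{v}_{e_j}$ are positive multiples of $(p_i,q_i)$ or $(p_{i+1},q_{i+1})$, so after normalization they lie on a single edge of the unit ball of $N$.
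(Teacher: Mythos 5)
Your proposal is correct. The construction of the graph (generic geodesic lines in the directions $(p_i,q_i)$, intersections as vertices, arcs as edges, weight $w(e)=N(\mathbf v_e)$) and the lower bound $\Norm{G,w}\ge N$ via the triangle inequality for $N$ coincide exactly with the paper's argument. The difference is in the upper bound $\Norm{G,w}\le N$. The paper argues abstractly: each curve $D_i$ realizes the class $(p_i,q_i)$ with weight exactly $N((p_i,q_i))$, so $\Norm{G,w}\le N$ holds at the extremal directions; then, since $\Norm{G,w}$ is already known to be a norm (Theorem~\ref{thm:weightedgraphspolyhedralnorm}) and $N$ is linear on each edge of its unit ball, subadditivity and homogeneity of $\Norm{G,w}$ propagate the inequality to the whole boundary of the ball, hence everywhere. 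You instead produce, for every $\alpha$, an explicit closed walk of weight $N(\alpha)$ by building a monotone staircase in the universal cover out of arcs of two cyclically adjacent $\gamma_i,\gamma_{i+1}$, using the lattice $\Lambda\supset\mZ^2$ of their pairwise intersection points and the linearity of $N$ on the corresponding cone to make the triangle inequality tight. Both are sound; your version is more constructive and does not invoke the prior fact that $\Norm{G,w}$ is a norm, at the price of the extra bookkeeping about the spacings $\mu,\nu$ and the lattice $\Lambda$, while the paper's argument is shorter precisely because it leverages that structural fact.
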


\begin{proof}
    Let $D_i$ be the projection of the segment $[(x_i, y_i), (x_i+p_i, y_i+q_i)]$ on the torus $T = \mR^2 / \mZ^2$ where $(x_i, y_i) \in \mR^2$ are chosen so that at most two curves $D_i$ pass through each point of $T$. The union $G$ of the curves $D_i$ is a graph embedded
    in $T$ with $\displaystyle \sum_{1 \le i < j \le n} |p_i q_j - p_j q_i|$ 4-valent vertices, one for each intersection between $D_i$ and $D_j$. To each edge $e$ of $G$ we associate the weight $w(e) = N(v_e)$ where $v_e$ is the vector in $\mR^2$ between the two endpoints of a lift of $e$ to $\mR^2$. 
    
    By construction we have $\Norm{G,w}((p_i,q_i)) \le N((p_i,q_i))$ since the class $(p_i,q_i)$ is realized by the path corresponding to the projection of $D_i$ which has weight $N((p_i, q_i))$. For any two adjacent extremal points $v$ and $v'$ of the unit ball of $N$ we thus have $N(tv + (1-t)v') = tN(v) + (1-t)N(v')$, and so $\Norm{G,w}(tv + (1-t)v') \le t\Norm{G,w}(v) + (1-t)\Norm{G,w}(v') \le tN(v) + (1-t)N(v') = N(tv + (1-t)v')$. Hence $\Norm{G,w} \le N$.
    
    Our aim is now to show that $\Norm{G,w} \ge N$. It is enough to show that it is true for integral points, that is elements in $H_1(T; \mZ)$. Let $\alpha = \sum_{e \in E(G)} x_e e \in Z_1(G;\mZ)$ such that $\Norm{G,w}([\alpha])=w(\alpha).$ Then $\Norm{G,w}([\alpha]) = \sum |x_e| w(e) = \sum |x_e| N(v_e)$ where the last equality is our choice of weights. By construction the vector $\sum_{e \in E(G)} x_e v_e \in \mR^2$ is equal to $[\alpha]$ if we identify $H_1(T; \mR)$ with $\mR^2$. Hence $\Norm{G,w}([\alpha]) = \sum |x_e| N(v_e) \ge N(\sum x_e v_e) = N([\alpha])$ where the inequality is the triangular inequality for $N$. Hence $\Norm{G,w} \ge N$ and so the norms $\Norm{G,w}$ and $N$ coincide.
\end{proof}

\section{Good short basis}\label{sec:short-basis}
Our computation of the length spectrum and of its unit ball relies on the initial computation of a \define{good short basis}. By a \define{short basis} we mean a pair of tight simple cycles $(a,b)$ in $G$ such that $a$ is a shortest non-trivial closed walk and $b$ is a shortest non-trivial closed walk satisfying $\langle [a],[b] \rangle = 1$. We say that $(a,b)$ is a \emph{good basis} if $([a],[b])$ is a positively oriented basis of $\HoneT$ and  $a$ and $b$ intersect along a connected path, possibly reduced to a vertex.

\begin{lemma}\label{lem:short-basis}
  Let $(G,w)$ be a weighted graph of complexity $n$ cellularly embedded on the torus. A good short basis can be computed in $O(n\log n)$ time.
\end{lemma}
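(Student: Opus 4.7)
The plan is to compute the pair $(a,b)$ sequentially, each in $O(n\log n)$ time, and then tidy it into a good basis. Since a shortest non-trivial closed walk is automatically a simple tight cycle, I would take $a$ to be a shortest non-contractible cycle on the combinatorial torus; this can be computed in $O(n\log n)$ time using one of the known algorithms for shortest non-contractible cycles on surfaces of bounded genus.

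Next, I would cut the torus along $a$ to obtain an annulus $A$, in which each vertex $v$ of $a$ has two copies $v^+, v^-$ on the two boundary curves $a^+$ and $a^-$; the resulting graph is planar with $a^+$ and $a^-$ bounding two distinguished faces. Any simple cycle $b$ with $|\langle [a],[b]\rangle|=1$ must share at least one vertex with $a$ (otherwise the algebraic intersection would vanish), and conversely every $A$-path from $v^+$ to $v^-$ closes up at $v$ on the torus into such a cycle. Hence
\[
\Norm{G,w}([b]) \;=\; \min_{v\in V(a)} \dist_A(v^+, v^-).
\]
A naive evaluation via one Dijkstra per vertex of $a$ would cost $\Omega(n^2)$; to stay within $O(n\log n)$, I would apply the Multiple Source Shortest Paths (MSSP) data structure of Klein to the planar graph $A$, with source set on the boundary face $a^+$. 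MSSP is constructed in $O(n\log n)$ time and answers each of the $|V(a)|$ distance queries $\dist_A(v^+, v^-)$ in $O(\log n)$ time; taking the minimum and reconstructing the winning path yields $b$.

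Finally, I would reverse the orientation of $b$ if needed so that $\langle [a],[b]\rangle = +1$. If $a \cap b$ is not a single connected subpath (possibly reduced to a vertex), I would apply a standard swap argument: the arcs of $a$ and $b$ between two consecutive components of $a \cap b$ must have the same length, since any strict inequality would allow replacing the longer arc by the shorter one and contradict the tightness of $a$ or $b$; swapping these equal-length arcs decreases the number of components of $a \cap b$ without altering lengths or homology classes, and a linear pass suffices. The decisive obstacle is the second stage, namely bringing the a priori $\Omega(n^2)$ cost of $|V(a)|$ boundary-to-boundary distance computations in $A$ down to $O(n\log n)$ by exploiting the planarity of $A$ through MSSP; the orientation flip and the swap-tightening steps are routine combinatorics once tightness of $a$ and $b$ is in hand.
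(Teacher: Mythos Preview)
Your outline matches the paper's proof almost step for step: compute $a$ as a shortest non-trivial cycle, cut the torus along $a$ into an annulus $A$, use Klein's MSSP with sources on one boundary to find $\min_{v\in V(a)}\dist_A(v^+,v^-)$, then post-process to make $a\cap b$ connected. The essential gap is in your displayed equality
\[
\Norm{G,w}([b]) \;=\; \min_{v\in V(a)} \dist_A(v^+, v^-).
\]
Your two bullet points only yield the inequality $\le$: every $A$-path from $v^+$ to $v^-$ closes up to a walk with intersection number $\pm 1$. For $\ge$ you must show that \emph{some} tight $b$ with $\langle[a],[b]\rangle=1$ has a lift confined to the closed strip between two consecutive $a$-lines in the universal cover; merely sharing a vertex with $a$ does not guarantee this, since the lift of a tight $b$ may a priori wander across several $a$-lines. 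The paper supplies exactly this missing step: because $a$ is a shortest non-trivial cycle, every bi-infinite concatenation of lifts of $a$ (an ``$a$-line'') is geodesic in the lifted graph, so one can replace the initial and final excursions of $\tilde b$ outside the strip by subpaths along the bounding $a$-lines without increasing length. Once you have that, the MSSP computation is justified.

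Your final swap argument is in the right spirit but the assertion that the two arcs ``must have the same length'' is not quite what you get. Working in the universal cover, the $a$-arc between two consecutive touchpoints on the same $a$-line is \emph{at most} as long as the $b$-arc because the $a$-line is geodesic; the reverse inequality then follows from tightness of $b$ (swapping keeps the endpoints of the lift fixed, hence preserves $[b]$). On the torus, two arcs of $a$ and $b$ between consecutive components of $a\cap b$ need not be homotopic rel endpoints, so the ``replace the longer by the shorter'' move can change the homology class unless you argue via lifts. The paper sidesteps this by doing the replacement directly along the $a$-lines in the cover, which also handles simplicity of the resulting $b$ (if the $A$-path visited both copies $w^+,w^-$ of some $w$, the subpath between them would beat the minimum).
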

\begin{proof}
  We first compute a shortest non-trivial closed walk $a$ in $O(n\log n)$ time following Kutz~\cite[Th. 1]{k-csntc-06}. This closed walk must be a tight simple cycle as otherwise it could be decomposed into shorter non-trivial closed walks. We define an \define{$a$-line} as any bi-infinite concatenation of lifts of $a$ in the universal cover $q: \mR^2\to T$. Since $a$ is tight, every $a$-line is geodesic, meaning that it includes a shortest path in the lifted graph $q^{-1}(G)$ between any pair of its vertices.

  We claim that among all shortest non-trivial closed walks $b$ satisfying $\langle [a],[b] \rangle = 1$ there is one that intersects $a$ along a connected path. Indeed, let  $b'$ be a shortest non-trivial closed walk satisfying $\langle [a],[b'] \rangle = 1$. Let $v$ be a vertex in $b'\cap a$ and let $\tilde{v}\in q^{-1}(v)$. Denote by $\tilde{w}$  the target vertex of the lift $\tilde{b'}$ of $b'$ from $\tilde{v}$   in the universal cover. Since the algebraic intersection number of $a$ and $b'$ is one, the vertices $\tilde{v}$ and $\tilde{w}$ join two consecutive $a$-lines, say $L$ and $L+\tau_{b'}$, where $\tau_{b'}$ is the covering translation associated to $b'$. See Figure~\ref{fig:short-basis}.
\begin{figure}[h]
	\centering
	\includesvg[\textwidth]{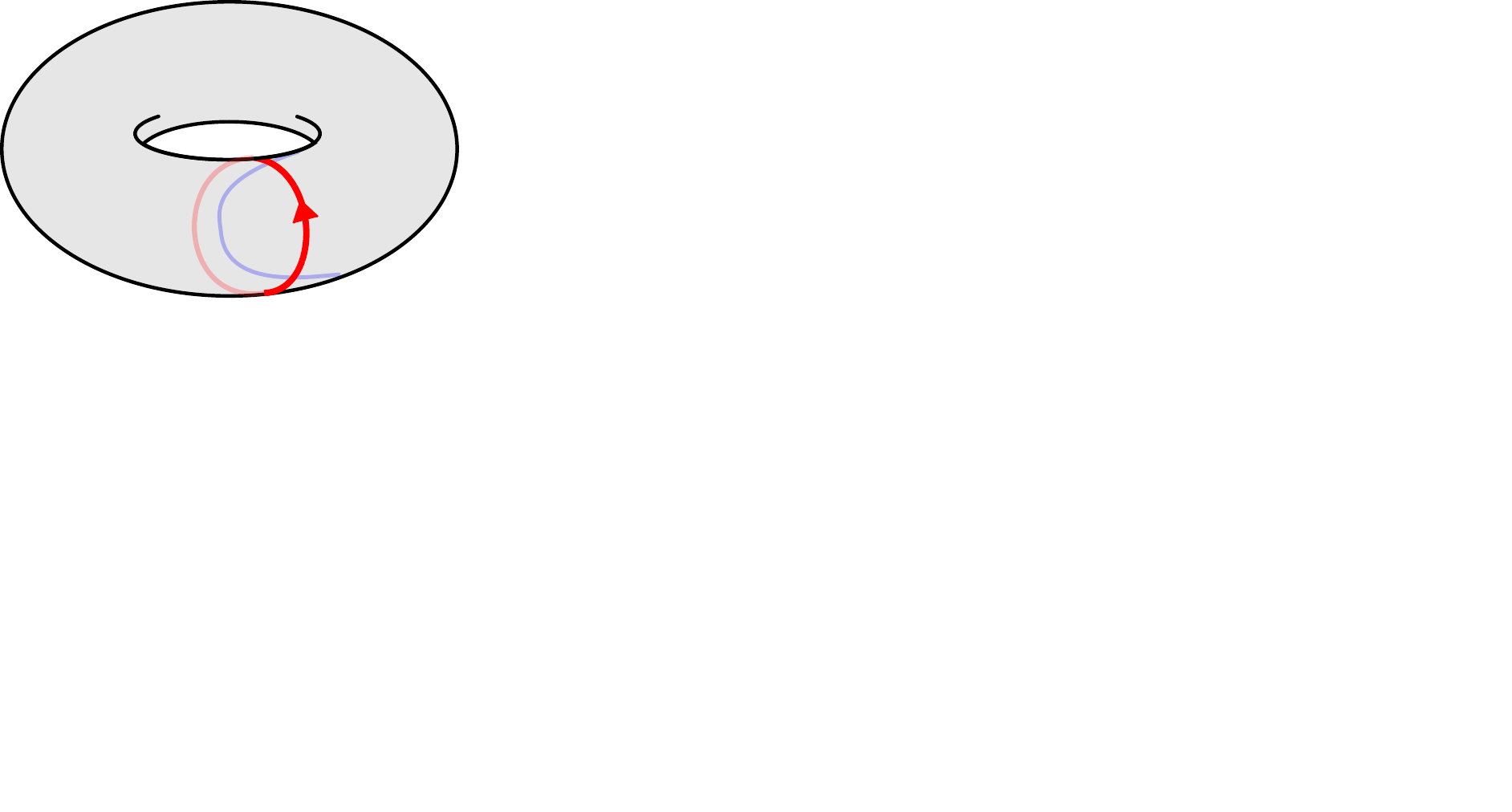}
	\caption{Left: modifying $b'$ so that it intersects $a$ in a connected path. Right: the lift of $b'$ can be enforced to lie between two consecutive $a$-lines.}
	\label{fig:short-basis}
\end{figure}

  Consider the last intersection $\tilde{v_1}$ of $\tilde{b'}$ with $L$. We replace in $b'$ the subpath between $\tilde{v}$ and $\tilde{v_1}$ by a non-longer path along $L$. We next replace the subpath of the resulting path between $\tilde{w}$ and the first intersection  with  $L+\tau_{b'}$ by a path along $L+\tau_{b'}$. This yields a path $\tilde{b}$ so that $b:=q(\tilde{b})$ satisfies the claim.
 
   Cutting $T$ along $a$ yields an annulus $A$ with two copies $a'$ and $a''$ of $a$ as boundary components. By the above claim,  $b$ intersects $A$ in a shortest path connecting two copies of the same vertex respectively on $a'$ and $a''$. To find this shortest path, we first use the multiple-source shortest path algorithm of Klein for sources on the outer face of a plane graph; see~\cite{k-msspp-05} and~\cite[Th. 3.8]{cce-msspe-13}. This algorithm builds a data structure in $O(n\log n)$ time that allows to query for the distance between any vertex on $a'$ and any other vertex in $A$ in $O(\log n)$ time. We need to query for the $O(n)$ pairs of copies of vertices of $a$ and retain a pair $(u',u'')$ that minimizes the distance. In order to find an explicit representative of $b$, we can in a second step run Dijkstra's algorithm with source $u'$ in $A$. Finally, to ensure that $b$ intersects $a$ along a connected path, we can replace the subpath between $u'$ and the last occurrence of a vertex on $a'$ by a subpath of $a'$ with the same length and do similarly on $a''$. The total running time is $O(n\log n)$. We obtain $b$ by gluing back the two copies of $a$. Note that the resulting closed walk must be simple. Indeed, since $b$ is computed from a shortest path in $A$, the only vertices that may appear twice are copies of a same vertex on $a'$ and $a''$. This would however yield a shorter path between corresponding vertices, in contradiction with the minimizing property of $(u',u'')$.
\end{proof}

We shall always express a homology class in the basis $(a, b)$ and identify the class with a vector in $\mZ^2$. Hence, the homology classes of $a$ and $b$ are identified with $(1,0)$ and $(0,1)$, respectively.

\section{Computing the unit ball}
\label{sec:computingUnitBall}

Here, we provide an algorithm for computing the unit ball $\Ball{G,w}$ of the norm $\Norm{G,w}$ corresponding to the weighted graph $(G,w)$. Let $\TSimpleC \subset H_1(T;\mZ)$ be the set of homology classes that admit a tight and simple representative in $G$. Of course, $\TSimpleC \subseteq \SimpleC$, and the homology classes of $a$ and $b$ computed in Section~\ref{sec:short-basis} are in $\TSimpleC$ by construction. 

In Section~\ref{sec:polyhedralnorms}, we proved that $\Ball{G,w}$ is the convex hull of a set $\{\alpha/\Norm{G,w}(\alpha)\}_{\alpha \in \SimpleC}$ containing $O(|V|)$ classes.
We shall compute a subset $H$ of $\TSimpleC$ whose normalized vectors, $\{{\alpha}/{\Norm{G,w}(\alpha)}\}_{\alpha\in H}$, include all the extremal points of $\Ball{G,w}$. Since the coordinates of each element of $\TSimpleC$ must be coprime, the set of directions defined by the elements of $\TSimpleC$ are pairwise distinct and naturally ordered angularly. We search for $H$ by exploring the whole set of directions using dichotomy together with a simple pruning strategy. Suppose we need to explore the angular sector $\angle(\alpha,\beta)$, where $(\alpha,\beta)$ forms a basis of $\HoneT$. The dichotomy consists in splitting the sector into the sectors $\angle(\alpha,\gamma)$ and $\angle(\gamma,\beta)$ with $\gamma:=\alpha+\beta$. Note that $(\alpha,\gamma)$ and $(\gamma,\beta)$ are again bases of $\HoneT$. In particular, the coordinates of $\gamma$ are coprime\footnote{$\gamma$ corresponds to the \emph{mediant} of $\alpha,\beta$ that appears in Farey sequences.}. Since for any nonzero $\eta\in \HoneT$, the normalized vector ${\eta}/{\Norm{G,w}(\eta)}$ lies on the boundary of the unit ball, it follows by convexity of $\Ball{G,w}$ that the segment $[\frac{\alpha}{\Norm{G,w}(\alpha)}, \frac{\beta}{\Norm{G,w}(\beta)}]$ is a subset of a supporting line of $\Ball{G,w}$ whenever ${\gamma}/{\Norm{G,w}(\gamma)}$ lies on this segment. This last condition has a simple certificate.
\begin{claim}\label{clm:prune}
  $\frac{\gamma}{\Norm{G,w}(\gamma)}$ lies on the segment $[\frac{\alpha}{\Norm{G,w}(\alpha)}, \frac{\beta}{\Norm{G,w}(\beta)}]$ if and only if\\ $\Norm{G,w}(\alpha+\beta)=\Norm{G,w}(\alpha)+\Norm{G,w}(\beta)$.
\end{claim}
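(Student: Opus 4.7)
The plan is to unfold the definitions: write ``lies on the segment'' as a convex combination with some parameter $t\in[0,1]$, and exploit the fact that $(\alpha,\beta)$ is a basis of $H_1(T;\mZ)$ (hence of $H_1(T;\mR)$) to identify coefficients uniquely. For brevity I write $N$ for $\Norm{G,w}$.

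For the forward direction, suppose there exists $t\in[0,1]$ such that
\[
\frac{\gamma}{N(\gamma)} \;=\; t\,\frac{\alpha}{N(\alpha)} + (1-t)\,\frac{\beta}{N(\beta)}.
\]
Since $\gamma = \alpha+\beta$ and $(\alpha,\beta)$ is a basis of $H_1(T;\mR)$, the coordinates of $\gamma/N(\gamma)$ in this basis are $(1/N(\gamma),1/N(\gamma))$. Equating with the coordinates on the right-hand side gives the two identities $t/N(\alpha)=1/N(\gamma)$ and $(1-t)/N(\beta)=1/N(\gamma)$, that is $t=N(\alpha)/N(\gamma)$ and $1-t=N(\beta)/N(\gamma)$. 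Adding these two equations yields $N(\gamma)=N(\alpha)+N(\beta)$, as desired.

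For the reverse direction, assume $N(\alpha+\beta) = N(\alpha)+N(\beta)$. Define $t := N(\alpha)/N(\gamma)$; then $t\in[0,1]$ since $N(\alpha),N(\beta)\ge 0$ and $N(\gamma)=N(\alpha)+N(\beta)$, and correspondingly $1-t = N(\beta)/N(\gamma)$. A direct computation then gives
\[
t\,\frac{\alpha}{N(\alpha)} + (1-t)\,\frac{\beta}{N(\beta)} \;=\; \frac{\alpha}{N(\gamma)} + \frac{\beta}{N(\gamma)} \;=\; \frac{\alpha+\beta}{N(\gamma)} \;=\; \frac{\gamma}{N(\gamma)},
\]
which places $\gamma/N(\gamma)$ on the segment joining $\alpha/N(\alpha)$ and $\beta/N(\beta)$.

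There is no genuine obstacle here; the argument is purely formal once one notices that the linear independence of $\alpha$ and $\beta$ forces the coefficients of any convex combination to be uniquely determined. The only point worth emphasising is the use of the triangle inequality in reverse: the hypothesis of saturation ($N(\gamma)=N(\alpha)+N(\beta)$) is exactly what makes the parameter $t=N(\alpha)/N(\gamma)$ lie in $[0,1]$ and complementary to $N(\beta)/N(\gamma)$, so the proof essentially amounts to recognising this equivalence.
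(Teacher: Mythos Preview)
Your proof is correct and follows essentially the same approach as the paper: both directions amount to writing $\gamma/N(\gamma)$ as a convex combination of $\alpha/N(\alpha)$ and $\beta/N(\beta)$ and using the fact that $(\alpha,\beta)$ is a basis to identify the coefficients uniquely, which forces $t=N(\alpha)/N(\gamma)$ and $1-t=N(\beta)/N(\gamma)$. The paper presents the two implications in the opposite order but the computations are identical.
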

\begin{proof}
  If $\Norm{G,w}(\alpha+\beta)=\Norm{G,w}(\alpha)+\Norm{G,w}(\beta)$, then
\[\frac{\gamma}{\Norm{G,w}(\gamma)}= \frac{\alpha}{\Norm{G,w}(\alpha+\beta)} + \frac{\beta}{\Norm{G,w}(\alpha+\beta)} = \frac{\Norm{G,w}(\alpha)}{\Norm{G,w}(\alpha+\beta)} \frac{\alpha}{\Norm{G,w}(\alpha)} + \frac{\Norm{G,w}(\beta)}{\Norm{G,w}(\alpha+\beta)} \frac{\beta}{\Norm{G,w}(\beta)},
  \]
  and the two positive coefficients in front of $\frac{\alpha}{\Norm{G,w}(\alpha)}$ and $\frac{\beta}{\Norm{G,w}(\beta)}$ add up to one. Conversely, if $\frac{\gamma}{\Norm{G,w}(\gamma)}= t\frac{\alpha}{\Norm{G,w}(\alpha)}+(1-t)\frac{\beta}{\Norm{G,w}(\beta)}$ for some $t\in(0,1)$, then the uniqueness of the decomposition of a homology class in the $(\alpha,\beta)$ basis implies $t = \frac{\Norm{G,w}(\alpha)}{\Norm{G,w}(\alpha+\beta)}$ and $1-t = \frac{\Norm{G,w}(\beta)}{\Norm{G,w}(\alpha+\beta)}$, whence $\Norm{G,w}(\alpha+\beta)=\Norm{G,w}(\alpha)+\Norm{G,w}(\beta)$.
\end{proof}
It follows from the previous discussion that if $\Norm{G,w}(\alpha+\beta)=\Norm{G,w}(\alpha)+\Norm{G,w}(\beta)$, then the interior of the sector $\angle(\alpha,\beta)$ cannot contain any extremal point and we can prune this whole sector in our search. This leads to the pseudo-code of Algorithm~\ref{alg:algo1} for computing $H$. In the sequel, we say that a pair of closed walks in $G$ is \textbf{good} if they are simple and tight cycles, if their homology classes form a basis of $\HoneT$, and if they moreover intersect along a connected path, possibly reduced to a vertex.

  \begin{algorithm}
\caption{Compute $H$}\label{alg:algo1}
\begin{algorithmic}[1]
  \REQUIRE A weighted graph $(G,w)$ cellularly embedded on the torus
  \ENSURE A short basis $(a, b)$ and a sorted list $H = [((x_i,y_i), c_i, w(c_i))]$ where $c_i$ is a simple tight cycle in $G$, $(x_i, y_i) \in \mZ^2$ represents its homology class $[c_i] = x_i [a] + y_i [b]$, and $w(c_i) = \Norm{G,w}([c_i])$. 
  Also, the extremal points of $\Ball{G,w}$ are contained in the set of vectors $\{[c_i] / w(c_i) : i \in \{ 0, \ldots, size(H)-1\} \}$.
  \STATE Compute a good short basis $(a, b)$ as explained in Section~\ref{sec:short-basis}
  \STATE $h_1 := ((1,0), a, w(a))$ \hfill\COMMENT{\small Note that $\Norm{G,w}([a])=w(a)$}
  \STATE $h_2:= ((0,1), b, w(b))$ \hfill\COMMENT{\small and that $\Norm{G,w}([b])=w(b)$.}
  \STATE $\overline{h_1} := ((-1,0), \overline{a}, w(a))$
  \STATE $H:= \{h_1,h_2\}$ \label{alg:initH} \hfill\COMMENT{\small Initialise $H$.}
  \STATE $S:= \{\angle{(h_1,h_2)},\angle{(h_2,\overline{h_1})}\}$ \hfill\COMMENT{\small Initialise a set of sectors to explore with the two upper quadrants.}
  \WHILE{$S\neq \emptyset$} \label{alg:while}
  \STATE Extract and remove from $S$ a sector $\angle{(h, h')}$ \label{alg:removeS}
  \STATE $(x,y), c, \ell := h$ \hfill\COMMENT{Note that $\Norm{G,w}([c])=\ell$.}
  \STATE $(x',y'), c', \ell' := h'$  \hfill\COMMENT{Similarly $\Norm{G,w}([c']) = \ell'$.}
  \renewcommand{\algorithmicrequire}{\textbf{Require:}}
  \REQUIRE $(c, c')$ is a good pair
  \STATE Compute a tight representative $c''$ of $\gamma'' := [c] + [c']$ with its norm $\ell'' := \Norm{G,w}(\gamma'')=w(c'')$ \label{alg:tight}
  \IF{$\ell'' < \ell + \ell'$  \label{alg:if}}
  \STATE $h'':= ((x+x', y+y'), c'', \ell'')$
  \STATE Insert $h''$ in $H$ between $h$ and $h'$ \label{alg:addH}
  \STATE $S :=S \cup \{\angle{(h,h'')}, \angle{(h'',h')}\}$ \label{alg:addS}
 \ENDIF
 \ENDWHILE \label{alg:endwhile}
 \STATE $H:= H\cup \overline{H}$ \label{alg:symmetrize} \hfill\COMMENT{Add the symmetric of $H$ w.r.t. the origin.}
\end{algorithmic}
\end{algorithm}

By subadditivity of the norm, the test in Line~\ref{alg:if} of Algorithm~\ref{alg:algo1} may only fail when $\Norm{G,w}([c'']) = \Norm{G,w}([c]) + \Norm{G,w}([c'])$. It then follows from Claim~\ref{clm:prune} and the preceding discussion on our pruning strategy that we are not missing any direction of extremal points in the upper plane when adding homology classes in Line~\ref{alg:addH}. Moreover, Line~\ref{alg:symmetrize} and
the fact that the unit ball is centrally symmetric ensure that the above algorithm indeed computes a set of homology classes whose normalized vectors contains the extremal points of $\Ball{G,w}$. It remains to explain how to perform the computation in Line~\ref{alg:tight} and to analyse the complexity of Algorithm~\ref{alg:algo1}.

\begin{lemma}\label{lem:addSimple}
  Let $(\alpha,\beta)$ be a homology basis such that $\alpha$ and $\beta$ have representatives, respectively $c_\alpha$ and $c_\beta$, forming a good pair. We can compute a tight representative $c_{\alpha+\beta}$ for $\alpha+\beta$ in $O(n\log\log n)$ time. This reduces to $O(n)$ in the unweighted case.
\end{lemma}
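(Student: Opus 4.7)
The plan is to reduce the computation of a tight representative of $\alpha+\beta$ to a planar shortest path problem on a graph of size $O(n)$, obtained by cutting $T$ along $c_\alpha \cup c_\beta$ and gluing a bounded number of copies of the result.

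Since $(c_\alpha,c_\beta)$ is a good pair and $(\alpha,\beta)$ is a basis of $\HoneT$, the union $c_\alpha \cup c_\beta$ is connected and, by a standard Euler characteristic argument, its complement in $T$ is a single open disk. Cutting $T$ along $c_\alpha \cup c_\beta$ therefore produces a planar polygonal disk $P$ of combinatorial size $O(n)$, whose boundary is made up of copies of arcs of $c_\alpha$ and $c_\beta$ with identifications that reglue it to $T$.

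Next I observe that, by subadditivity of $\Norm{G,w}$,
\[
\Norm{G,w}(\alpha+\beta) \le \Norm{G,w}(\alpha) + \Norm{G,w}(\beta) = w(c_\alpha)+w(c_\beta),
\]
so a tight representative is bounded in length. Lifting to the universal cover $q:\mR^2 \to T$, such a representative is a shortest path in $q^{-1}(G)$ from some lift $\tilde v$ of a vertex $v \in V(G)$ to its translate $\tilde v + \tau_\alpha + \tau_\beta$. Using the above length bound together with a surgery argument that replaces any excursion by a shortcut along the tight cycles $c_\alpha$ or $c_\beta$, I argue that such a lifted path must stay inside a finite window $W$ formed by a constant number of adjacent copies of $P$ glued along their identified boundaries. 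The window $W$ is itself a planar graph of size $O(n)$.

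The remaining step is to find, over all $v\in V(G)$, the shortest path in $W$ between the lift of $v$ in the central copy of $P$ and its translate in the diagonally adjacent copy. In the unweighted case, a single breadth-first search sweep can compute all these distances in $O(n)$ time, after which the minimum matching pair is read off. In the weighted case, the multiple-source shortest path machinery on the planar graph $W$, as used in the proof of Lemma~\ref{lem:short-basis}, combined with a Fredman--Willard priority-queue refinement suited to planar graphs, yields the claimed $O(n\log\log n)$ bound. The minimum distance so found equals $\Norm{G,w}(\alpha+\beta)$, and the winning path projects back to a tight simple representative $c_{\alpha+\beta}$ in $G$ extractable in linear additional time.

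I expect the main obstacle to be rigorously justifying the constant-size window $W$: one must rule out the possibility that a tight lift visits a fundamental domain outside $W$, by showing that any such excursion admits a strictly shorter replacement obtained by shortcutting along $c_\alpha$ or $c_\beta$, which would contradict tightness. Once this structural claim is established, the algorithm is essentially a black-box planar shortest path computation.
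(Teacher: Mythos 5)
Your high-level strategy mirrors the paper's: cut $T$ along $c_\alpha\cup c_\beta$, lift to the universal cover, and argue that a tight representative of $\alpha+\beta$ has a lift inside a bounded union of translates of the fundamental domain. However, there are two genuine gaps.

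First, the structural claim that the tight lift stays inside a constant-size window $W$ is the crux of the lemma, and you rightly flag it as unproved. The paper resolves it not by a generic ``shortcut along $c_\alpha$ or $c_\beta$'' surgery, but by identifying the relevant region explicitly and proving it is \emph{convex}: since $c_\alpha$ and $c_\beta$ are tight, every bi-infinite concatenation of their lifts is a geodesic line, so these lines bound convex half-planes, and the region (one or two adjacent hexagonal domains, plus possibly two lifts of the shared arc $p_{\alpha\beta}$) is an intersection of four such half-planes. This convexity is what guarantees that a shortest path between the two relevant boundary vertices exists entirely inside the region. The paper also needs a careful case split, depending on whether the entry vertex of the lift lies on $p_\alpha$ or on $p_{\alpha\beta}$, and on whether $p_{\alpha\beta}$ has the same or opposite orientation in $c_\alpha$ and $c_\beta$; none of these subcases appear in your sketch, and without the convexity argument the shortcut idea alone does not close the gap.

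Second, the algorithmic step is not correct as stated. The quantity to compute is $\min_v \dist(\tilde v,\, \tilde v+\tau_\alpha+\tau_\beta)$ over all boundary vertices $v$, which is an all-pairs boundary distance problem, not a single-source one. A single BFS, or even a multi-source BFS, computes distances from a fixed source (or to the nearest of a set of sources), not the distance from each $\tilde v$ to its own specific translate; so the claimed $O(n)$ bound in the unweighted case does not follow from a ``single breadth-first search sweep.'' For the weighted case, Klein's multiple-source shortest-path algorithm runs in $O(n\log n)$, and a Fredman--Willard style priority queue does not improve this here: those structures exploit integer word operations and do not apply under the real-RAM model with arbitrary real weights. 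The paper instead reaches $O(n\log\log n)$ by a different reduction: it glues two boundary sides of the convex region to form an annulus and computes a shortest generating cycle using the planar min-cut/max-flow algorithm of Italiano et al. (and, in the unweighted case, linear-time planar max-flow by Weihe and by Eisenstat--Klein). This is a qualitatively different route from a priority-queue speedup of MSSP, and substituting one for the other is where your time bound fails.
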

\begin{proof}
  By hypothesis, $c_\alpha$ and $c_\beta$ intersect along a connected path $p_{\alpha\beta}$. There are two cases to consider according to whether $p_{\alpha\beta}$ is oriented the same way or not in  $c_\alpha$ and $c_\beta$. Note that $p_{\alpha\beta}$ may be reduced to a vertex.
  \begin{itemize}
  \item Assume $p_{\alpha\beta}$ is oriented the same way in  $c_\alpha$ and $c_\beta$. We can thus write $c_\alpha = p_\alpha\cdot  p_{\alpha\beta}$ and $c_\beta = p_\beta\cdot  p_{\alpha\beta}$ for some paths $p_\alpha,  p_\beta$ in $G$.  We cut $T$ along the union $c_\alpha\cup c_\beta$, viewed as a subgraph of $G$. We obtain a hexagonal plane domain $\mathcal D$ with sides $p_\beta$, $p_{\alpha\beta}$, $p_\alpha$, $\overline{p_\beta}$, $\overline{p_{\alpha\beta}}$, $\overline{p_\alpha}$ in the clockwise order around the boundary of $\mathcal D$. See Figure~\ref{fig:cutting}.
        \begin{figure}[h]
      \centering
      \includesvg[.75\textwidth]{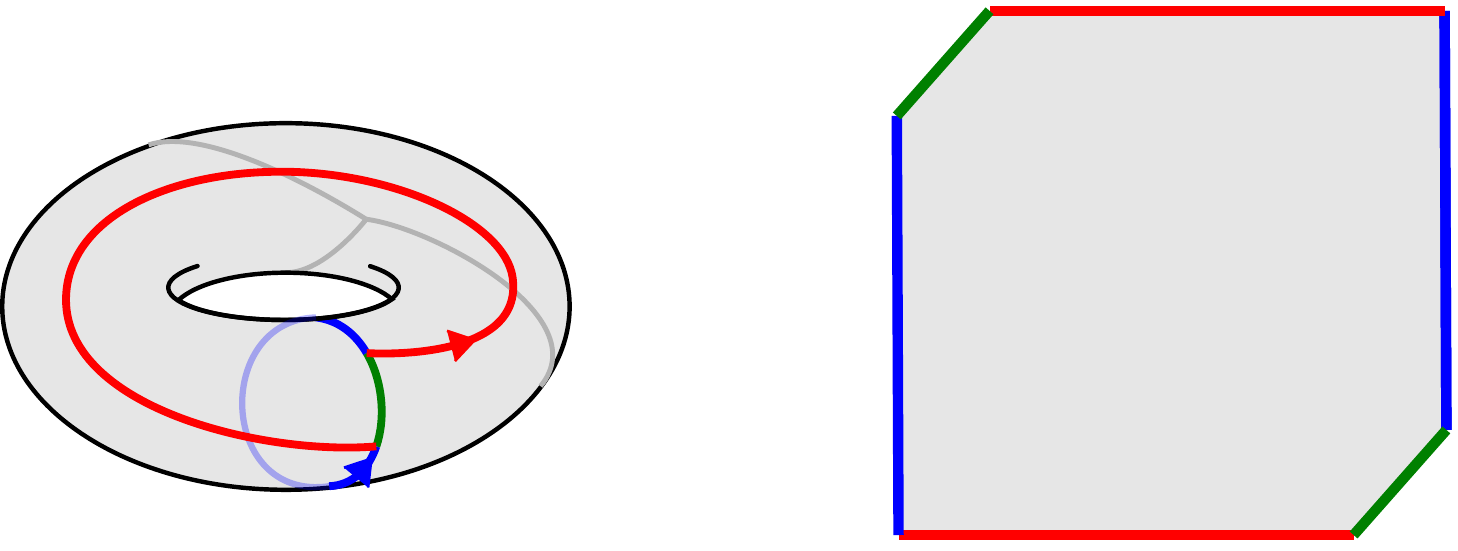}
      \caption{Cutting $T$ along $c_\alpha\cup c_\beta$.}
      \label{fig:cutting}
    \end{figure}
    The universal cover of $T$ is tessellated by translated copies of $\mathcal D$ glued along their sides so that the side $p$ of a domain is glued to the side $\overline{p}$ of the adjacent domain. See Figure~\ref{fig:uniCover}.
    \begin{figure}[h]
      \centering
      \includesvg[\textwidth]{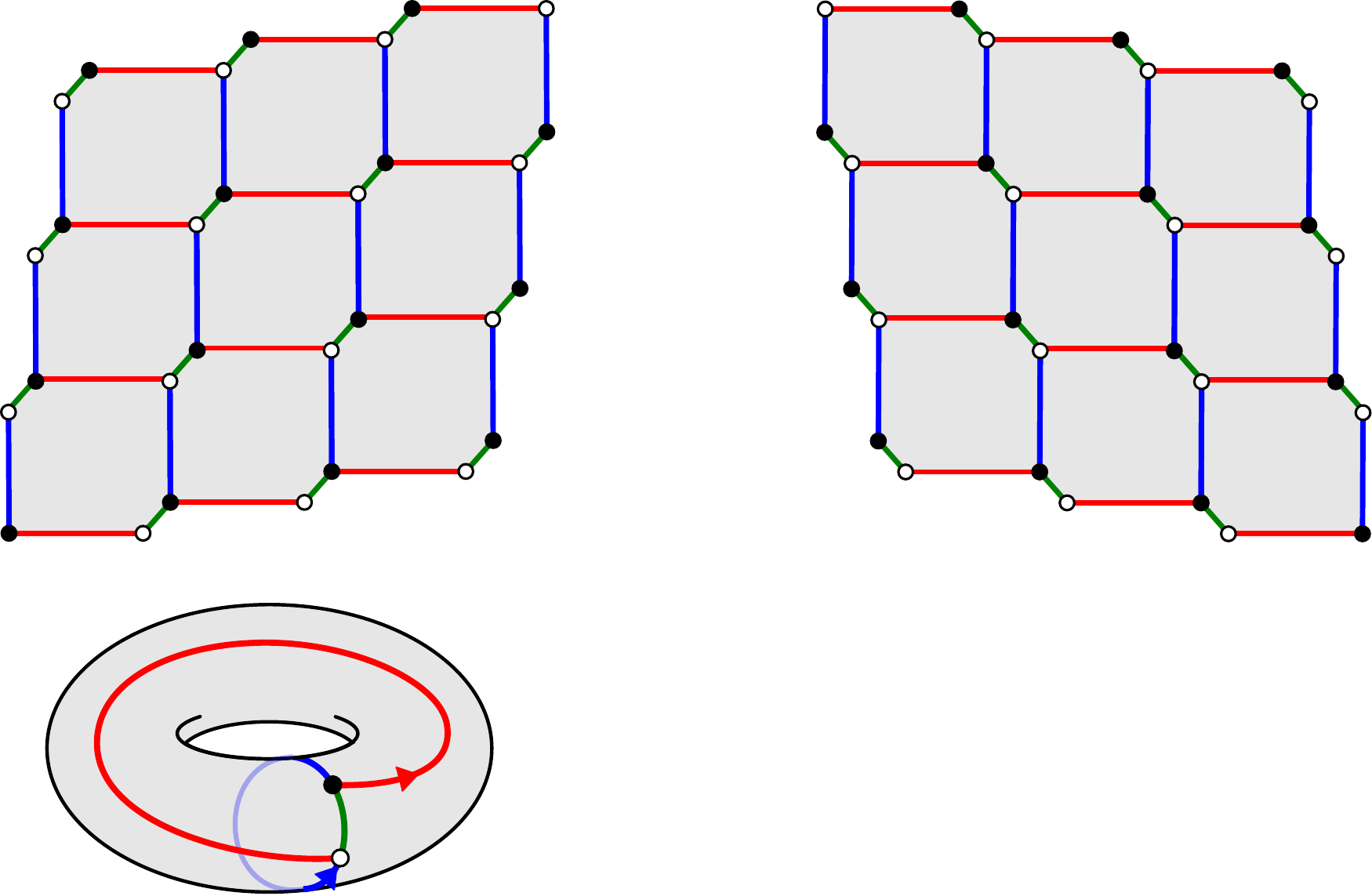}
      \caption{The universal cover of $T$. Left: $p_{\alpha\beta}$ is oriented consistently  with both $c_\alpha$ and $c_\beta$. Right: $p_{\alpha\beta}$  has opposite orientation in $c_\alpha$ and $c_\beta$.}
      \label{fig:uniCover}
    \end{figure}

    As before, let $\gamma=\alpha+\beta$. Since $(\alpha,\gamma)$ is a positively oriented homology basis, we know that $\langle\alpha,\gamma\rangle=1$. Hence, any representative of $\gamma$ must cross $c_\alpha$.  Let $c_\gamma$ be a tight representative of $\gamma$ with a lift $\tilde{c}_\gamma$ in the universal cover starting from a vertex $\tilde{v}$ on the side $p_\alpha$ or $p_{\alpha\beta}$ of a domain $\mathcal{D}_0$ and ending at the vertex $\tilde{w}:=\tilde{v}+\tau_\alpha+\tau_\beta$, where $\tau_\alpha$ and $\tau_\beta$ are the covering translations corresponding to $\alpha$ and $\beta$ respectively. There are two situations according to whether $\tilde{v}$ lies on the side $p_\alpha$ or $p_{\alpha\beta}$ of $\mathcal{D}_0$. See Figure~\ref{fig:D0D1}.
    \begin{figure}[h]
      \centering
      \includesvg[\textwidth]{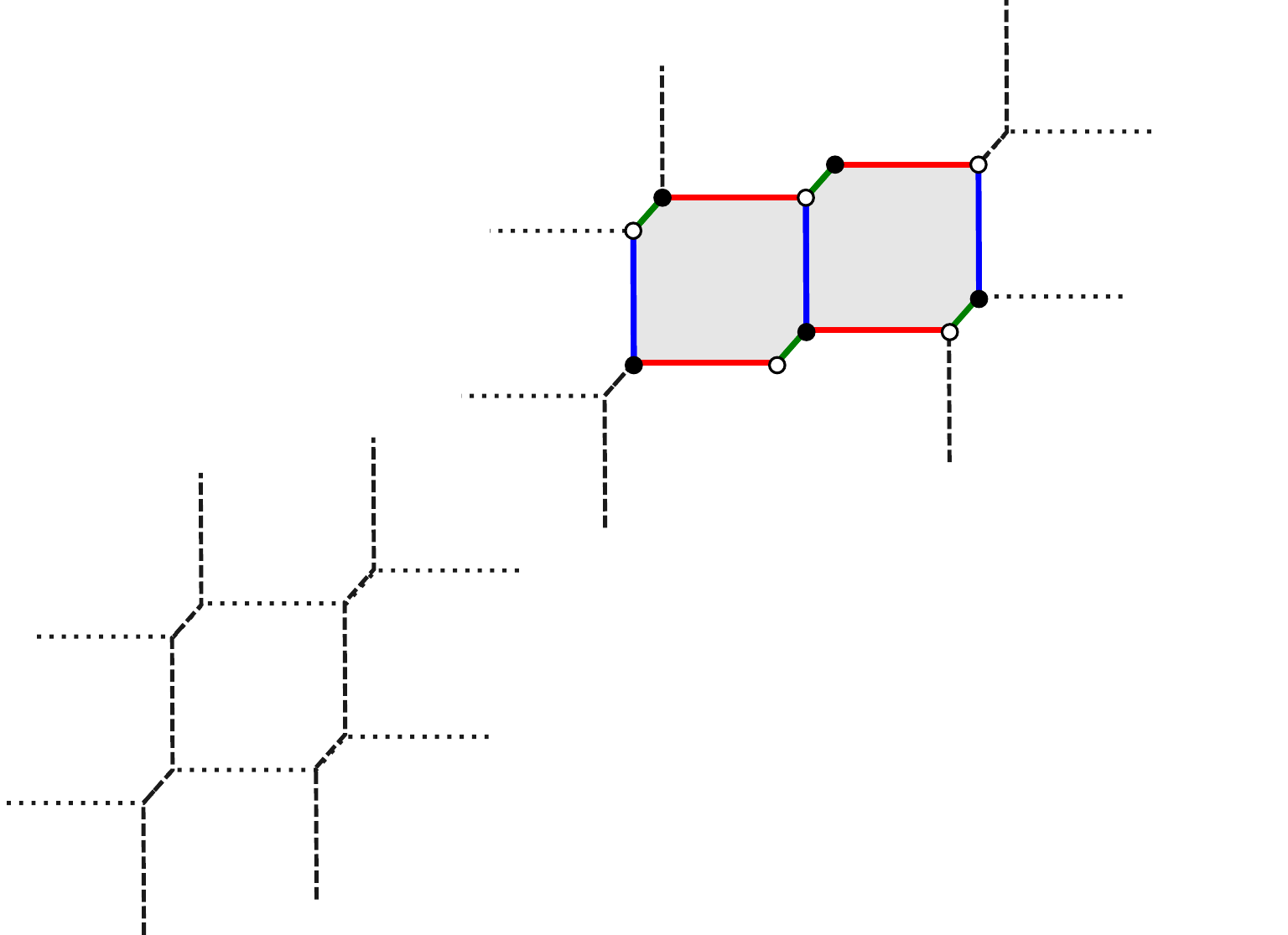}
      \caption{Top: if $c_\gamma$ crosses $p_\alpha$, then $c_\gamma$ has a lift in $\mathcal{D}_0\cup \mathcal{D}_1$. The dotted and broken lines are supporting geodesics for the considered regions. Bottom: when $c_\gamma$ crosses $p_{\alpha\beta}$ it has a lift contained in the union of $\mathcal{D}_1$ with two lifts of  $p_{\alpha\beta}$. We can shift the origin of the lift to $\tilde{v}_0$. }
      \label{fig:D0D1}
    \end{figure}
    \begin{itemize}
    \item If $\tilde{v}$ lies on the side $p_\alpha$ of $\mathcal{D}_0$ then $\tilde{w}$ belongs to the side $\overline{p_\alpha}$ of $\mathcal{D}_1:=\mathcal{D}_0 + \tau_\alpha$. We claim that $\mathcal{D}_0\cup \mathcal{D}_1$ is convex, \ie, any two vertices in $\mathcal{D}_0\cup \mathcal{D}_1$ can be joined by a shortest path contained in  $\mathcal{D}_0\cup \mathcal{D}_1$. Indeed, since $c_\alpha$ is tight, any bi-infinite concatenation of its lifts is a geodesic line in the weighted lift of $G$ in the universal cover of $T$. Similarly,  any bi-infinite concatenation of lifts of $c_\beta$ is a geodesic line  and thus delimits two convex half-planes. See the dotted and broken lines in Figure~\ref{fig:D0D1}. It follows that $\mathcal{D}_0\cup \mathcal{D}_1$ is the intersection of four half-planes, hence is convex. We can thus assume that $c_\gamma$ has a lift in $\mathcal{D}_0\cup \mathcal{D}_1$ with endpoints  $\tilde{v}$ and $\tilde{w}$ on the boundary of $\mathcal{D}_0\cup \mathcal{D}_1$. Let us first describe a method relying on the multiple-source shortest path algorithm of Klein for sources on the outer face of a plane graph (see~\cite{k-msspp-05} and~\cite[Th. 3.8]{cce-msspe-13}) as in the proof of Lemma~\ref{lem:short-basis}. This algorithm builds a datastructure in $O(n\log n)$ time that allows to query for the distance between any vertex on the boundary and any other vertex in $O(\log n)$ time. We need to query for $O(n)$ pairs of the form $(\tilde{v},\tilde{v}+\tau_\alpha+\tau_\beta)$ and retain a pair $(\tilde{v},\tilde{w})$ that minimizes the distance. In order to find an explicit representative of $\tilde{c}_\gamma$, we can in a second step run Dijkstra's algorithm with source $\tilde{v}$ in $\mathcal{D}_0\cup \mathcal{D}_1$. This yields an $O(n\log n)$ time algorithm to compute $\tilde{c}_\gamma$. Alternatively, we can glue the $p_{\alpha}$ side of $\mathcal{D}_0$ with the $\overline{p_{\alpha}}$ side of $\mathcal{D}_1$ and search for the shortest generating cycle of the resulting annulus in $O(n\log n)$ time as in~\cite[Prop. 2.7(e)]{de2010}, or, more efficiently, in time $O(n\log\log n)$ as in~\cite[Theorem 7]{italiano2011}. In the unweighted case, the shortest generating cycle corresponds to a minimum $(s,t)$-cut in a planar graph, or by duality to a maximum $(s,t)$-flow, which can be computed in linear time~\cite{w-edpup-97,ek-ltamf-13}.
      
    \item If $\tilde{v}$ lies on the side $p_{\alpha\beta}$ of $\mathcal{D}_0$, note that this side is incident to $\mathcal{D}_1$ and $\tilde{w}$ lies on the translate $p_{\alpha\beta}+\tau_\alpha+\tau_\beta$, which is also incident to $\mathcal{D}_1$. The union $p_{\alpha\beta}\cup \mathcal{D}_1\cup (p_{\alpha\beta}+\tau_\alpha+\tau_\beta)$ is convex since it is the intersection of the four supporting half-planes of $\mathcal{D}_1$. See Figure~\ref{fig:D0D1}, bottom-left. We can thus assume that $\tilde{c}_\gamma$ passes through the common origin $\tilde{v}_0$ of (lifts of) $p_{\alpha}$ and $p_{\beta}$ in $\mathcal{D}_1$. 
We can then change the origin of $\tilde{c}_\gamma$ to $\tilde{v}_0$ so that $\tilde{c}_\gamma$ is the concatenation of a shortest path between $\tilde{v}_0$ and $\tilde{w}_0$ in $\mathcal{D}_1$ with a lift of $p_{\alpha\beta}$, where  $\tilde{w}_0$ is the common target of  $\overline{p_{\alpha}}$ and $\overline{p_{\beta}}$ in $\mathcal{D}_1$. This shortest path can be computed in $O(n\log n)$ time by Dijkstra's algorithm or in linear time with Henziger et al.'s algorithm~\cite{hkrs-fspapg-97}. 
 \end{itemize}
 \item We now consider the case when $p_{\alpha\beta}$ is oriented differently in  $c_\alpha$ and $c_\beta$. We can thus write $c_\alpha = p_\alpha\cdot  \overline{p_{\alpha\beta}}$ and $c_\beta = p_\beta\cdot  p_{\alpha\beta}$ as in Figure~\ref{fig:uniCover}, right. As in the previous case, $c_\gamma$ must cross $c_\alpha$ and must have a lift $\tilde{c}_\gamma$ starting from a vertex on the side $p_\alpha$ or  $\overline{p_{\alpha\beta}}$ of a domain $\mathcal{D}_0$. This again leads to two cases as pictured in Figure~\ref{fig:D0D1-bis}.
    \begin{figure}[h]
      \centering
      \includesvg[.75\textwidth]{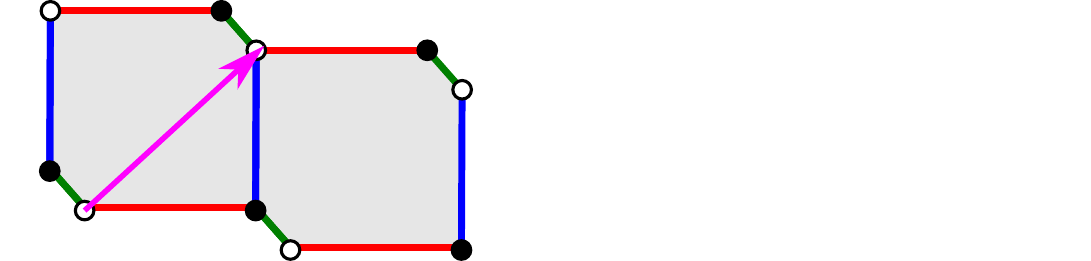}
      \caption{Left, the source vertex of $\tilde{c}_\gamma$ lies on the side $p_\alpha$ of a domain $\mathcal{D}_0$. Its target vertex thus lies on the boundary of $\mathcal{D}_1 =\mathcal{D}_0 +\tau_\alpha$. Being convex, $\mathcal{D}_0\cup \mathcal{D}_1$ contains a lift of some tight representative of $\gamma$. Right, the source vertex of $\tilde{c}_\gamma$ lies on the side $\overline{p_{\alpha\beta}}$ of  $\mathcal{D}_0$.}
      \label{fig:D0D1-bis}
    \end{figure}
    We may argue as before to compute a tight representative of $\gamma$ in $O(n\log\log n)$ time, or in linear time in the unweighted case.
  \end{itemize}
  In all cases we may compute a tight representative of $\gamma$ in $O(n\log\log n)$ time, or in linear time in the unweighted case.
\end{proof}
As in Lemma~\ref{lem:addSimple},  let $(\alpha,\beta)$ be a homology basis such that $\alpha$ and $\beta$ have representatives, respectively $c_\alpha$ and $c_\beta$, forming a good pair. We call a \textbf{line} any bi-infinite concatenation of lifts of $c_\alpha$ or of $c_\beta$ in the universal cover of $T$.
\begin{lemma}\label{lem:addSimple-bis}
  The tight representative $c_{\alpha+\beta}$ computed in Lemma~\ref{lem:addSimple} can be modified in $O(n)$ time in order to satisfy the additional following property (P):  The intersection of any lift of $c_{\alpha+\beta}$ in the universal cover with any line is either empty or a common connected subpath.
\end{lemma}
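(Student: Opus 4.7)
The plan is to start from the tight representative $\tilde c_\gamma$ of $\gamma = \alpha + \beta$ produced by Lemma~\ref{lem:addSimple}, and to enforce property~(P) by a constant number of local ``straightening'' operations. First observe that property~(P) is translation invariant: the intersection of $\tilde c_\gamma + \tau$ with a line $L$ coincides, up to $\tau$, with the intersection of $\tilde c_\gamma$ with the line $L - \tau$. It therefore suffices to check the property for one fixed lift $\tilde c_\gamma$ against every line of the universal cover.

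By the proof of Lemma~\ref{lem:addSimple}, the lift $\tilde c_\gamma$ lies inside a convex region $R$ made of at most two adjacent copies of the fundamental domain~$\mathcal{D}$. The tiles of the universal cover are delimited by the $\alpha$-lines and the $\beta$-lines, so $R$ is bounded, and crossed, by only $O(1)$ such lines. Any line disjoint from $R$ has empty intersection with $\tilde c_\gamma$ and trivially satisfies~(P). Only $O(1)$ lines can therefore cause trouble.

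For each such line $L$, I will perform one swap. Traverse $\tilde c_\gamma$ from source to target in $O(n)$ time and record the sequence of its vertices that lie on $L$. If this sequence forms a disconnected union of arcs, let $p$ (resp.\ $q$) be the first (resp.\ last) such vertex encountered along $\tilde c_\gamma$. Both $\tilde c_\gamma|_{p \to q}$ and $L|_{p \to q}$ are shortest paths between $p$ and $q$ in the universal cover: the former as a subpath of the geodesic $\tilde c_\gamma$, the latter because $L$, being a concatenation of lifts of the tight cycle $c_\alpha$ or $c_\beta$, is geodesic. The two subpaths thus have equal weight, and replacing $\tilde c_\gamma|_{p \to q}$ by $L|_{p \to q}$ produces a new tight representative of $\gamma$ whose intersection with $L$ becomes the single connected arc $L|_{p \to q}$.

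The delicate point, and the main obstacle, is to choose the order of swaps so that processing one line does not reintroduce a disconnection on another. My plan is to process all $\alpha$-lines first, then all $\beta$-lines. Two distinct parallel lines are disjoint, so a swap along a line $L$ can only affect its intersection with a non-parallel, previously fixed line $L'$ through the piece $L|_{p \to q} \cap L'$, which is a subarc of the single connected arc $L \cap L'$ (a lift of $p_{\alpha\beta}$, connected thanks to the good-pair assumption). Combining this with the fact that the old connected intersection $\tilde c_\gamma \cap L'$ can only be shortened to a prefix or a suffix after the replacement, a short case analysis on the relative positions of $p$, $q$ and the endpoints of $\tilde c_\gamma \cap L'$ should confirm that the new intersection with $L'$ remains connected. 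With $O(1)$ lines and $O(n)$ time per swap, the overall running time is $O(n)$.
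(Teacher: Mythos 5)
Your proposal follows the paper's overall strategy (the same ``first--last intersection'' substitution inside the convex region $R$, the same $O(1)$ lines, the same running time), but it has two gaps, one of which I believe is real.

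The first and more serious issue concerns the translation-invariance reduction. You argue that since $\tilde c_\gamma + \tau$ intersected with a line $L$ coincides, up to $\tau$, with $\tilde c_\gamma$ intersected with $L - \tau$, it suffices to verify property~(P) for the single chosen lift. This is correct if ``lift'' means a lattice translate of $\tilde c_\gamma$. But the paper needs the stronger interpretation in which a lift of the closed walk $c_\gamma$ is any portion of the $\gamma$-line from some vertex $\tilde u$ to $\tilde u + \tau_\gamma$, i.e.\ a lift obtained by \emph{shifting the basepoint} of the cycle. These are not translates of your fixed $\tilde c_\gamma$. Concretely, after your substitutions, $\tilde c_\gamma$ can have connected intersection with every line crossing $R$, yet the concatenation of two consecutive lifts straddling the boundary line of $R$ may still form a bigon with that line --- exactly the situation depicted by the paper (two consecutive lifts of $c_\gamma$ crossing the separating line of $\mathcal{D}_0$ and $\mathcal{D}_1$ in two components). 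The paper handles this by either re-choosing the source vertex of $c_\gamma$ so it lies on $p_{\alpha\beta}$, or by performing one additional substitution. Your proof does not consider this case, and it is precisely this stronger version of~(P) that is later invoked in the proof of Lemma~\ref{lem:good} (``considering two successive lifts \dots would contradict Property~(P)''). Without it, the downstream argument breaks.

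The second issue is that your case analysis is only sketched, and one of its few explicit claims is inaccurate: you say the old connected intersection $\tilde c_\gamma \cap L'$ ``can only be shortened to a prefix or a suffix'' by the replacement. When both endpoints $p,q$ of the swapped subpath lie strictly inside $\tilde c_\gamma \cap L'$, what survives of that intersection is a prefix \emph{and} a suffix, with the new piece $L|_{p \to q}$ (which then lies entirely inside $L \cap L'$) bridging them; and when $\tilde c_\gamma \cap L'$ lies strictly inside the swapped subpath, the intersection with $L'$ is entirely replaced by $L|_{p \to q} \cap L'$. The paper's proof spells out five such cases (Figure~\ref{fig:simplifying_cases}) and shows each preserves connectedness. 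You would need to carry out a comparable analysis; ``should confirm'' is not a proof, and a correct statement of the possible outcomes is a prerequisite for it. Incidentally, the fixed order (all $\alpha$-lines then all $\beta$-lines) that you impose is not needed --- the paper's argument applies to an arbitrary processing order --- though it does no harm.
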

\begin{proof}
  Write $\gamma=\alpha+\beta$ as before. Since $c_\alpha$ and $c_\beta$
  are tight, every line must be geodesic.
We consider the at most five lines intersecting the convex region used to compute $\tilde{c_\gamma}$ in Lemma~\ref{lem:addSimple}. For each one in turn we compute the first and last intersections of $\tilde{c_\gamma}$ with this line and replace in $\tilde{c_\gamma}$ the subpath  in-between these intersections by the subpath between the same vertices in the line. This
does not change the length of $c_\gamma$. We still denote by $c_\gamma$ the resulting closed walk.
See Figure~\ref{fig:line-intersect}.
  \begin{figure}[h]
    \centering
    \includesvg[.65\textwidth]{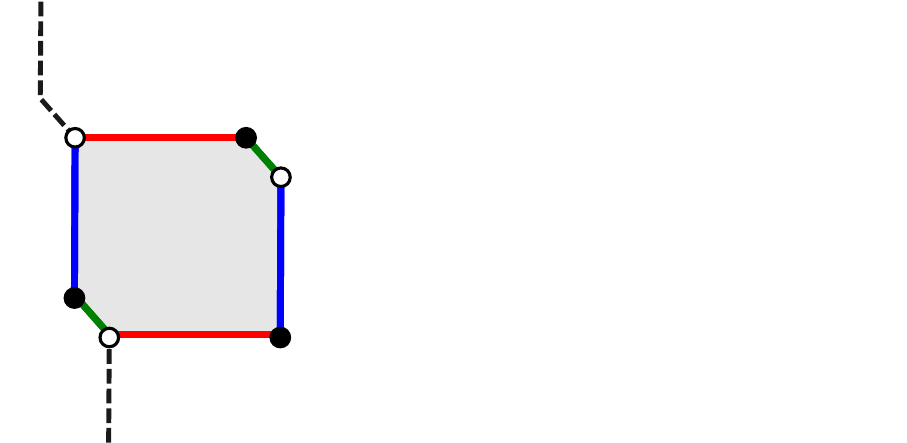}
    \caption{Removing a bigon between a line and a lift of $c_\gamma$.}
    \label{fig:line-intersect}
  \end{figure}
  When the source point $\tilde{v}$ of $\tilde{c_\gamma}$ lies on $p_\alpha$, as in Figure~\ref{fig:D0D1}, top and Figure~\ref{fig:D0D1-bis}, left, there remains the possibility that some other lift of $c_\gamma$ intersects the line separating ${\cal D}_0$ and ${\cal D}_1$ in two components. An example is given in Figure~\ref{fig:line-intersect-bis} where two consecutive lifts of $c_\gamma$ are pictured.
\begin{figure}[h]
    \centering
    \includesvg[\textwidth]{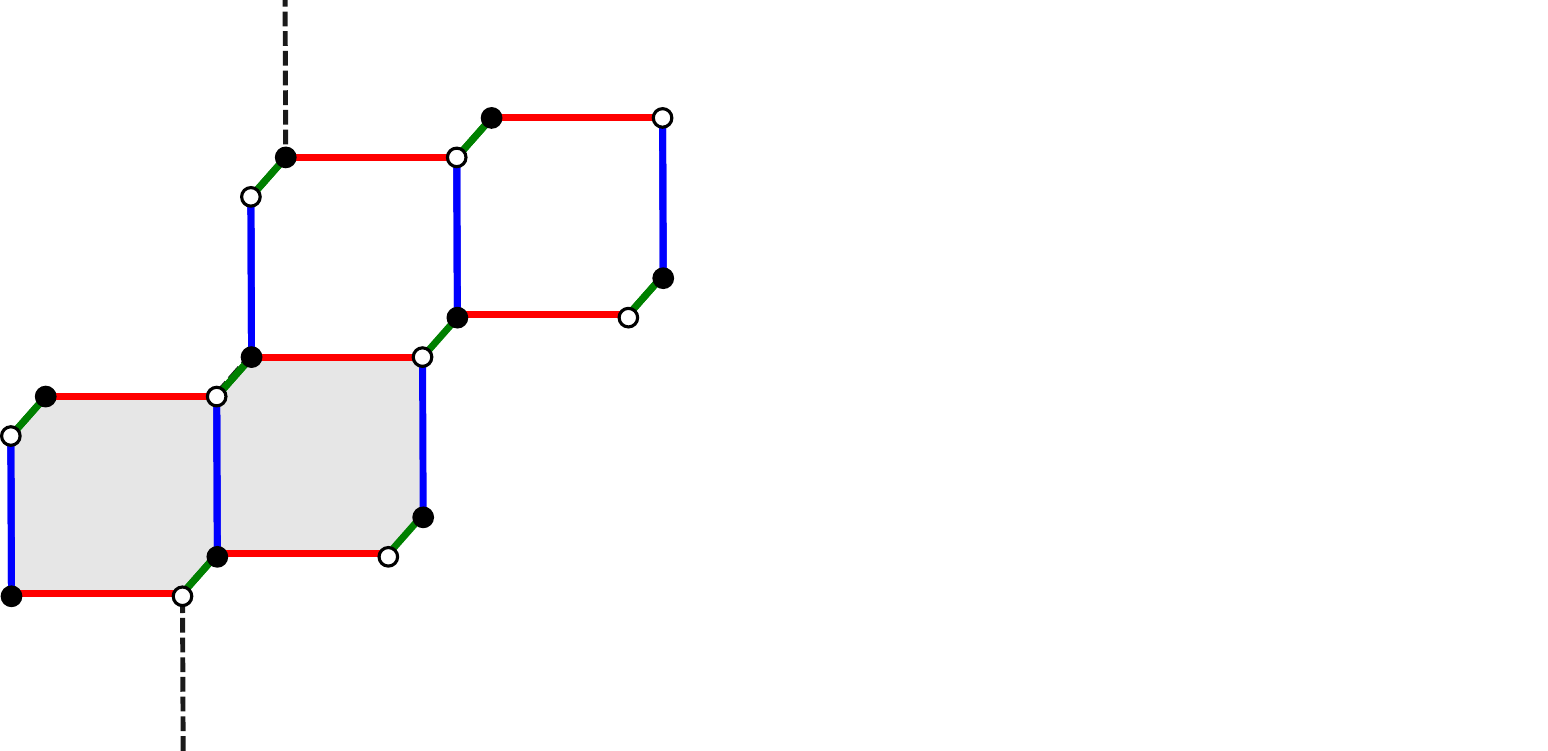}
    \caption{A bigon between a line and a lift of $c_\gamma$ that only appears when shifting the source vertex.}
    \label{fig:line-intersect-bis}
  \end{figure}
This shows that the source vertex of $c_\gamma$ could be chosen to lie on $p_{\alpha\beta}$ and we are back to the cases of Figure~\ref{fig:D0D1}, bottom and Figure~\ref{fig:D0D1-bis}, right. 

Alternatively, we can perform a last subpath substitution as in Figure~\ref{fig:line-intersect-bis} to obtain a representative $c_\gamma$ satisfying $P$. The $O(1)$ subpath substitutions take linear time.

Note that when treating the intersection of $\tilde{c_\gamma}$ with a new line $l$, we do not break the connectedness of its intersection with any of the previously treated lines $l'$. Indeed, let $d'$ be a connected path $\tilde{c_\gamma} \cap l'$ and let $d$ be a subpath of $\tilde{c_\gamma}$ between the first and the last intersection of $\tilde{c_\gamma}$ with $l$. If $l$ and $l'$ do not intersect, then $d'$ either lies inside or outside of $d$. After treating the line $l$, in the first case $\tilde{c_\gamma} \cap l'$ becomes empty, while in the second case this intersection does not change.

If $l$ and $l'$ do intersect, their intersection $l \cap l'$ is a connected path, possibly reduced to a vertex. There are several cases to consider, according to the relative position of the paths $d'$ and $d$:
\begin{itemize}
    \item Figure \ref{fig:simplifying_cases}, top left: if $d'$ lies outside of $d$, then $\tilde{c_\gamma} \cap l'$ is not affected by treating the line $l$;
    \item Figure \ref{fig:simplifying_cases}, top right: if $d'$ contains $d$, then the endpoints of $d$ belong to $l \cap l'$, and so $d$ itself is a subpath of $l \cap l'$ and there is no change to be done;
    \item if $d'$ is contained in $d$, there are two subcases:
        \begin{itemize}
            \item Figure \ref{fig:simplifying_cases}, bottom left: if the endpoints of $d$ lie on one side from $l \cap l'$ on the line $l$, then the intersection $\tilde{c_\gamma} \cap l'$ becomes empty;
            \item Figure \ref{fig:simplifying_cases}, bottom right: if the endpoints of $d$ lie on different sides of $l \cap l'$, then $\tilde{c_\gamma} \cap l'$ becomes $l \cap l'$;
        \end{itemize}
    \item Figure \ref{fig:simplifying_cases}, far right: if one endpoint of $d'$ is contained in $d$, while the other is not, then the endpoint inside belongs to $l \cap l'$, and so $\tilde{c_\gamma} \cap l'$ will become a concatenation of $d' \setminus d$ and a subpath of $l \cap l'$.
\end{itemize}

\begin{figure}[h]
    \centering
    \includesvg[\textwidth]{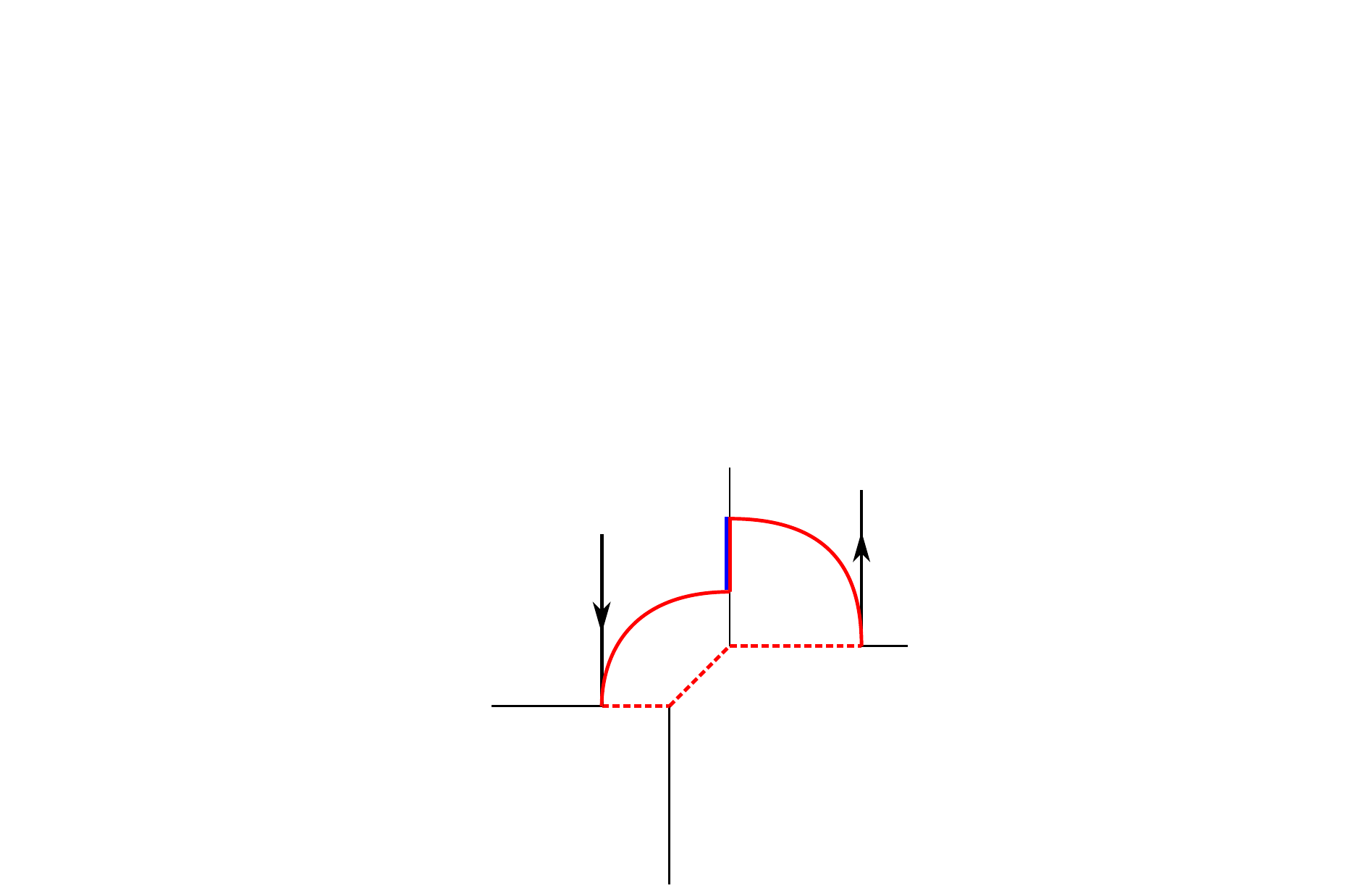}
    \caption{The lines $l$ and $l'$ intersect along a connected path. The path $\tilde{c_\gamma}$ is in bold. After treating the line $l$, $d$ is replaced by the red dashed path.}
    \label{fig:simplifying_cases}
  \end{figure}
\end{proof}

\begin{lemma}\label{lem:good}
  Let $(\alpha,\beta)$ be a homology basis such that $\alpha$ and $\beta$ have representatives, respectively $c_\alpha$ and $c_\beta$, forming a good pair. If $\Norm{G,w}(\alpha+\beta) < \Norm{G,w}(\alpha) + \Norm{G,w}(\beta)$, then the tight representative $c_{\alpha+\beta}$ computed in Lemma~\ref{lem:addSimple-bis} is such that $(c_\alpha,c_{\alpha+\beta})$ and $(c_{\alpha+\beta},c_\beta)$ are good pairs.
\end{lemma}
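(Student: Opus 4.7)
}
Set $\gamma := \alpha + \beta$ and denote $c_\gamma := c_{\alpha+\beta}$. I will verify the three requirements of a good pair for $(c_\alpha, c_\gamma)$; the case $(c_\gamma, c_\beta)$ is entirely symmetric (swap the roles of $\alpha$ and $\beta$).

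First, the homology classes form a positively oriented basis. Using bilinearity of the pairing and the good pair property of $(c_\alpha, c_\beta)$, I compute $\langle [c_\alpha],[c_\gamma]\rangle = \langle \alpha, \alpha+\beta\rangle = \langle \alpha,\beta\rangle = 1$, and likewise $\langle [c_\gamma],[c_\beta]\rangle = \langle \alpha+\beta,\beta\rangle = 1$. Tightness of $c_\alpha$ is inherited, and tightness of $c_\gamma$ holds by construction in Lemma~\ref{lem:addSimple-bis}.

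Second, I show $c_\gamma$ is simple. Lift to the universal cover: by the construction of Lemma~\ref{lem:addSimple}, the lift $\tilde{c}_\gamma$ is a shortest path from $\tilde{v}$ to $\tilde{v}+\tau_\alpha+\tau_\beta$ inside a geometrically convex region $R$ (either $\mathcal{D}_0\cup \mathcal{D}_1$ or one of its subcases), and hence is simple in $\mR^2$. A self-intersection of $c_\gamma$ at a vertex $v\neq v_0$ corresponds to two distinct vertices $\tilde{v}_1,\tilde{v}_2 \in \tilde{c}_\gamma$ with $\tau := \tilde{v}_2 - \tilde{v}_1 = k\tau_\alpha + l\tau_\beta \neq 0$. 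The bounded extent of $R$ forces $|k|,|l|$ small. Splitting $\tilde{c}_\gamma$ at $\tilde{v}_1,\tilde{v}_2$ yields a middle subpath projecting to a closed walk of homology $k\alpha+l\beta$ and a prefix-plus-suffix projecting to one of homology $(1-k)\alpha+(1-l)\beta$. Tightness of $c_\gamma$ together with subadditivity therefore forces the equality $\Norm{G,w}(\gamma) = \Norm{G,w}(k\alpha+l\beta) + \Norm{G,w}((1-k)\alpha+(1-l)\beta)$. The main step is showing that the only $(k,l)$ realizable in $R$ are, up to relabeling and after removing the endpoint case $(1,1)$, the values $(1,0)$ and $(0,1)$; this uses property (P) (which constrains how $\tilde{c}_\gamma$ can reach two translate-equivalent vertices on the same $c_\alpha$- or $c_\beta$-line) together with the convexity of $R$. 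In each remaining case the displayed equation reduces to $\Norm{G,w}(\gamma) = \Norm{G,w}(\alpha)+\Norm{G,w}(\beta)$, contradicting the strict inequality assumed in the lemma. I expect this case analysis to be the main obstacle.

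Third, I show $c_\alpha \cap c_\gamma$ is a connected path. By property (P) of Lemma~\ref{lem:addSimple-bis}, $\tilde{c}_\gamma$ meets each $c_\alpha$-line in a connected (possibly empty) subpath. Because $\tilde{c}_\gamma \subset R$, which is bounded by the two $c_\alpha$-lines $L_\alpha$ and $L_\alpha+\tau_\beta$, only these two lines are met, and the intersections are connected subpaths of the form $[\tilde{v},\tilde{v}']\subset L_\alpha$ (containing the source) and $[\tilde{w}',\tilde{w}]\subset L_\alpha+\tau_\beta$ (containing the target). Projecting to $T$, both subpaths lie on the circle $c_\alpha$ and pass through the basepoint $v_0 = q(\tilde{v}) = q(\tilde{w})$. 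The length of each projected subpath is strictly less than $w(c_\alpha)$ (otherwise $c_\alpha$ would be a subwalk of $c_\gamma$, making $w(c_\gamma)\geq w(c_\alpha)+\Norm{G,w}(\beta)$, contradicting the hypothesis), so each projects injectively. Finally, since $\tilde{c}_\gamma$ is a shortest path moving overall from $\tilde{v}$ to $\tilde{v}+\tau_\alpha+\tau_\beta$, the bottom extension from $\tilde{v}$ to $\tilde{v}'$ goes in one direction along $L_\alpha$ while the top extension from $\tilde{w}'$ to $\tilde{w}$ arrives at $\tilde{w}$ from the opposite direction along $L_\alpha+\tau_\beta$; projected onto the circle $c_\alpha$, these extensions emanate from $v_0$ in opposite directions, so their union is a single connected arc of $c_\alpha$ through $v_0$. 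This argument, with $\alpha$ and $\beta$ swapped, also handles the intersection of $c_\gamma$ with $c_\beta$, completing the verification that both pairs are good.
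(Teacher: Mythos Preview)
Your plan follows the paper's approach closely, and the overall architecture (basis check, simplicity via a translation-vector case analysis, connectedness via property~(P)) is right. Two places deserve more care, however.

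For simplicity, your claim that property~(P) and convexity alone whittle the possible translations down to $(k,l)\in\{(1,0),(0,1)\}$ is too optimistic. In the two-domain region $R=\mathcal{D}_0\cup\mathcal{D}_1$ the translates $\tau_{2\alpha}$, $\tau_{-\alpha}$, $\tau_{\alpha-\beta}$, $\tau_{2\alpha\pm\beta}$ are all geometrically realizable (i.e.\ $R\cap(R+\tau)\neq\emptyset$), and each needs its own argument: $\tau_{-\alpha}$ is excluded by~(P), $\tau_{2\alpha}$ requires a Jordan-curve argument to produce an intermediate $\tau_{\pm\alpha}$ pair, and $\tau_{\pm(\alpha-\beta)}$ is handled either by~(P) or by shifting the basepoint so that the split yields homologies $\alpha$ and $\beta$. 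None of these last three reduce \emph{directly} to your displayed equality $\Norm{G,w}(\gamma)=\Norm{G,w}(\alpha)+\Norm{G,w}(\beta)$; you have to reduce them to the $(1,0)$/$(0,1)$ cases first. You flag this as ``the main obstacle,'' which it is, but the sketch as written would not get you there.

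For connectedness of $c_\gamma\cap c_\beta$, the phrase ``with $\alpha$ and $\beta$ swapped'' hides a genuine asymmetry: when $R=\mathcal{D}_0\cup(\mathcal{D}_0+\tau_\alpha)$, the region is bounded by two $c_\alpha$-lines but meets \emph{three} $c_\beta$-lines, so your two-line argument does not transfer. The paper closes this gap by observing that if $\tilde{c}_\gamma$ met two distinct $c_\beta$-lines, then two consecutive lifts of $c_\gamma$ would intersect the middle $c_\beta$-line in two separate pieces, contradicting~(P); hence only one $c_\beta$-line is actually touched. Also, your ``opposite directions'' claim in the $c_\alpha$ case is not justified (and not needed: two arcs of a circle through a common point always have connected union).
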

\begin{proof}
  We first show that $c_\gamma$ is simple, where as usual $\gamma=\alpha+\beta$. Suppose otherwise by way of a contradiction. Then, the lift $\tilde{c_\gamma}$ of $c_\gamma$ considered in Lemma~\ref{lem:addSimple-bis} must contain two vertices $\tilde{u}$ and $\tilde{u}'$ projecting onto the same vertex $u$ in $G$ and such that $\{\tilde{u}, \tilde{u}'\}\neq \{\tilde{v}, \tilde{w}\}$, where $\tilde{v}$ and $\tilde{w}$ are the source and target vertex of $\tilde{c_\gamma}$, respectively. In particular, $\tilde{u}' = \tilde{u}+\tau$ for some covering translation $\tau$. We thus have the decomposition $\tilde{c_\gamma} = p_1\cdot p_2\cdot p_3$, where $p_1$ is a path from $\tilde{v}$ to $\tilde{u}$, $p_2$ is a path from $\tilde{u}$ to $\tilde{u}'$, $p_3$ is a path from $\tilde{u}'$ to $\tilde{w}$, and at most one of these paths is reduced to a vertex. From the proof of Lemma~\ref{lem:addSimple}, there are four cases to consider corresponding to Figure~\ref{fig:D0D1}, top and bottom, and Figure~\ref{fig:D0D1-bis}, left and right.
  \begin{itemize}
  \item 
    In Figure~\ref{fig:D0D1}, bottom, and Figure~\ref{fig:D0D1-bis}, right, we may only have $\tau=\tau_\lambda$ for\\ $\lambda\in\{0,\pm\alpha,\pm\beta,\pm\gamma,\pm(\alpha-\beta)\}$ as otherwise the region containing  $\tilde{c_\gamma}$ is disjoint from its translate by $\tau$. We show below that each of these values for $\tau$ leads to a contradiction.
    \begin{itemize}
    \item We cannot have $\tau=0$, as this would imply $\tilde{u}=\tilde{u}'$ in contradiction with the simplicity of the shortest path $\tilde{c_\gamma}$.
    \item If $\tau=\pm\tau_\gamma$, then $p_2$ or $\overline{p_2}$ is homologous to $\gamma$ but shorter than $\tilde{c_\gamma}$, in contradiction with the assumption that $c_\gamma$ is tight.
      \item If $\tau=\tau_\alpha$ then,
  putting $p=p_1\cdot (p_3-\tau)$, we have that $q(p_2)$ is homologous to $c_\alpha$ and $q(p)$  is homologous to $c_\beta$. (Recall that $q:\mR^2\to T$ is the universal cover projection on the torus.) We infer $w(p_2)\geq \Norm{G,w}(\alpha)$ and $w(p)\geq \Norm{G,w}(\beta)$. It follows then from tightness of $c_\gamma$ that
  \[\Norm{G,w}(\gamma) = w(c_\gamma) = w(p_2)+w(p) \geq \Norm{G,w}(\alpha) + \Norm{G,w}(\beta).
  \]
  This however contradicts the hypothesis in the lemma.
\item The case $\tau=\tau_\beta$ can be excluded analogously.
\item If  $\tau=-\tau_\alpha$ then we must have $\tilde{u}$ and $\tilde{u}'$ respectively on the $\overline{p_\beta}$ side and the $p_\beta$ side of 
  the domain containing $\tilde{c_\gamma}$. See Figure~\ref{fig:D0D1-ter}.
\begin{figure}[h]
      \centering
      \includesvg[.75\textwidth]{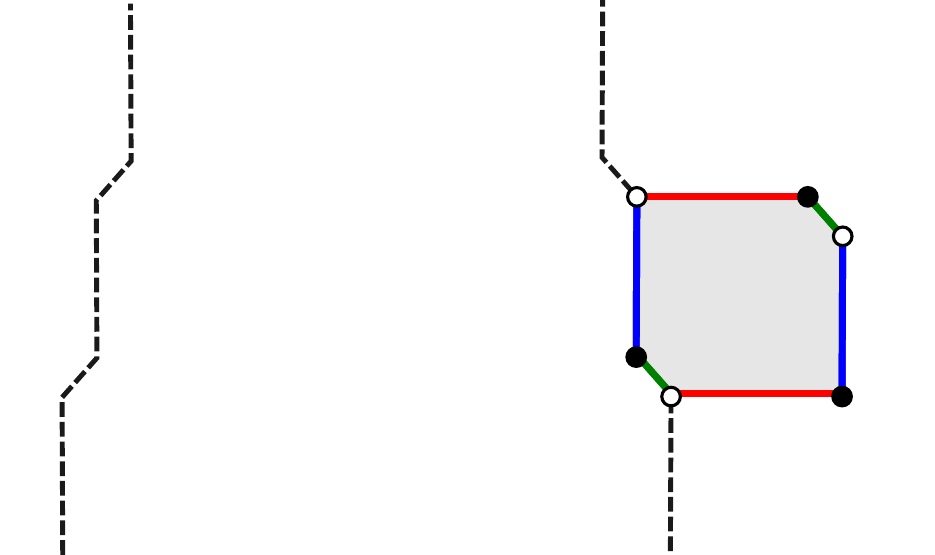}
      \caption{The case $\tau=-\tau_\alpha$. Left: $p_{\alpha\beta}$ has the same orientation in $c_\alpha$ and $c_\beta$. Right: $p_{\alpha\beta}$ has opposite orientations in $c_\alpha$ and $c_\beta$.}
      \label{fig:D0D1-ter}
    \end{figure}
  It follows that $p_1\cdot p_2$ has its endpoints $\tilde{v}$ and $\tilde{u}'$ on the same line, while the intermediate vertex $\tilde{u}$ is not on this line. This contradicts Property (P) of Lemma~\ref{lem:addSimple-bis}.
  The case $\tau=-\tau_\beta$ similarly leads to a contradiction.
\item Finally, if $\tau=\tau_{\alpha-\beta}$ or $\tau=\tau_{\beta-\alpha}$, only the case of bottom Figure~\ref{fig:D0D1} can occur since in the case of Figure~\ref{fig:D0D1-bis}, right, ${\cal D}_0$ is disjoint from ${\cal D}_0+\tau$. In the former case, $\tilde{u}$ and $\tilde{u}'$ must lie on the sides $p_{\alpha\beta}$ or $\overline{p_{\alpha\beta}}$ of the domain. After shifting the source point $\tilde{v}$ to a translate of $\tilde{u}$, we have that either $[q(p_1)]=\alpha$ and $[q(p_2\cdot p_3)]=\beta$, or $[q(p_1)]=\beta$ and $[q(p_2\cdot p_3)]=\alpha$. In both case we may argue as for $\tau=\tau_\alpha$ that this contradicts the hypothesis $\Norm{G,w}(\alpha+\beta) < \Norm{G,w}(\alpha) + \Norm{G,w}(\beta)$.
    \end{itemize}
  \item In Figure~\ref{fig:D0D1}, top, and Figure~\ref{fig:D0D1-bis}, left, we may only have $\tau = \tau_\lambda$ for \\
    $\lambda \in \{0, \pm\alpha, \pm\beta,\pm\gamma,\pm (\alpha-\beta),\pm{2\alpha}, \pm (2\alpha-\beta), \pm (2\alpha+\beta)\}$ as otherwise  the region ${\cal D}_0\cup{\cal D}_1$ containing  $\tilde{c_\gamma}$ is disjoint from its translate by $\tau$. We again show that each possible value for $\lambda$ leads to a contradiction.
    \begin{itemize}
    \item The cases $\tau = \tau_\lambda$ for $\lambda \in \{0, \pm\alpha, \pm\beta,\pm\gamma\}$ can be excluded with the same arguments as above. We can also exclude $\tau = \tau_{-2\alpha}$ with a similar argument as for $\tau = \tau_{-\alpha}$.
\item In the case $\tau = \tau_{2\alpha}$, a simple application of Jordan's curve theorem implies that $p_2$ and $p_2+\tau_\alpha$ must intersect, say at a vertex $\tilde{x}'$. Then $\tilde{x}:= \tilde{x}'-\tau_\alpha$ belongs to $(p_2+\tau_\alpha)-\tau_\alpha = p_2$. We are thus reduced to the cases $\tau = \tau_\alpha$ or $\tau = -\tau_\alpha$ (depending on which of $\tilde{x}, \tilde{x}'$ occurs first along
    $\tilde{c_\gamma}$) by substituting $\tilde{u}, \tilde{u}'$ with $\tilde{x}, \tilde{x}'$. See  Figure~\ref{fig:D0D1-quater}.
      \begin{figure}[h]
      \centering
      \includesvg[\textwidth]{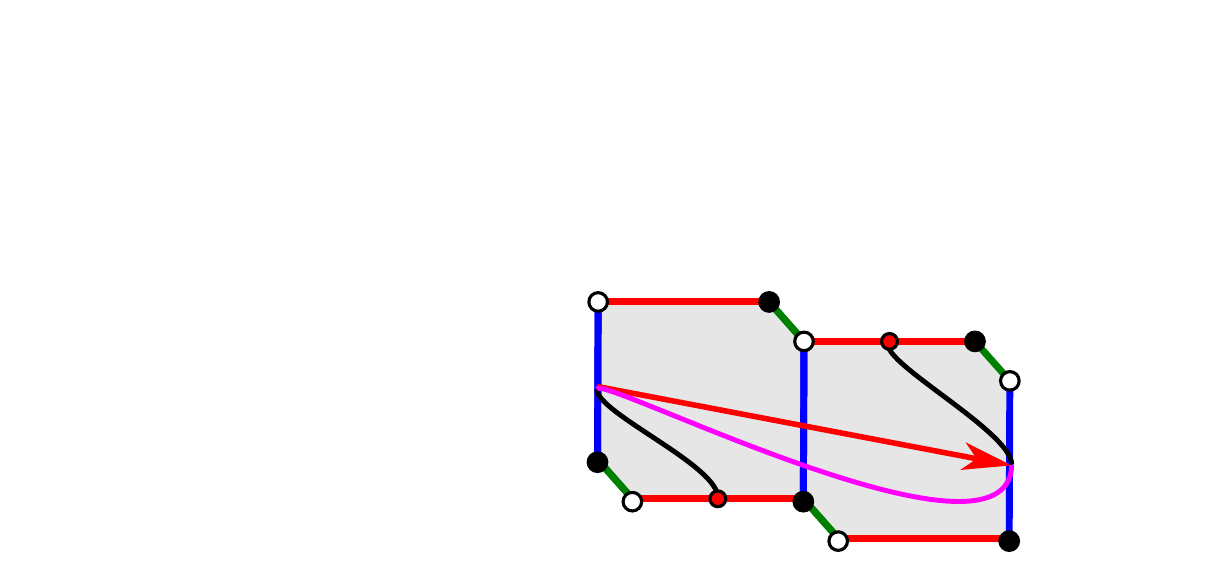}
     \caption{The case $\tau=\tau_{2\alpha}$. Left: $p_{\alpha\beta}$ has the same orientation in $c_\alpha$ and $c_\beta$. Right: $p_{\alpha\beta}$ has opposite orientations in $c_\alpha$ and $c_\beta$.}
      \label{fig:D0D1-quater}
    \end{figure}
  \item If $\tau=\tau_{\alpha-\beta}$, then $\tilde{u}'$ must be on the same $a$-line as $\tilde{v}$, while $\tilde{u}$ is on the other $a$-line through $\tilde{w}$. It follows that $p_1\cdot p_2$ has its endpoints on a same line but an intermediate vertex on another line, in contradiction with property $P$ of Lemma~\ref{lem:addSimple-bis}. See Figure~\ref{fig:D0D1-quinquies} for the case of  top Figure~\ref{fig:D0D1} where $p_{\alpha\beta}$ is oriented the same way in $c_\alpha$ and $c_\beta$.
    \begin{figure}[h]
      \centering
       \includesvg[\textwidth]{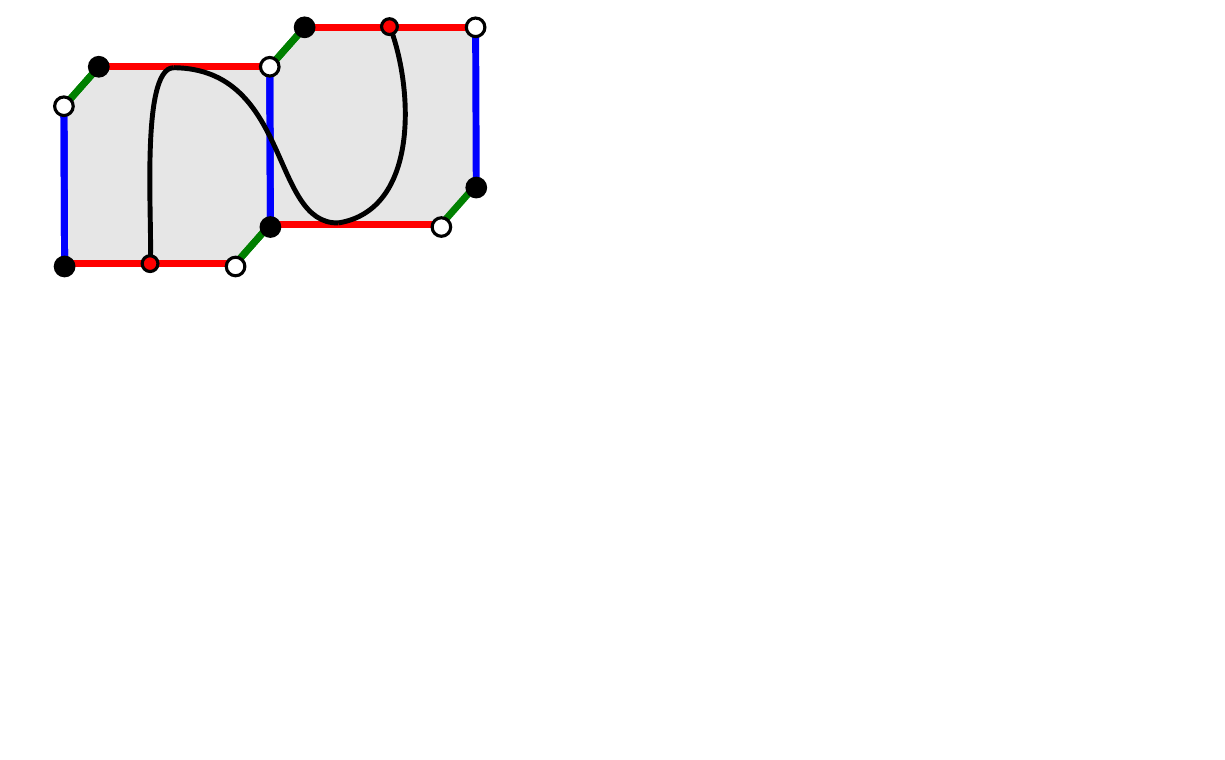}
      \caption{The case of top Figure~\ref{fig:D0D1} with  $\tau=\tau_{\alpha-\beta}$ (a), $\tau=\tau_{\alpha-\beta}$ (b), $\tau=\tau_{2\alpha-\beta}$ (c), and $\tau=\tau_{\beta-2\alpha}$ (d).}
      \label{fig:D0D1-quinquies}
    \end{figure}
  \item If $\tau=\tau_{\beta-\alpha}$, then $\tilde{u}$ must be on the same $a$-line as $\tilde{v}$, and $\tilde{u}$ is on the same $a$-line as $\tilde{w}$. Property (P) forces $p_1$ (and $p_3$) to intersect the side $p_{\alpha\beta}$. We are thus back to the cases of bottom Figure~\ref{fig:D0D1} and right Figure~\ref{fig:D0D1-bis} previously discussed.
  \item if $\tau=\pm\tau_{2\alpha-\beta}$, only the case of bottom Figure~\ref{fig:D0D1} can occur since in the case of Figure~\ref{fig:D0D1-bis}, right, ${\cal D}_0\cup {\cal D}_1$ is disjoint from ${\cal D}_0\cup {\cal D}_1+\tau$. Then, for both values of $\tau$, (the lift of) $\tilde{c}_\gamma$ must go through $p_{\alpha\beta}$ and we are  back to the case of bottom Figure~\ref{fig:D0D1}.
    \item Finally, $\tau=\pm\tau_{2\alpha+\beta}$ can only occur for the case of Figure~\ref{fig:D0D1-bis}, left, and similarly to the previous case, $\tilde{c}_\gamma$ must go through $p_{\alpha\beta}$ and we are  back to the case of right Figure~\ref{fig:D0D1-bis}.
  \end{itemize}
    \end{itemize}
    We have thus proved that $c_\gamma$ is a simple cycle.
      It remains to check that it intersects each of $c_\alpha$ and $c_\beta$ along a connected path. The convex region made of one or two domains that contains $\tilde{c_\gamma}$ intersects exactly two horizontal lines in each of the cases pictured in Figures~\ref{fig:D0D1} and~\ref{fig:D0D1-bis}. By property (P), the intersections of $\tilde{c_\gamma}$ with these two lines are each connected and their projections share the same vertex $v$, i.e., the source vertex of  $c_\gamma$. It follows that $c_\gamma\cap c_\alpha$ is connected.
  For the cases of Figure~\ref{fig:D0D1}, bottom, and Figure~\ref{fig:D0D1-bis}, right,
  we can similarly argue that $c_\gamma\cap c_\beta$ is connected. In the two remaining cases, as in Figure~\ref{fig:D0D1}, top, and Figure~\ref{fig:D0D1-bis}, left, if $\tilde{c_\gamma}$ would intersect two vertical lines, then considering two successive lifts of $\tilde{c_\gamma}$ as in Figure~\ref{fig:line-intersect-bis} we would get a non-connected intersection with the middle vertical line. This would however contradict Property (P).
\end{proof}

Since in Algorithm~\ref{alg:algo1} we only add $\alpha+\beta$ to $H$ at line~\ref{alg:addH} when $\alpha,\beta$ are already in $H$ with $\Norm{G,w}(\alpha+\beta) < \Norm{G,w}(\alpha) + \Norm{G,w}(\beta)$, Lemma~\ref{lem:good} immediately implies
\begin{corollary}\label{cor:H}
  $H\subset \TSimpleC\subset \SimpleC$
\end{corollary}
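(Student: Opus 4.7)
The plan is a direct induction on the iterations of Algorithm~\ref{alg:algo1}, with Lemma~\ref{lem:good} as the main technical tool. The second inclusion $\TSimpleC \subset \SimpleC$ is immediate from the definitions, since any tight simple cycle is in particular a simple cycle.

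For the first inclusion $H \subset \TSimpleC$, I would maintain the following joint loop invariant at the top of every iteration of the \textbf{while} loop (Line~\ref{alg:while}): (i) every triple $((x,y), c, w(c)) \in H$ has $c$ a tight simple cycle in $G$, and (ii) every sector $\angle(h, h') \in S$ records cycles $(c, c')$ that form a good pair (in particular, both are already tight simple cycles, and their homology classes form a basis of $\HoneT$). The base case is handled directly by Lemma~\ref{lem:short-basis}: the returned pair $(a, b)$ is a good short basis, so $a$ and $b$ are tight simple cycles, and the two initial sectors $\angle(h_1, h_2)$ and $\angle(h_2, \overline{h_1})$ both satisfy the good-pair condition since reversing the orientation of $a$ preserves the structural properties of the configuration.

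For the inductive step, when the algorithm extracts a sector $\angle(h, h')$ at Line~\ref{alg:removeS}, invariant (ii) supplies the good-pair hypothesis that Lemma~\ref{lem:good} requires. Line~\ref{alg:tight} then produces a tight representative $c''$ of $\gamma'' = [c]+[c']$ via the construction of Lemma~\ref{lem:addSimple-bis}, and the test $\ell'' < \ell + \ell'$ at Line~\ref{alg:if} is precisely the strict subadditivity condition assumed in Lemma~\ref{lem:good}. When the test succeeds, the lemma yields both desired conclusions simultaneously: $c''$ is simple, and both $(c, c'')$ and $(c'', c')$ are good pairs. This preserves invariant (i) at Line~\ref{alg:addH} (the cycle $c''$ is tight by construction, so $[c''] \in \TSimpleC$) and preserves invariant (ii) for the two new sectors added to $S$ at Line~\ref{alg:addS}. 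When the test fails, $H$ and $S$ are left unchanged and the invariants are preserved trivially.

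Finally, the symmetrization step at Line~\ref{alg:symmetrize} only appends classes of the form $-[c_i]$ for $[c_i] \in H$, with the orientation-reversed cycles $\overline{c_i}$ as explicit witnesses; since reversing the orientation of a tight simple cycle yields a tight simple cycle representing the negated class, these additions remain in $\TSimpleC$. The main obstacle in the overall argument was already overcome in Lemma~\ref{lem:good}, and this corollary is essentially a bookkeeping induction on top of that lemma.
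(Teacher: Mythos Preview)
Your proposal is correct and follows essentially the same approach as the paper: the paper's proof is a one-sentence remark that Lemma~\ref{lem:good} immediately implies the result, while you spell out explicitly the induction on iterations of Algorithm~\ref{alg:algo1} together with the loop invariant that every sector in $S$ records a good pair. Your treatment of the base case (including the sector $\angle(h_2,\overline{h_1})$) and of the symmetrization step is a welcome expansion of details the paper leaves implicit.
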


\begin{corollary}\label{cor:n-iterations}
  The number of iterations in the while loop of Algorithm~\ref{alg:algo1}, from Line~\ref{alg:while} to~\ref{alg:endwhile}, is bounded by twice the size of $\TSimpleC$.
\end{corollary}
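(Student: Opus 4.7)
The plan is to prove the bound via a straightforward amortized counting argument that tracks the sizes of $H$ and $S$ throughout the while loop. I would first classify each iteration as \emph{successful} when the test $\ell'' < \ell + \ell'$ at Line~\ref{alg:if} passes (so that a new element $h''$ is inserted into $H$ at Line~\ref{alg:addH} and two new sectors are pushed onto $S$ at Line~\ref{alg:addS}) and as \emph{failing} otherwise. Denote by $s$ and $f$ the respective counts. Every iteration removes exactly one sector from $S$ at Line~\ref{alg:removeS}, so the net change in $|S|$ per iteration is $+1$ for a successful one and $-1$ for a failing one. Since $|S| = 2$ at initialization and $|S| = 0$ when the loop terminates, this forces $s - f = -2$, hence $f = s + 2$, and the total number of iterations equals $s + f = 2s + 2$.

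The remaining step is to bound $s$ using Corollary~\ref{cor:H}. The key sub-claim is that each successful iteration increases $|H|$ by exactly one: the inserted vector $(x+x', y+y')$ is primitive, since the mediant of the two columns of a $\mZ$-basis of $\HoneT$ again has coprime coordinates, and its angular direction lies strictly between those of its two neighbors already present in $H$, so it is necessarily a fresh element. Combined with the initialization $|H|=2$, this gives $|H| = 2 + s$ when the loop terminates. By Corollary~\ref{cor:H}, $H \subseteq \TSimpleC$, and since the vectors of $H$ all lie in the angular range $[0, \pi)$ while $\TSimpleC$ is centrally symmetric, we have $|H| \le |\TSimpleC|/2$. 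Consequently $s \le |\TSimpleC|/2 - 2$, and plugging back yields a total of at most $2s + 2 \le |\TSimpleC| - 2 \le 2|\TSimpleC|$ iterations.

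The argument is essentially bookkeeping; the only point requiring care is the sub-claim that insertions into $H$ always produce genuinely new elements, which rests on maintaining $H$ in angular order and on the primitivity of the mediant of two consecutive columns of a $\mZ$-basis of $\HoneT$. Everything else reduces to the two linear equations above.
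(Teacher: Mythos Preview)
Your argument is correct and follows essentially the same bookkeeping as the paper: both count iterations via the sectors removed from $S$, relate this to the number of successful insertions into $H$, and then invoke Corollary~\ref{cor:H} to bound $|H|$ by $|\TSimpleC|$. Your version is slightly more explicit (tracking $s$ and $f$ separately and justifying distinctness of inserted directions via primitivity and angular ordering) and in fact yields the sharper bound $|\TSimpleC|-2$ by exploiting the central symmetry of $\TSimpleC$, whereas the paper simply uses $|H|\le|\TSimpleC|$ to obtain $N\le 2+2(|H|-2)\le 2|\TSimpleC|$.
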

\begin{proof}
  The number $N$ of iterations in the while loop is bounded by the number of times a sector can be removed from $S$ in Line~\ref{alg:removeS}. This number is at most two, the initial size of $S$, plus the number of sectors inserted in $S$ at Line~\ref{alg:addS}. This last number is itself twice the number of triples inserted in $H$ at line~\ref{alg:addH}, not taking into account the two classes initially put in $H$ at line~\ref{alg:initH}. Note that the inserted triples are pairwise distinct. It ensues from Corollary~\ref{cor:H} that $N\leq 2 + 2(|H|-2)\leq 2|\TSimpleC|$.
\end{proof}

\begin{proposition}\label{prop:algo}
  Algorithm~\ref{alg:algo1} runs in $O(n^2\log\log n)$ time. This reduces to $O(n^2)$ in the case of unweighted graphs.
\end{proposition}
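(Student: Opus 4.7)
The plan is to combine the bound on the number of while-loop iterations (Corollary~\ref{cor:n-iterations}) with the per-iteration cost coming from Lemmas~\ref{lem:addSimple} and~\ref{lem:addSimple-bis}, and then account for the preprocessing and bookkeeping. First, the good short basis $(a,b)$ is obtained in $O(n\log n)$ time by Lemma~\ref{lem:short-basis}, and the initializations on Lines~\ref{alg:initH}--\ref{alg:symmetrize} outside of the main loop clearly cost $O(n)$.

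For the main loop, Corollary~\ref{cor:n-iterations} bounds the number of iterations by $2|\TSimpleC|$, and $\TSimpleC \subseteq \SimpleC$ has size $O(|V|)=O(n)$ by Lemma~\ref{lem:numberofsimplecycles}. Each iteration performs: (i) the computation of a tight representative $c''$ of $[c]+[c']$, together with its norm (Line~\ref{alg:tight}); and (ii) a constant number of bookkeeping operations updating $H$ and $S$. The precondition of Line~\ref{alg:tight} requires that $(c,c')$ be a good pair, which is exactly what Lemma~\ref{lem:good} guarantees inductively: the two initial cycles $(a,b)$ form a good pair, and whenever a triple $h''$ is inserted between $h$ and $h'$ on Line~\ref{alg:addH}, the hypothesis $\ell''<\ell+\ell'$ of the $if$-test on Line~\ref{alg:if} ensures (via Lemma~\ref{lem:good}) that both $(c,c'')$ and $(c'',c')$ remain good pairs. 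Therefore the calls to the subroutine of Lemmas~\ref{lem:addSimple} and~\ref{lem:addSimple-bis} are always legitimate, and each iteration of the loop costs $O(n\log\log n)$ in the weighted case and $O(n)$ in the unweighted case.

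For the bookkeeping, one stores $H$ as a doubly linked list sorted angularly and $S$ as a stack (or queue) of pointers to pairs of adjacent entries of $H$. Then extracting a sector from $S$ on Line~\ref{alg:removeS}, inserting $h''$ between $h$ and $h'$ on Line~\ref{alg:addH}, and pushing the two new sectors on Line~\ref{alg:addS} each take $O(1)$ time. The final central symmetrization on Line~\ref{alg:symmetrize} is done in $O(|H|)=O(n)$ time by traversing $H$ once.

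Multiplying the $O(n)$ iteration bound by the per-iteration cost gives $O(n^2\log\log n)$ in the weighted case and $O(n^2)$ in the unweighted case, which dominates the $O(n\log n)$ preprocessing and the linear bookkeeping overhead. The only mildly delicate point is verifying that the good-pair invariant is preserved through the recursion so that Lemma~\ref{lem:addSimple} applies at every call; this reduces entirely to checking the hypothesis of Lemma~\ref{lem:good}, which is exactly the branch condition of the $if$-statement on Line~\ref{alg:if}.
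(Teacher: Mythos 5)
Your proof is correct and follows essentially the same approach as the paper: bound the number of iterations by $O(n)$ via Corollary~\ref{cor:n-iterations} and Lemma~\ref{lem:numberofsimplecycles}, bound the per-iteration cost by $O(n\log\log n)$ (or $O(n)$ unweighted) via Lemmas~\ref{lem:addSimple} and~\ref{lem:addSimple-bis}, and invoke Lemma~\ref{lem:good} to maintain the good-pair precondition. Your additional explicit accounting of the data structures for $H$ and $S$ and of the $O(n\log n)$ preprocessing for the short basis is sound but not a different route.
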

\begin{proof}
  From Corollary~\ref{cor:n-iterations}, Algorithm~\ref{alg:algo1} enters at most $2|\TSimpleC|$ times the while loop between Lines~\ref{alg:while} and~\ref{alg:endwhile}. This is $O(n)$ iterations by Lemma~\ref{lem:numberofsimplecycles}. Each iteration takes $O(n\log\log n)$ time for executing Line~\ref{alg:tight} by Lemmas~\ref{lem:addSimple} and~\ref{lem:addSimple-bis}, or linear time in the case of unweighted graphs. Lemma~\ref{lem:good} ensures that only good pairs are stored at Line~\ref{alg:addS}, so that the requirement for executing Line~\ref{alg:tight} is always satisfied. Line~\ref{alg:symmetrize} moreover takes time $O(|H|)=O(n)$. Since every other line takes constant time to execute, the total time for running Algorithm~\ref{alg:algo1} is $O(n\cdot n\log\log n + n)=O(n^2\log\log n)$. In the unweighted case this reduces to $O(n^2)$ time.
\end{proof}
We are now ready to prove that the unit ball $\Ball{G,w}$ can be computed in $O(n^2\log\log n)$ time. As before, the $\log\log n$ factor can be omitted in the unweighted case. Henceforth, we will not mention this again.
\begin{proof}[Proof of Theorem~\ref{th:unit-ball}]
  Proposition~\ref{prop:algo} states that we can compute in $O(n^2\log\log n)$ time a list $H$ of $O(n)$ vectors, with their norms, that contains the directions of the extremal points of the unit ball $\Ball{G,w}$. After normalising the vectors we compute their convex hull in $O(n\log n)$ time with any classical convex hull algorithm. (In fact, we can easily maintain an angular ordering of the elements of $H$, so that the convex hull of its normalized vectors can be computed in linear time.) Overall, this leads to an $O(n^2\log\log n)$ time algorithm for computing $\Ball{G,w}$.
\end{proof}
For further reference, we establish a useful property of the ordered set of elements in $H$. Namely, two consecutive cycles in $H$ define a \define{unimodular} cone.
\begin{lemma}\label{lem:unimodularPartition}
  The sorted list $H= [((x_i,y_i), c_i, w(c_i))]$ computed by Algorithm~\ref{alg:algo1} is such that the rays $\mR_{\geq 0} c_i$ are ordered cyclically by angle, and $\langle [c_i], [c_{i+1}] \rangle = 1$ for all $i$. In particular, the half-open cones $C_i = \mZ_{\ge 0} [c_i] + \mZ_{> 0} [c_{i+1}]$ constitute a partition of $H_1(T; \mZ)$. 
\end{lemma}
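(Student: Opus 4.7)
The plan is to prove all three claims simultaneously by induction on the sequence of insertions performed in the while loop of Algorithm~\ref{alg:algo1}. The base case consists of the initial triples $h_1, h_2, \overline{h_1}$ with homology classes $(1,0)$, $(0,1)$, $(-1,0)$, which are cyclically ordered by angle and satisfy $\langle (1,0),(0,1)\rangle = \langle (0,1),(-1,0)\rangle = 1$. For the inductive step, consider an insertion at Line~\ref{alg:addH}: the new triple has homology class $\gamma = [c] + [c']$ inserted between consecutive entries $h, h'$. Since $\gamma$ is a strictly positive real combination of its neighbors, its direction lies strictly in the open angular sector between them, so the cyclic angular ordering is preserved.

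For the intersection condition, assuming inductively that $\langle [c],[c']\rangle = 1$, bilinearity and antisymmetry of the pairing give
\[
\langle [c], \gamma \rangle = \langle [c],[c]\rangle + \langle [c],[c']\rangle = 0 + 1 = 1
\]
and similarly $\langle \gamma,[c']\rangle = 1$, so both new consecutive pairs have intersection number $+1$. This takes care of all pairs appearing within the upper half-plane list. After the symmetrization at Line~\ref{alg:symmetrize}, the two pairs that straddle the $x$-axis must also be checked; this reduces to the original base pair by the identity $\langle -\alpha,\beta\rangle = -\langle \alpha,\beta\rangle$, so the intersection condition holds cyclically along the entire final list.

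For the partition statement, I would use that the condition $\langle [c_i],[c_{i+1}]\rangle = +1$ is exactly the condition for $([c_i],[c_{i+1}])$ to form a positively oriented $\mZ$-basis of $H_1(T;\mZ)$. Consequently, every element of $H_1(T;\mZ)$ lying in the closed real cone $\mR_{\ge 0}[c_i] + \mR_{\ge 0}[c_{i+1}]$ has a unique representation as $a[c_i] + b[c_{i+1}]$ with $a,b \in \mZ_{\ge 0}$. The closed cones collectively cover $\mR^2$ because the rays through the $[c_i]$ are cyclically ordered and exhaust all directions in $[0,2\pi)$. The half-open definition $C_i = \mZ_{\ge 0}[c_i] + \mZ_{>0}[c_{i+1}]$ breaks the overlap on shared rays by assigning each ray $\mZ_{>0}[c_j]$ to exactly one cone, namely $C_{j-1}$. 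Thus the $C_i$ are pairwise disjoint and their union is $H_1(T;\mZ)\setminus\{0\}$.

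There is no real obstacle: the only subtlety is the bookkeeping at the symmetrization step, which requires checking that the two consecutive pairs crossing the $x$-axis inherit intersection number $+1$ from the initial pair $([a],[b])$ via antisymmetry. Everything else is a clean induction driven by the fact that insertions proceed by Farey mediants.
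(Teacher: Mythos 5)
Your proof is correct and takes essentially the same inductive approach as the paper. The only difference is that the paper cites Lemma~\ref{lem:good} (which ensures the inserted pairs are good pairs, hence bases) for the condition $\langle [c_i],[c_{i+1}]\rangle = 1$, whereas you verify it directly by bilinearity of the intersection pairing from $\gamma = [c]+[c']$; both are valid, and your computation is arguably more self-contained for this particular claim.
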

\begin{proof}
  That the rays are ordered by angle follows by induction and from the insertion of $h''$ in-between $h$ and $h'$ in Line~\ref{alg:addH}. Indeed the ray corresponding to $h''$ falls between the rays corresponding to $h$ and $h'$. Moreover, Lemma~\ref{lem:good} ensures that the corresponding pairs of homology classes $([c],[c''])$ and $([c''],[c'])$ are good, thus forming homology bases. Of course, all the lattice points in the corresponding cones have non-negative coordinates.
\end{proof}

\section{Computing the length spectrum}\label{sec:lengthSpectrumComputation}

We now give a proof of Theorem~\ref{thm:lengthSpectrumComputation}.

\begin{proof}
	First, as described in Sections~\ref{sec:short-basis} and~\ref{sec:computingUnitBall}, we compute in time $O(n^2\log\log n)$ a short basis $(a,b)$ and a list $H$ of triples $((x_i,y_i),c_i,w(c_i))$, where $c_i$ is a simple tight cycle in $G$, $(x_i,y_i)\in\mZ^2$ represents its homology class $[c_i]=x_i[a]+y_i[b]$ and $w(c_i)=\Norm{G,w}([c_i])$. From Lemma~\ref{lem:unimodularPartition}, this provides a partition of $H_1(T; \mZ)$ into half-open cones $C_i = \mZ_{\ge 0} [c_i] + \mZ_{> 0} [c_{i+1}]$. Note that by construction $\Norm{G,w}$ is linear over each $C_i$.
	
    We then compute the values of the length spectrum iteratively, in increasing order, storing them into a list $\Lambda$, initially empty.  Intuitively, the algorithm consists in sweeping $\HoneT$ by increasing the radius of the  $\lambda$-ball $\lambda\Ball{G,w}$ from $\lambda=0$. Each time a lattice point is swept, its norm $\lambda$ is added to $\Lambda$. We actually sweep the cones $C_i$ in parallel. 
        
    By Lemma~\ref{lem:unimodularPartition},  the half-open cones $C_i$ decompose  the ball $\lambda\Ball{G,w}$  into sectors $C_i^\lambda:=\lambda\Ball{G,w}\cap C_i$. For each sweeping value $\lambda$ and each $i$ we store two ordered subsets of $C_i^\lambda$ into dequeues (double-ended queues) $F^h_i$ and $F^v_i$ corresponding to the \emph{horizontal} and \emph{vertical} sweeping front, respectively. Formally $F^h_i = ((x_1^h,y_1^h),\dots, (x_{i_h}^h,y_{i_h}^h))$, where each $(x_j^h,y_j^h)\in \mZ_{\ge 0}\times\mZ_{> 0}$ is such that $x_j^h[c_i]+y_j^h[c_{i+1}]\in C_i^\lambda$ and $(x_j^h+1)[c_i]+y_j^h[c_{i+1}]\not\in \lambda\Ball{G,w}$. Moreover,  the homology classes in $F^h_i$  are ordered by their norms in increasing order. Similarly, $F^v_i = ((x_1^v,y_1^v),\dots, (x_{i_v}^v,y_{i_v}^v))$ contains the list (ordered by norm in increasing order) of coordinates of the homology classes contained in $\lambda\Ball{G,w}$ but whose translates by $[c_{i+1}]$ have norms larger than $\lambda$. Initially $F^h_i$ is the empty dequeue and $F^v_i$ contains the coordinates $(0,0)$ of the zero class. See Figure~\ref{fig:cone-sweep}
    \begin{figure}[h]
      \centering
       \includegraphics[width=.6\linewidth]{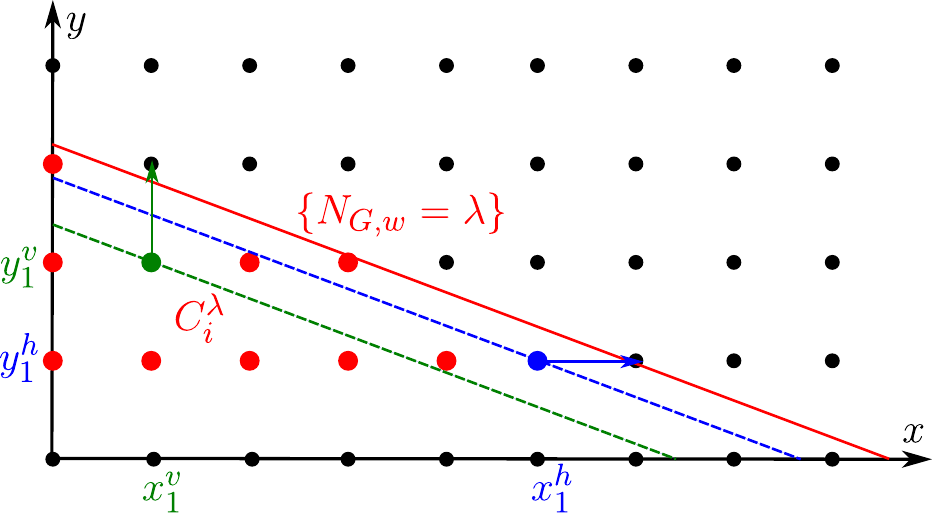}
      \caption{In the figure, we represent the class $x[c_i]+y[c_{i+1}]$ by the point with coordinates $(x,y)$. The solid red line represents the points whose norm is $\lambda$. The dashed lines correspond to the norms of the points $(x_1^h,y_1^h)$ and $(x_1^v,y_1^v)$. }
      \label{fig:cone-sweep}
    \end{figure}
        \begin{claim}
          The coordinates in the $([c_i],[c_{i+1}])$ basis of the homology class in $C_i\setminus C_i^\lambda$ with the smallest norm is either $(x_1^h+1,y_1^h)$ or $(x_1^v,y_1^v+1)$.
        \end{claim}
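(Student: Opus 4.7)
The plan is to pick any minimum-norm class $\gamma^* = x^*[c_i] + y^*[c_{i+1}]$ in $C_i \setminus C_i^\lambda$ (so $x^* \geq 0$, $y^* \geq 1$, and $\Norm{G,w}(\gamma^*) > \lambda$), and argue that its coordinate-wise predecessor must lie in one of the two dequeues $F^h_i$ or $F^v_i$. Combined with the fact that these dequeues are sorted by norm, this reduces the claim to a minimality argument for their front-most entries.

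The key preliminary observation I would make is that, because $(c_i, c_{i+1})$ is a good pair (Lemma~\ref{lem:unimodularPartition}), the norm $\Norm{G,w}$ is linear on the closed cone $\{x[c_i] + y[c_{i+1}] : x, y \geq 0\}$, given explicitly by $\Norm{G,w}(x[c_i]+y[c_{i+1}]) = x\,w(c_i) + y\,w(c_{i+1})$. In particular, both $(x_1^h+1, y_1^h)$ and $(x_1^v, y_1^v+1)$ have second coordinate at least $1$, hence lie in $C_i$; and by the defining properties of the fronts, both lie outside $\lambda\Ball{G,w}$, so they genuinely belong to $C_i \setminus C_i^\lambda$. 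It then suffices to compare the affine quantities $x\,w(c_i) + y\,w(c_{i+1})$.

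I would then split into two cases. If $x^* \geq 1$, the class $(x^*-1)[c_i] + y^*[c_{i+1}]$ still lies in $C_i$ and has norm strictly smaller than $\Norm{G,w}(\gamma^*)$; the minimality of $\gamma^*$ in $C_i \setminus C_i^\lambda$ forces this class into $C_i^\lambda$, while its translate $\gamma^*$ is outside the ball. This is exactly the defining condition for membership in $F^h_i$, so $(x^*-1, y^*) \in F^h_i$. Since $(x_1^h, y_1^h)$ is the smallest-norm entry of $F^h_i$, adding $w(c_i)$ to both sides of $\Norm{G,w}(x_1^h[c_i]+y_1^h[c_{i+1}]) \leq \Norm{G,w}((x^*-1)[c_i]+y^*[c_{i+1}])$ yields $\Norm{G,w}((x_1^h+1)[c_i]+y_1^h[c_{i+1}]) \leq \Norm{G,w}(\gamma^*)$, and the reverse inequality follows from the minimality of $\gamma^*$, giving equality. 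If instead $x^* = 0$, I apply the same scheme with the predecessor $(0, y^*-1)$ and the dequeue $F^v_i$, obtaining $(x_1^v, y_1^v+1)$ as a minimizer.

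The main delicate point will be the boundary subcase $y^* = 1$ of the second case: here the predecessor is the zero class, which does not belong to $C_i$. This is precisely the reason why the algorithm initialises $F^v_i$ with $(0,0)$, and why $F^v_i$ is defined in terms of membership in $\lambda\Ball{G,w}$ rather than in $C_i^\lambda$. With this convention the origin qualifies as an element of $F^v_i$ whenever $w(c_{i+1}) > \lambda$, so the argument goes through uniformly for every $y^* \geq 1$, completing the proof.
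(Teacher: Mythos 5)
Your proof is correct and takes essentially the same route as the paper: pick a minimum-norm class in $C_i \setminus C_i^\lambda$, pass to its coordinate-wise predecessor, show that predecessor belongs to one of the two dequeues by minimality, and then use linearity of $\Norm{G,w}$ on the unimodular cone together with the sortedness of the dequeues to conclude. The case split is organized symmetrically differently -- you decrement $x$ first when $x^* \ge 1$ and fall back to decrementing $y$ when $x^*=0$, whereas the paper decrements $y$ first when $\alpha - [c_{i+1}] \in C_i$ (i.e.\ when $y \ge 2$) and falls back to decrementing $x$ -- but both choices lead to the same conclusion.

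One place where your write-up is actually more careful than the paper's is precisely the boundary subcase you flag, namely $x^*=0$, $y^*=1$, where the predecessor is the zero class. The paper's proof asserts that $\alpha - [c_{i+1}] \notin C_i$ forces $\alpha \in \mZ_{>0}[c_i] + [c_{i+1}]$, which silently excludes $\alpha = [c_{i+1}]$ (the case $x=0$, $y=1$); in that case the predecessor $0$ is not in $C_i^\lambda$, so the paper's stated reduction to ``$\beta \in C_i^\lambda$'' does not literally apply. Your observation that $F^v_i$ is defined via membership in $\lambda\Ball{G,w}$ (not $C_i^\lambda$) and is seeded with $(0,0)$ is exactly what makes the claim go through there. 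Finally, note that both your argument and the paper's in fact conclude equality of \emph{norms} rather than equality of coordinate vectors when the dequeue minimum is not unique; this is fine, since the claim is only used to identify the key value and the candidate forms, and the sweeping loop processes ties one at a time.
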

        \begin{proof}[Proof of the claim]
          Let $\alpha$ be the homology class of $C_i\setminus C_i^\lambda$ with the smallest norm. If $\alpha-[c_{i+1}]\in C_i$, then by linearity of the norm in $C_i$, we have
          \[ \Norm{G,w}(\alpha-[c_{i+1}])=\Norm{G,w}(\alpha)-\Norm{G,w}([c_{i+1}])<\Norm{G,w}(\alpha).
          \]
	It follows by minimality of the norm of $\alpha$ that $\alpha-[c_{i+1}]\in C_i^\lambda$. If $\alpha-[c_{i+1}]\notin C_i$, then we must have $\alpha\in \mZ_{>0}[c_i]+[c_{i+1}]$. It follows that $\alpha-[c_i]\in C_i$ and
	\[ \Norm{G,w}(\alpha-[c_{i}])=\Norm{G,w}(\alpha)-\Norm{G,w}(([c_{i}])<\Norm{G,w}(\alpha),\]
	implying $\alpha-[c_{i}]\in C_i^\lambda$. Hence, $\alpha$ is of the form $\beta+[c_i]$ or $\beta+[c_{i+1}]$ for some $\beta\in C_i^\lambda$. In other words the coordinates of $\alpha$ have either the form $(x+1,y)$ for some $(x,y)\in F^h_i$ or the form $(x,y+1)$ for some $(x,y)\in F^v_i$. Of course, since $\alpha$ has minimal norm in $C_i\setminus C_i^\lambda$, it must be that $(x,y)=(x_1^h,y_1^h)$ in the former case and $(x,y)=(x_1^v,y_1^v)$ in the latter case. 
      \end{proof}
      We are now ready to describe the sweeping algorithm. We store the indices $i$ of the sectors $C_i^\lambda$ in a (balanced) binary search tree $\cal S$ allowing minimum extraction, deletion and insertion in logarithmic time. The key of sector $C_i^\lambda$ used for comparisons in the tree is the minimum norm of a homology class in $C_i\setminus C_i^\lambda$. From the previous claim it can be computed in constant time from $F^h_i$ and $F^v_i$ as
      \begin{align*}
        \text{key}[i]&=\min(\Norm{G,w}([(x_1^h+1)c_i+y_1^hc_{i+1}]), \Norm{G,w}([x_1^vc_i+(y_1^v+1)c_{i+1}]))\\
        &=\min((x_1^h+1)w(c_i)+y_1^hw(c_{i+1}), x_1^vw(c_i)+(y_1^v+1)w(c_{i+1}))
      \end{align*}
      Suppose we have computed the $m$ first values of the length spectrum, i.e., $\Lambda = (\lambda_1,\lambda_2,\dots,\lambda_m)$, and we want to compute $\lambda_{m+1}$. We extract and remove from $\cal S$ the sector $C_i^{\lambda_m}$, with $i=\cal S.\min()$, i.e., with a non-swept homology class $\alpha$ of minimal norm. Hence, we  have $\lambda_{m+1}=\text{key}[i]$.
      We update $F^h_i$ and $F^v_i$ as follows. If $\alpha=(x_1^h+1)[c_i]+y_1^h[c_{i+1}]$, then we remove $(x_1^h,y_1^h)$ from the bottom of $F^h_i$ and push $(x_1^h+1,y_1^h)$ on its top. Likewise, if $\alpha=x_1^v[c_i]+(y_1^v+1)[c_{i+1}]$, we remove $(x_1^v,y_1^v)$ from the bottom of $F^v_i$ and push $(x_1^v,y_1^v+1)$ on its top. We do both if $\alpha=(x_1^h+1)[c_i]+y_1^h[c_{i+1}]=x_1^v[c_i]+(y_1^v+1)[c_{i+1}]$. Clearly, the updated dequeues contain the required lattice points with respect to the sweeping value $\lambda=\lambda_{m+1}$.
      We then update $\cal S$ by inserting $i$ with its new key resulting from the updates of $F^h_i$ and $F^v_i$. We finally add $\lambda_{m+1}$ to $\Lambda$. Note that some values of the length spectrum might have multiplicity larger than one. In this case, $\lambda$ stays constant for some iterations while the sweeping fronts are updated (until we reach a larger value of the length spectrum). By Corollary~\ref{cor:H} and Theorem~\ref{thm:weightedgraphspolyhedralnorm}, $\cal S$ contains $O(n)$ items. The running time for a sweeping step is thus $O(\log n)$ time for interacting with $\cal S$ plus constant time for updating $F^h_i$, $F^v_i$ and $\Lambda$. We can thus compute the first $k$ values of the length spectrum in $O(n^2\log\log n + k\log n)$ time.
      \end{proof}

\begin{remark}
    Note that the algorithm described in the proof of Theorem~\ref{thm:lengthSpectrumComputation} can not only compute the first $k$ values of the length spectrum, but also their homology classes at no extra cost.
  \end{remark}

\section{Deciding equality of length spectra}\label{sec:lengthSpectraEquality}

We now present an application of Algorithm~\ref{alg:algo1} to the following decision problem: given two weighted graphs $(G,w)$ and $(G',w')$ embedded on tori do they have the same length spectra? This question actually covers two problems: the equality of the marked length spectrum and the equality of the unmarked length spectrum. As we show, the former reduces to the linear equivalence of polyhedral norms which has a straightforward quadratic time solution. In contrast the latter reduces to the problem of polynomial identity testing which is only known to be in the co-RP complexity class. In particular, this problem is in co-NP. For unweighted graphs however, the complexity is deterministic polynomial. 

We first aim to compare the length spectrum as maps from $H_1(T; \mZ) \to \mR$. We are given two weighted graphs $(G,w)$ and $(G',w')$ embedded on tori $T$ and $T'$ respectively. We say that $(G,w)$ and $(G',w')$ have the same \emph{marked spectrum} if there exists a homeomorphism $\phi: T \to T'$ such that for all $\gamma \in H_1(T; \mZ)$ we have $\Norm{G',w'}(\phi_*(\gamma)) = \Norm{G,w}$ where $\phi_*(\gamma)$ denotes the class $[\phi(c)]$ where $c$ is a curve representative of $\gamma$.
\begin{theorem} \label{thm:decidingEqualityMarkedSpectrum}
Let $(G,w)$ and $(G',w')$ be two weighted graphs cellularly embedded on tori $T$ and $T'$, each with complexity bounded by $n$. Then there is an algorithm that answers whether $(G,w)$ and $(G',w')$ have the same marked spectrum in time $O(n^2\log\log n)$.
\end{theorem}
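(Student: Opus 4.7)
The plan is to reduce the equality of marked length spectra to an algorithmic problem on polygons: check whether the unit balls $\Ball{G,w}$ and $\Ball{G',w'}$ differ by an element of $GL_2(\mZ)$. The mapping class group of the torus is $GL_2(\mZ)$, so after fixing bases of $H_1(T;\mZ)$ and $H_1(T';\mZ)$, the marked spectra of $(G,w)$ and $(G',w')$ coincide if and only if there exists $M \in GL_2(\mZ)$ with $\Norm{G',w'}\circ M = \Norm{G,w}$, equivalently $M(\Ball{G,w}) = \Ball{G',w'}$.

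I would first invoke Theorem~\ref{th:unit-ball} to compute both unit balls in $O(n^2\log\log n)$ time; by Theorem~\ref{thm:weightedgraphspolyhedralnorm} each has $O(n)$ extremal points. Crucially, Algorithm~\ref{alg:algo1} together with the convex hull step labels each extremal point $p$ of $\Ball{G,w}$ by the data $(v, \Norm{G,w}(v))$ of a primitive vector $v\in\mZ^2$ satisfying $p = v/\Norm{G,w}(v)$, and analogously for $\Ball{G',w'}$.

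The key combinatorial observation is that any $M\in GL_2(\mZ)$ with $M(\Ball{G,w})=\Ball{G',w'}$ must permute extremal points respecting cyclic order up to reflection, and must act on their primitive labels by $M(v) = \pm v'$ with matching norms $\Norm{G,w}(v) = \Norm{G',w'}(v')$. Indeed, if $M(v/w)=v'/w'$ with $v,v'$ primitive, $w=\Norm{G,w}(v)$ and $w'=\Norm{G',w'}(v')$, then $M(v)=(w/w')v'$ is an integer vector, and primitivity of $v'$ combined with the fact that $M\in GL_2(\mZ)$ preserves primitivity forces $w=w'$ and $M(v)=\pm v'$.

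With these reductions the algorithm becomes direct. List the extremal points of both balls in cyclic order as $(v_i,w_i)_{i=1}^{2k}$ and $(v'_i,w'_i)_{i=1}^{2k'}$; answer negatively if $k\neq k'$. Otherwise, for each cyclic shift $j\in\{0,\dots,2k-1\}$ and each orientation $\epsilon\in\{\pm 1\}$, use the linearly independent pair $v_1,v_2$ (consecutive extremal points of a two-dimensional centrally symmetric polygon are neither zero nor antipodal) to solve for the unique rational matrix $M$ sending $v_1,v_2$ to the matched labels in $\Ball{G',w'}$; verify in constant time that $M$ has integer entries and determinant $\pm 1$; then verify in $O(n)$ time that $Mv_i = \pm v'_{\sigma(i)}$ and $w_i = w'_{\sigma(i)}$ for all remaining $i$, where $\sigma$ is the matching induced by $(j,\epsilon)$. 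With $O(n)$ pairs $(j,\epsilon)$ and $O(n)$ work per verification, the matching step runs in $O(n^2)$, which is dominated by the $O(n^2\log\log n)$ preprocessing. The main subtlety is to ensure the search space is genuinely $GL_2(\mZ)$ rather than a larger symmetry group of rational polygons; this is enforced by the primitivity-and-weight invariant established above.
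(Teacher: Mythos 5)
Your proposal is correct and follows essentially the same route as the paper: compute both unit balls via Theorem~\ref{th:unit-ball}, reduce the question to the existence of $M\in\operatorname{GL}_2(\mZ)$ with $M\cdot\Ball{G,w}=\Ball{G',w'}$, then loop over the $O(n)$ candidate matchings of a fixed pair of consecutive vertices of $\Ball{G,w}$ to consecutive vertices of $\Ball{G',w'}$, compute the unique candidate matrix for each, and verify it in $O(n)$ time, giving $O(n^2)$ on top of the $O(n^2\log\log n)$ preprocessing. Your added observation that integrality and primitivity force the matched norms to be equal and $M$ to send primitive labels to primitive labels is a slight refinement of the verification step (and makes explicit the integrality check that the paper leaves implicit alongside the $\det M=\pm1$ test), but it does not change the algorithm or its analysis.
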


\begin{proof}
By Theorem~\ref{th:unit-ball}, one can compute the unit balls $\Ball{G,w}$ and $\Ball{G',w'}$ in time $O(n^2 \log\log n)$. A homeomorphism $\phi: T \to T'$ induces an invertible linear isomorphism $\phi_*: H_1(T; \mZ) \to H_1(T'; \mZ)$ and conversely, given a linear isomorphism $H_1(T; \mZ) \to H_1(T'; \mZ)$ there exists a homeomorphism that induces it. Hence, after identifying $\Ball{G,w}$ and $\Ball{G',w'}$ to subsets of $\mR^2$ by using the short bases $(a,b)$ and $(a',b')$ we are left with the following problem: finding a matrix $M \in \operatorname{GL}_2(\mZ)$ such that $M\cdot \Ball{G,w} = \Ball{G',w'}$.

Let us first note that linear maps preserve extremal points of convex sets. In particular, if the marked spectra are identical the number of extremal points of $\Ball{G,w}$ and $\Ball{G',w'}$ must be the same. We can hence assume that $\Ball{G,w}$ and $\Ball{G',w'}$ have $t = O(n)$ vertices by Theorem~\ref{thm:weightedgraphspolyhedralnorm}.

Now, fix two consecutive vertices $v_0, v_1$ of $\Ball{G,w}$. For each pair of consecutive vertices (in either clockwise or counterclockwise order) $v'_0, v'_1$ of $\Ball{G',w'}$ compute the matrix $M$ such that $M v_0 = v'_0$ and $M v_1 = v'_1$. If this matrix has determinant $\pm 1$ and $M\cdot \Ball{G,w} = \Ball{G',w'}$ (which can be checked by mapping the $t$ vertices) then we found a linear equivalence. If otherwise none of the pairs $(v'_0, v'_1)$ worked the balls are not linearly equivalent and the marked spectra are different. Since the algorithm required the analysis of $2t$ pairs, the overall additional cost is $O(t^2) = O(n^2)$.
\end{proof}

Now we consider the more delicate question of comparing unmarked length spectra. That is, we want to decide whether the list of values $\{\Norm{G,w}(\alpha) : \alpha \in H_1(T; \mZ)\}$ and $\{\Norm{G',w'}(\alpha') : \alpha' \in H_1(T'; \mZ)\}$ coincide where each value comes with multiplicity according to the number of homology classes that realize this length. This equality of unmarked length spectra is always decidable and we show that it belongs to the co-RP complexity class, i.e. our algorithm can detect if the unmarked spectra of $(G,w)$ and $(G',w')$ are different in random polynomial time.

For this specific test, we need to have access to all integral linear relations between the weights at once. That is, our algorithm needs to have access to the $\mQ$-vector space $\{(x_e)_{e \in E(G)} \in \mQ^{E(G)}: \sum_e x_e w_e = 0\}$. We assume that the weights are given in the following form : we are given $r$ real numbers $o = (o_1, o_2, \ldots, o_r)$ that do not satisfy any integral linear relations, and for each edge $e \in E(G)$ its weight is given as a linear combination of these real numbers with integral coefficients $w_e = w_{e,1} o_1 + \ldots + w_{e,r} o_r$. We call \emph{complexity} of these weights the sum $\|w\|_o := \sum_{e \in E(G)} \sum_{i=1}^r |w_{e,i}|$. Note that this complexity depends on the choice of the numbers $o_1, \ldots, o_r$ and not only on the values $w_e$ as real numbers.

\begin{theorem} \label{thm:decidingEqualityUnmarkedSpectrum}
Let $(G,w)$ and $(G',w')$ be two weighted graphs cellularly embedded on tori $T$ and $T'$, each with complexity bounded by $n$, where each weight is specified as $w_e = w_{e,1} o_1 + \ldots + w_{e,r} o_r$ with $w_{e,i} \in \mZ$ and $o_1, \ldots, o_r$ are $r$ given real numbers.
There is an algorithm to decide whether $(G,w)$ and $(G',w')$ have different (unmarked) spectra that runs in random polynomial time in the total input size $n + \log\left( \|w\|_o + \|w'\|_o \right)$. Moreover,
for fixed $r$, there is a deterministic algorithm that runs in time $O(n^2 \cdot (\|w\|_o + \|w'\|_o)^r)$.
\end{theorem}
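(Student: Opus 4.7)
The plan is to reduce the equality of unmarked length spectra to a polynomial identity testing (PIT) problem for a rational function that encodes the spectrum, exploiting the unimodular sector decomposition produced by Algorithm~\ref{alg:algo1}. First, run Algorithm~\ref{alg:algo1} on both graphs to obtain, in time $O(n^2\log\log n)$, for each graph a cyclically ordered list of extremal tight simple cycles $c_1, \ldots, c_t$ with $t = O(n)$, together with the partition of $\HoneT \setminus \{0\}$ into unimodular half-open cones $C_i = \mZ_{\geq 0}[c_i] + \mZ_{>0}[c_{i+1}]$ from Lemma~\ref{lem:unimodularPartition}. The norm is linear on each $C_i$: $\Norm{G,w}(m[c_i] + n[c_{i+1}]) = m\, w(c_i) + n\, w(c_{i+1})$. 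Writing each cycle weight in the basis $o$, let $\mathbf{w}_c := (w_{c,1}, \ldots, w_{c,r}) \in \mZ^r$; since $o_1, \ldots, o_r$ are $\mQ$-linearly independent, each norm value $\Norm{G,w}(\alpha)$ is determined by and determines the integer vector $\mathbf{N}(\alpha) := m\mathbf{w}_{c_i} + n\mathbf{w}_{c_{i+1}}$. Thus the unmarked spectra of $(G,w)$ and $(G',w')$ agree if and only if the two multisets $\{\mathbf{N}(\alpha)\}$ and $\{\mathbf{N}'(\alpha')\}$ in $\mZ^r$ coincide.

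Encode each such multiset by the formal series $\Theta_{G,w}(\mathbf{z}) := \sum_{\alpha \neq 0} \mathbf{z}^{\mathbf{N}(\alpha)}$, where $\mathbf{z}^{\mathbf{k}} := \prod_j z_j^{k_j}$ is a Laurent monomial. Summing the geometric series over each sector yields the closed form
\[
  \Theta_{G,w}(\mathbf{z}) \;=\; \sum_{i=1}^{t} \frac{\mathbf{z}^{\mathbf{w}_{c_{i+1}}}}{(1 - \mathbf{z}^{\mathbf{w}_{c_i}})(1 - \mathbf{z}^{\mathbf{w}_{c_{i+1}}})} \;\in\; \mQ(z_1, \ldots, z_r).
\]
The map from a multiset of vectors in $\mZ^r$ to a formal Laurent series is manifestly injective (the multiplicity of $\mathbf{k}$ in the multiset is the coefficient of $\mathbf{z}^{\mathbf{k}}$), and the embedding of such formal series (read off in a fixed cone of convergence) into $\mQ(z_1, \ldots, z_r)$ is injective on the image. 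Hence the two graphs have the same unmarked spectrum if and only if $\Theta_{G,w} = \Theta_{G',w'}$ as elements of $\mQ(z_1, \ldots, z_r)$.

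It remains to decide this equality of rational functions. Clearing the common denominator $\prod_c (1 - \mathbf{z}^{\mathbf{w}_c})$ (product over the $O(n)$ extremal cycles of both graphs) and multiplying by a suitable monomial to absorb negative exponents, reduces the task to testing whether a polynomial in $\mZ[z_1, \ldots, z_r]$ vanishes identically, with per-variable degree polynomial in $n$ and $\|w\|_o + \|w'\|_o$ and with coefficients of bit-size polynomial in the input. For the randomized algorithm, apply the Schwartz--Zippel lemma: evaluate $\Theta_{G,w} - \Theta_{G',w'}$ at a random point over a sufficiently large prime field, avoiding the finitely many denominator zeros. This runs in random polynomial time in $n + \log(\|w\|_o + \|w'\|_o)$ and yields the co-RP bound. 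For the deterministic algorithm with fixed $r$, exhaustive evaluation over a product grid of $O((\|w\|_o + \|w'\|_o)^r)$ points in a region where no denominator vanishes, each evaluation costing polynomial time in $n$, gives the $O(n^2(\|w\|_o + \|w'\|_o)^r)$ bound. The main obstacle is the careful justification of the second step: one must verify that equality of multisets in $\mZ^r$ really corresponds to equality of the formal series and that the passage to rational functions is lossless, which hinges on uniqueness of Laurent expansions in a fixed convergence cone combined with the $\mQ$-linear independence of the $o_j$.
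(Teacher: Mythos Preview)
Your overall strategy---unimodular sector decomposition, closed-form rational generating function, reduction to PIT---is exactly the paper's. Your encoding is in fact a bit cleaner: you go directly to the multivariate series $\Theta_{G,w}(\mathbf z)=\sum_{\alpha\neq 0}\mathbf z^{\mathbf N(\alpha)}$, whereas the paper first passes through the Dirichlet series $f_{G,w}(s)=\sum_\lambda e^{-s\lambda}$ and then substitutes $z_j=e^{-so_j}$, invoking algebraic independence of the exponentials to justify that $f_{G,w}=f_{G',w'}$ iff $F_{G,w}=F_{G',w'}$. Your ``uniqueness of Laurent expansion in a fixed cone'' argument plays the same role and is adequate; note that the exponent vectors $\mathbf N(\alpha)$ all lie in the open half-space $\{\mathbf k:\mathbf k\cdot o>0\}$, which supplies the needed convergence cone.

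The genuine gap is in your deterministic bound. After clearing denominators the resulting polynomial has total degree $\Theta\bigl((t+t')\cdot(\|w\|_o+\|w'\|_o)\bigr)=\Theta\bigl(n(\|w\|_o+\|w'\|_o)\bigr)$, not $\Theta(\|w\|_o+\|w'\|_o)$: each of the $O(n)$ factors $(1-\mathbf z^{\mathbf w_c})$ contributes up to $\|w\|_o$ to the degree. A grid that determines a polynomial of that degree has $\Theta\bigl((n(\|w\|_o+\|w'\|_o))^r\bigr)$ points, and each evaluation of the factorized form costs at least $\Omega(n)$ arithmetic operations, so grid evaluation yields $O\bigl(n^{r+ \text{const}}(\|w\|_o+\|w'\|_o)^r\bigr)$ for fixed $r$, not the stated $O(n^2(\|w\|_o+\|w'\|_o)^r)$. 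The paper obtains its deterministic bound differently: it \emph{expands} each term $z^a\prod_j(z^{b_j}-z^{c_j})$ into a dense polynomial and compares coefficients, bounding the cost of each binomial multiplication by the number of monomials of bounded total degree. If you want to match the paper's stated complexity you should switch to this expansion strategy (or at least re-derive the bound using their dense-coefficient argument) rather than grid evaluation.
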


Let us first explain how to deduce Theorem~\ref{thm:decidingEqualitySpectrumUnweighted} from Theorem~\ref{thm:decidingEqualityMarkedSpectrum} and Theorem~\ref{thm:decidingEqualityUnmarkedSpectrum}. Let us emphasize that Theorem~\ref{thm:decidingEqualityUnmarkedSpectrum} allows us to deduce deterministic polynomial time only in the unweighted case. Even with rational weights we are not aware of a deterministic polynomial time algorithm.
\begin{proof}[Proof of Theorem~\ref{thm:decidingEqualitySpectrumUnweighted}]
Let $G$ and $G'$ be unweighted graph with total size $n$.

The case of marked spectrum is simply a particular case of Theorem~\ref{thm:decidingEqualityMarkedSpectrum} where we use the fact that the unit ball can be computed in quadratic time in the unweighted case.

For the equality of unmarked spectrum, we have $r=1$ and $o_1=1$. The second part of Theorem~\ref{thm:decidingEqualityUnmarkedSpectrum} hence gives deterministic polynomial time in $O(n^2 \cdot (\sum_{e \in E(G) \cup E(G')} 1)) = O(n^3)$.
\end{proof}

The proof of Theorem~\ref{thm:decidingEqualityUnmarkedSpectrum} uses generating functions of the length spectrum. In our situation, these functions turn out to be (multivariate) rational functions. The appearance of such generating function when working with integral points in polyhedra is standard, see e.g.~\cite{barvinok}. Though the question of equality of these generating functions did not seem to have been considered from an algorithmic point of view before.

\begin{proof}
The idea is to encode the length spectrum by means of a generating series which turns out to be a rational function. So the question of equality of unmarked length spectra can be reduced to equality of generating series. However the question is less trivial as it seems as our rational functions admit a short representation (of size $O(n)$) but \emph{not} as a ratio of two polynomials.

Let us now explain how to encode the unmarked length spectrum of a weighted graph $(G,w)$ in a generating series. We let $\Lambda_{G,w}$ denote the length spectrum with multiplicity (as a submultiset of $\mathbb{R}$) and define
\[
f_{G,w}(s) := \sum_{\lambda \in \Lambda_{G,w}} e^{-s \lambda}.
\]
Such series is called a \emph{Dirichlet series} and is frequently encountered in number theory (e.g. the Riemann $\zeta$ function $\zeta(s) = \sum \exp(-\log(n) s)$) see e.g.~\cite{hardyriesz}. Because $\Lambda_{G,w}$ has at most quadratic growth (i.e. $\# \{\lambda \in \Lambda_{G,w} : \lambda < R\} \le c R^2$ for some $c > 0$), $f_{G,w}(s)$ defines an absolutely convergent series on the half plane $D = \{s \in \mC : \Re(s)  > 0\}$. By~\cite[Th. 13]{hardyriesz}, we have that $f_{G,w}(s) = f_{G',w'}(s)$ for all $s \in D$ if and only if $(G,w)$ and $(G',w')$ have the same length spectrum.

Recall that Algorithm~\ref{alg:algo1} produces in time $O(n^2 \log\log n)$ a list of tight simple cycles $\{c_i\}_{i=1,\ldots,t}$ such that $\langle c_i, c_{i+1} \rangle = 1$ and $\{[c_i] / w(c_i) \mid i=1,\ldots,t\}$ contains the extremal points of $\Ball{G,w}$. We refer the reader to Lemma~\ref{lem:unimodularPartition}
for the fist property.
We consider the half-open cones $C_i = \mZ_{\ge 0} [c_i] + \mZ_{> 0} [c_{i+1}] \subset H_1(T; \mZ)$. Then, using as in the proof of Theorem~\ref{thm:lengthSpectrumComputation} that $\Norm{G,w}$ is linear in $C_i$ and the fact that the cones are unimodular, the length spectrum of classes in $C_i$ has a rather simple generating series
\[
\sum_{\alpha \in C_i} e^{-s \Norm{G}(\alpha)}
=
\sum_{(x,y) \in \mZ_{\ge 0} \times \mZ_{> 0}}
e^{-s(\Norm{G}(x[c_i] + y[c_{i+1}])}
=
\frac{e^{-s\lambda_{i+1}}}{(1-e^{-s\lambda_i})(1-e^{-s\lambda_{i+1}})}.
\]
where $\lambda_i = w(c_i)$. Since the half-open cones form a partition of $H_1(T; \mZ) \setminus \{0\}$ we obtain that
\begin{equation} \label{eq:EhrhartShortForm}
f_{G,w}(s) =
\sum_{i=1}^t \frac{e^{-s\lambda_{i+1}}}{(1-e^{-s\lambda_i})(1-e^{-s\lambda_{i+1}})}
\end{equation}
Note that the number $t$ of summands in~\eqref{eq:EhrhartShortForm} is $O(n)$. We are now left with the question of comparing $f_{G,w}(s)$ and $f_{G',w'}(s)$. At this stage, it becomes very unconvenient to work with expressions involving exponentials. Instead we work with rational functions in $\mQ(z_1, \ldots, z_r)$. Recall that the weights of both $G$ and $G'$ are integral linear combinations of $o_1$, \ldots, $o_r$. It is hence also the case for $\lambda_i$ that are weights of simple cycles, i.e. $\lambda_i = \lambda_{i,1} o_1 + \ldots + \lambda_{i,r} o_r$ for some $\lambda_{i,j} \in \mZ$. For convenience we denote $z^{\lambda_i} = z_1^{\lambda_{i,1}} z_2^{\lambda_{i,2}} \cdots z_r^{\lambda_{i,r}}$. We define
\begin{equation} \label{eq:EhrhartShortFormz}
F_{G,w}(z_1, \ldots, z_r)
:=
\sum_{i=1}^t \frac{z^{\lambda_{i+1}}}{(1-z^{\lambda_i})(1-z^{\lambda_{i+1}})}
\end{equation}
so that $f_{G,w}(s) = F_{G,w}(e^{-so_1}, e^{-so_2}, \ldots, e^{-so_r})$. Since the functions $s \mapsto e^{-so_i}$ are algebraically independent (because the $o_i$ are linearly independent) we have that $f_{G,w}(s) = f_{G',w'}(s)$ for $s \in D$ if and only if $F_{G,w}(z_1, \ldots, z_r) = F_{G',w'}(z_1, \ldots, z_r)$ as rational functions.
Recall that $\lambda_{i,j} \in \mZ$ so that negative powers might appear in the expression~\eqref{eq:EhrhartShortFormz}. We can decompose $\lambda_i = \lambda_i^+ - \lambda_i^-$ so that $z^{\lambda_i} = z^{\lambda^+_i} z^{-\lambda^-_i}$ and both $z^{\lambda^+_i}$ and $z^{\lambda^-_i}$ have non-negative exponents.
Let us now reduce to a common denominator and get rid of the negative terms but keeping the factorized form~\eqref{eq:EhrhartShortFormz}
\begin{equation} \label{eq:EhrhartSameDenom}
F_{G,w}(z_1, \ldots, z_r) =
\frac{\displaystyle \sum_{i=1}^t z^{\lambda_{i+1}^+} z^{\lambda_i^-} \prod_{j \not\in \{i,i+1\}} (z^{\lambda^-_j} - z^{\lambda^+_j})}{\displaystyle \prod_{i=1}^t (z^{\lambda^-_i} - z^{\lambda^+_i})}
\end{equation}
Note that the total degree of the numerator and denominator are bounded by $\sum_{i=1}^t \sum_{j=1}^r |\lambda_{i,j}|$. Since each $c_i$ is a simple cycle in $G$, each of them involves at most once each weight so that $\sum_{i=1}^t \sum_{j=1}^r |\lambda_{i,j}| \le t \cdot \|w\|_o$.

We now consider the problem of comparison of $F_{G,w}$ and $F_{G',w'}$. The most important fact is that the number of monomials in $\mZ[z_1, \ldots, z_r]$ of total degree at most $d$ is equal to the binomial $\binom{d+r}{r}$. We use dense data-structures for our polynomials so that access and updates are $O(1)$.

When taking the cross-multiplication of the rational functions $F_{G,w}$ and $F_{G',w'}$ we obtain a sum of $t+t'$ terms of the form $z^a \prod_i (z^{b_i} - z^{c_i})$ where the number of factors in the product is at most $t+t'$ and each factor has total degree $\max(|b_i|, |c_i|) \le \max(\|w\|_o, \|w'\|_o)$. To expand this product, we compute $P_k = \prod_{i=1,\ldots,k} (z^{\lambda^-_i} - z^{\lambda^+_i})$ from $P_{k-1}$. The cost of one multiplication is $O(\binom{\max(\|w\|_o, \|w'\|_o)+r}{r})$. The total cost is $O\left( (t+t') \binom{\max(\|w\|_o, \|w'\|_o)+r}{r} \right)$. Since we have $t+t'=O(n)$ such terms, the total cost of the comparison via expansion is $O\left( n^2 \cdot \binom{\max(\|w\|_o, \|w'\|_o)+r}{r} \right)$. For fixed $r$, it is $O(n^2 \cdot \max(\|w\|_o, \|w'\|_o)^r)$ which proves the last claim of the theorem.

Now, there is a well-known probability approach for comparing the factorized form by evaluating at random points : the DeMillo-Lipton-Schwartz–Zippel lemma~\cite{demillolipton}, \cite{schwartz} and~\cite{zippel}. Let us recall that we want to test whether the difference $F - F'$ between the two generating series (in their factorized forms~\eqref{eq:EhrhartShortFormz}) is zero. As we already mentioned this corresponds to the equality to zero of some polynomial $P$ with $O(n)$ terms and total degree $d = O((t + t') \cdot (\|w\|_o + \|w\|'))$. We emphasize that the latter is exponential in the input size. The general idea is to evaluate the polynomial $P$ at random integer points. If we find an integer vector $(v_1, \ldots, v_r)$ such that $P(v_1, \ldots, v_r) \not= 0$ we have a witness that $P$ is non-zero.  The cost of the evaluation of $P(v_1, \ldots, v_r)$ is $O(n \times r \times \log(d))$ using fast exponentiation (and ignoring arithmetic cost of integer addition and multiplication). The DeMillo-Lipton-Schwartz–Zippel states that if $P$ is non-zero then the probability that a random integer vector with entries in $S^r$ where $S \subset \mathbb{Z}$ is such that $P(v) = 0$ is at most $d / |S|$. In other words, we would find a witness with probability at least $1 - \frac{d}{|S|}$. This probability is $> 1/2$ if we choose $S = \{-d+1, \ldots, d\}$ so that $|S| = 2d$. Taking an element at random in the set $S$ requires $O(\log d)$ bits which is linear in the input size. We need to take $r$ of them and this cost is quadratic in the input size. Overall the random generation and evaluation has a polynomial cost.
\end{proof}

 We provide in Figure~\ref{fig:isospectralGraphs} two toroidal graphs. Their unit balls can be seen in Figure~\ref{fig:markedVSUnmarked}:  they are not linearly equivalent but have the same unmarked length spectrum. This construction can be seen as a discrete analogue of the non-isometric but isospectral hyperbolic surfaces constructed in~\cite{vigneras}.

\begin{figure}[ht!]
    \centering
    \includegraphics{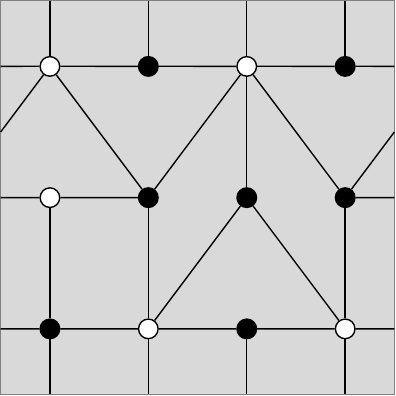} \hspace{1cm}
    \includegraphics{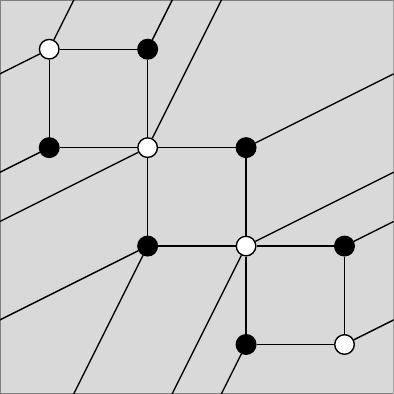}
    \caption{Two graphs $(G,w)$ (left picture) and $(G',w')$ (right picture) where all weights are $1/2$. Since the graphs are bipartite their norms are integral norms.}
    \label{fig:isospectralGraphs}
\end{figure}

\begin{figure}[ht!]
    \centering
    \includegraphics[width=.5\linewidth]{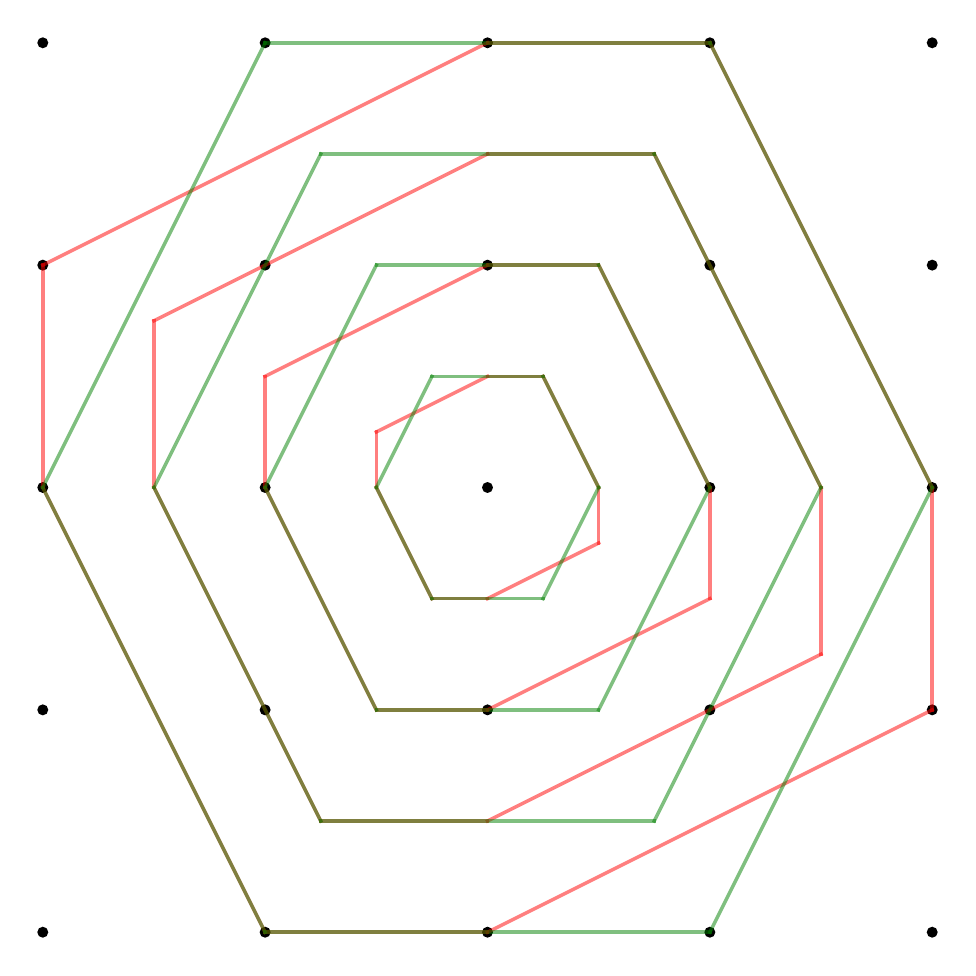}
    \caption{The four first dilates of the unit balls of the graphs $(G,w)$ and $(G',w')$ from Figure~\ref{fig:isospectralGraphs} respectively in green and orange. The black dots indicate the integer lattice. These two balls are not linearly equivalent: $\Ball{G,w}$ is a hexagon while $\Ball{G',w'}$ is an octagon. However, they have the same unmarked length spectrum, or equivalently the same Ehrhart series $\frac{z^6 + 3z^4 + 4z^3 + 3z^2 + 1}{(1 - z)^2\ (1 + z)^2\ (1 + z^2)} = 1 + 4z^2 + 4z^3 + 8z^4 + O(z^5)$.}
    \label{fig:markedVSUnmarked}
\end{figure}

\section*{Acknowledgements}
We would like to thank Marcos Cossarini who suggested to look at the norm point of view and Bruno Grenet who enlightened us about  the Polynomial Identity Testing problem. We would also like to thank the referees from SoCG 2023 for their helpful comments, in particular for suggesting the complexity improvement in the case of unweighted graphs.

\end{document}